\begin{document}

\newtheorem{definition}{Definition}
\renewcommand{\algorithmicrequire}{\textbf{Requires}}
\newtheorem{theorem}{Theorem}
\newtheorem{lemma}{Lemma}
\newtheorem{assumption}{Assumption}
\newtheorem{axiom}{Axiom}
\newtheorem{example}{Example}
\newtheorem{corollary}{Corollary}
\newtheorem{property}{Property}
\newtheorem{discussion}{Discussion}
\newtheorem{remark}{Remark}

\newcommand{\partitle}[1]{\medskip \noindent \textbf{#1.}}
\newcommand{\subpartitle}[1]{\medskip \emph{#1.}}
\newcommand{\topcaption}{%
\setlength{\abovecaptionskip}{0pt}%
\setlength{\belowcaptionskip}{0pt}%
\caption}

\title{Dealer: End-to-End Data Marketplace with Model-based Pricing }

\iffalse
\numberofauthors{3}
\author{
% 1st. author
\alignauthor Jinfei Liu\\
       \affaddr{Emory University and Georgia Institute of Technology}\\
       \email{jinfei.liu@emory.edu}
% 2nd. author
\alignauthor Jian Lou\\
       \affaddr{Emory University}\\
       \email{jian.lou@emory.edu}
% 3rd. author
\alignauthor Junxu Liu\\
       \affaddr{Emory University}\\
       \email{junxu.liu@emory.edu}
\and
% 4th. author
\alignauthor Li Xiong\\
       \affaddr{Emory University}
       \email{lxiong@emory.edu}
% 5th. author
\alignauthor Jimeng Sun\\
       \affaddr{Georgia Institute of Technology}\\
       \email{jsun@cc.gatech.edu}
% 6th. author
\alignauthor Jian Pei\\
       \affaddr{Simon Fraser University}\\
       \email{jpei@cs.sfu.ca}
}
\fi

\maketitle

%------------------------------------------------------------------------------------------------------------------------------------------

%-----------------------------------------------------------------------------------------------------------------------------------------------
\begin{abstract}
% Data is the oil of the 21st century. Data-driven machine learning has become the most active research field in computer science during the past few years. As a result, the transaction of manufactured products based on data is crucially needed for acquiring more data to benefit the decision-making in this data-driven era.

Data-driven machine learning (ML) has witnessed great successes across a variety of application domains. Since ML model training are crucially relied on a large amount of data, there is a growing demand for high quality data to be collected for ML model training. However, from data owners' perspective, it is risky for them to contribute their data. To incentivize data contribution, it would be ideal that their data would be used under their preset restrictions and they get paid for their data contribution.

In this paper, we take a formal data market perspective and propose the first en\textbf{\underline{D}}-to-\textbf{\underline{e}}nd d\textbf{\underline{a}}ta marketp\textbf{\underline{l}}ace with mod\textbf{\underline{e}}l-based p\textbf{\underline{r}}icing (\emph{Dealer}) towards answering the question: \emph{How can the broker assign value to data owners based on their contribution to the models to incentivize more data contribution, and determine pricing for a series of models for various model buyers to maximize the revenue with arbitrage-free guarantee}. For the former, we introduce a Shapley value-based mechanism to quantify each data owner's value towards all the models trained out of the contributed data. For the latter, we design a pricing mechanism based on models' privacy parameters to maximize the revenue. More importantly, we study how the data owners' data usage restrictions affect market design, which is a striking difference of our approach with the existing methods. Furthermore, we show a concrete realization DP-\emph{Dealer} which provably satisfies the desired formal properties. Extensive experiments show that DP-\emph{Dealer} is efficient and effective.

\end{abstract}

%-------------------------------------------------------------------------------------------------------------------------------------------------
%--------------------------------------------------------------------------------------------------------------------------------------------------
\section{Introduction}\label{sec:Introduction}

Machine learning has witnessed great success across various types of tasks and is being applied in an ever-growing number of industries and businesses. High usability machine learning models depend on a large amount of high-quality training data, which makes it obvious that data are valuable. Recent studies and practices approach the commoditization of data in various ways. A data marketplace sells data either in the direct or indirect (derived) forms. These data marketplaces can be generally categorized based on their pricing mechanisms: 1) data-based pricing, 2) query-based pricing, and 3) model-based pricing.

Data marketplaces with data-based pricing are selling datasets and allow buyers to access the data entries directly, e.g., Dawex \cite{dawex}, Twitter \cite{Twitter}, Bloomberg \cite{Bloomberg}, and Iota \cite{IOTA}. Under these marketplaces, data owners have limited control over their data usage, e.g., privacy abuse, which makes it challenging for the market to incentivize more data owners to contribute. Also, it can be overpriced for buyers to purchase the whole dataset when they are only interested in particular information extracted from the dataset. Therefore, the marketplace operates in an inefficient way that cannot maximize the revenue.

Data marketplaces with query-based pricing \cite{koutris2012query,koutris2013toward}, e.g., Google Bigquery \cite{googlequery}, partially alleviate these shortcomings by charging buyers and compensating data owners on a per-query basis. The marketplace makes decisions about the restrictions on data usage (e.g., return queries with privacy protection \cite{DBLP:journals/tods/LiLMS14}), compensation allocation, and query pricing. However, most queries considered by these marketplaces are too simplistic to support sophisticated data analytics and decision making.

%Therefore, an end-to-end marketplace with model-based pricing is significantly desired. However, existing work only focuses on either end of the marketplace or oversimplify the interactions among the three entities. Existing work \cite{DBLP:conf/sigmod/ChenK019,jia2019efficient} only consider a part of the marketplace.

Data marketplaces with model-based pricing \cite{agarwal2019marketplace,DBLP:conf/sigmod/ChenK019,jia2019efficient} have been recently proposed. In \cite{DBLP:conf/sigmod/ChenK019}, the authors focus on pricing a series of model instances depending on their model quality to maximize revenue, while \cite{jia2019efficient} considers how to allocate compensation in a fair way among data owners when their data are utilized for a particular machine learning model of $k$-nearest neighbors ($k$-NN). Thus, they are limited to either end of the marketplace but not both. Most recently, \cite{agarwal2019marketplace} approaches it in a relatively more complete perspective by studying two ends of the marketplace, where strategies for the broker to set the model usage charge from the buyers, and for the broker to distribute compensation to the data owners are proposed. However, \cite{agarwal2019marketplace} oversimplifies the role of the two end entities played in the overall marketplace: the data owners and the model buyers. For example, the data owners still have no means to control the way that their data is used, while the model buyers do not have a choice over the quality of the model that best suits their needs and budgets.

\partitle{Gaps and Challenges}
Though efforts have been made to ensure the broker follows important market design principles in \cite{agarwal2019marketplace,DBLP:conf/sigmod/ChenK019,jia2019efficient}, how the marketplace should respond to the needs of both the data owners and the model buyers is still understudied. It is therefore tempting to ask: how can we build a marketplace dedicated to machine learning models, which can simultaneously satisfy the needs of all three entities, i.e., data owners, broker, and model buyers. We summarize the gaps and challenges from the perspective of each entity as follows.

\begin{itemize}[leftmargin=*]
	\item \emph{Data owners.} Under the existing data marketplace solutions \cite{agarwal2019marketplace,jia2019efficient}, the data owners receive compensation for their data usage allocated by the broker. Except for this, they have no means to set restrictions about their data usage after supplying their data to the broker. The challenge to be addressed is: \emph{How to model the data owners' restrictions and their associated effect on model manufacturing, model pricing, and compensation allocation}?

	\item \emph{Model buyers.} As with the same practice of selling digital commodities in several versions, current work \cite{agarwal2019marketplace,DBLP:conf/sigmod/ChenK019} provides a series of models for sale with different levels of quality. However, their oversimplified noise injection-based version control quantifies the model quality via the magnitude of the noise, which does not directly align with the model buyers' valuation of the model in terms of its utility. \emph{How do we incorporate the model buyers' perspective in the model valuation and use it to optimize both model pricing and model manufacturing}?

	\item \emph{Broker.} The data owners' restrictions and the model buyers' estimation of the model value should be taken into consideration by the broker when making market decisions, e.g., compensation allocation and model pricing. \emph{How can the broker align the two ends' requirements with the already complicated market design principals in an efficient and effective way}? That is, how can the broker remain competitive (e.g., train higher utility model with the data owner restrictions), while maximizing its revenue to maintain a sustainable data market?
\end{itemize}

\vspace{-1em}
\begin{figure}[htb]
 \centering
 \includegraphics[width=0.5\textwidth]{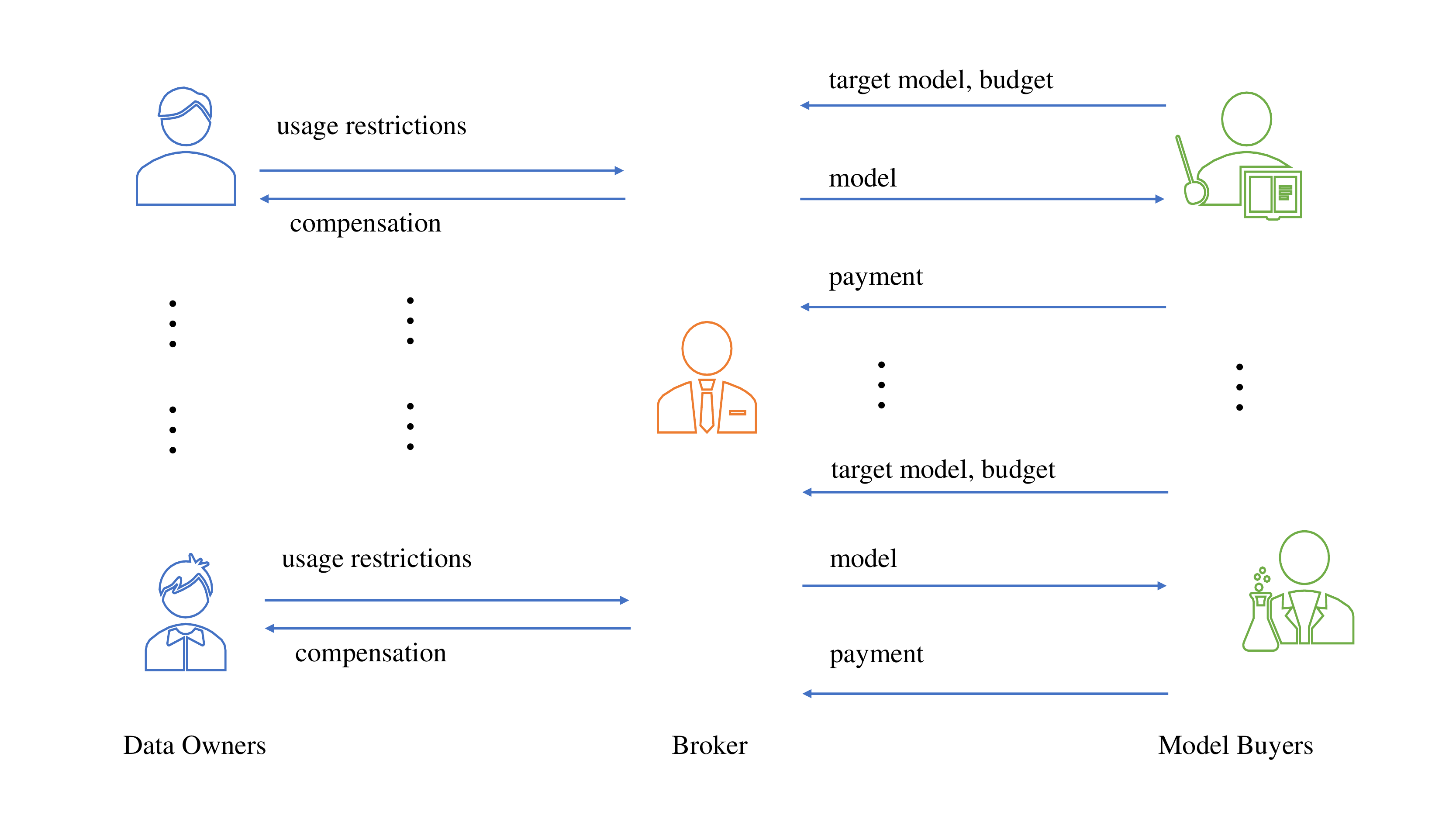}
 \vspace{-1em}
 \caption{An end-to-end data marketplace with model-based pricing framework.}
 \label{fig:framework}
\end{figure}
\vspace{-1em}

\partitle{Contributions}
In this paper, we bridge the above gaps and address the identified challenges by proposing \emph{Dealer}: an en\textbf{\underline{D}}-to-\textbf{\underline{e}}nd d\textbf{\underline{a}}ta marketp\textbf{\underline{l}}ace with mod\textbf{\underline{e}}l-based p\textbf{\underline{r}}icing. \emph{Dealer} provides both abstract and general mathematical formalization for the end-to-end marketplace dynamics (Section \ref{sec:FrameWorkDesiderata}) and a specific and practical differentially private marketplace with algorithm designs that ensures differential privacy of the data owners, one type of instantiation of the model restriction.

\emph{End-to-End Mathematical Formalization of the Marketplace.}
We first propose an abstract mathematical formalization for \emph{Dealer} with an emphasis on the understudied parts of the marketplace as discussed above. An illustration is provided in Figure \ref{fig:framework}, which includes three main entities and their interactions. In this general \emph{Dealer} (Gen-\emph{Dealer}), we define the abilities and constraints for the three entities (i.e., data owners, broker, and model buyers). From the data owners' perspective, Gen-\emph{Dealer} aims to 1) allow the data owners to restrict their data usage by the broker; and 2) have the option to receive extra compensation if they are willing to partially relax their restrictions. From the model buyers' perspective, Gen-\emph{Dealer} provides a utility measure suitable for the model buyers' model valuation, based on which the potential model buyers provide their willingness to purchase and payment estimation. From the broker's perspective, Gen-\emph{Dealer} depicts the full marketplace dynamics through abstract behavior functions. In addition to the new features novely brought into the market design consideration by Gen-\emph{Dealer}, commonly recognized market design principals are also well-accommodated into our approach. Prominent ones include: 1) compensation allocation in a fair and rational way, e.g., based on the widely-accepted Shapley value notion; 2) arbitrage-free in model pricing, which prevents the model buyers from taking advantage of the marketplace by combining lower-tier models sold for cheaper prices into a higher-tier model to escape the designated price for that tier.

\emph{Marketplace Instantiation with Differential Privacy.} We also provide a concrete end-to-end differentially private data marketplace instance. We focus on empirical risk minimization, a widely used and well-studied family of supervised machine learning models. For the data owners' restrictions, we consider the privacy restriction, which is arguably the most concerned issue for personal data contributors. Differential privacy (DP) \cite{dwork2006calibrating,dwork2014algorithmic}, the de facto standard in privacy-preserving data analysis nowadays, is introduced to exemplify the data owners' restriction requirements and we will refer the differentially private data marketplace with model-based pricing instance by DP-\emph{Dealer}. In DP-\emph{Dealer}, the marketplace sells a series of differentially private models to respect data owners' privacy restrictions. The higher tier models correspond to models trained on data subsets contributed by the data owners with lower DP restrictions. On the contrary, the lower tier models are trained with higher DP restriction data subsets. We will consider two types of data owner DP restrictions: 1) hard restriction which has a rigid cutting point beyond that the data owners' data cannot be used for training; and 2) negotiable restriction which has a negotiable range within that the marketplace still has the option to use the data but with extra compensation. At the model buyers' end, DP-\emph{Dealer} addresses the challenge of mismatched model tier ranking standard by converting the model ``manufacturing'' tier ranking standard to the model utility standard adopted by the model buyers in making purchasing decisions. DP-\emph{Dealer} accommodates market design principals like fair compensation allocation and arbitrage-free model pricing. To support the end-to-end marketplace dynamics with all design considerations, DP-\emph{Dealer} establishes constrained objective functions and develops efficient algorithms to optimize market decisions.

We briefly summarize our contributions as follows.
\begin{itemize}[leftmargin=*]
       \item A general end-to-end data marketplace with model-based pricing framework Gen-\emph{Dealer} which is the first systematic study that includes all market participants (i.e., data owners, broker, and model buyers). Gen-\emph{Dealer} formalizes the abilities and restrictions of the three entities and models the interactions among them.
       \vspace{-0.5em}

       \item A differentially private data marketplace with model-based pricing DP-\emph{Dealer} which instantiates the general framework. In addition to incorporating market design principals, DP-\emph{Dealer} proposes two data owner restriction schemes and provides the utility estimation for the model buyers to choose models best suiting their needs. DP-\emph{Dealer} formulates a series of optimization problems and develops efficient algorithms to make the market decisions.
       \vspace{-0.5em}

       \item A series of experiments are conducted to justify the design of DP-\emph{Dealer} and verify the efficiency and effectiveness of the proposed algorithms.
\end{itemize}

\partitle{Organization} The rest of the paper is organized as follows. Section \ref{sec:Related} presents the related work. Section \ref{sec:Definition} provides the background information, including the concept of Shapley value and its computation, the machine learning model exemplified in this paper, and differential privacy related definitions and properties. We provide the first end-to-end data marketplace with model-based pricing formalization and discuss the desiderata in Section \ref{sec:FrameWorkDesiderata}. A concrete instance of differentially private data marketplace with efficient algorithms is derived in Section \ref{sec:FrameWorkInstance}. We report the experimental results and findings in Section \ref{sec:Experiments}. Finally, Section \ref{sec:Conclusion} draws a conclusion and discusses future work.

%--------------------------------------------------------------------------------------------------------------------------------------------------
%---------------------------------------------------------------------------------------------------------------------------------------
\section{Related Work}\label{sec:Related}
In this section, we discuss related work on data pricing and compensation allocation.

\subsection{Data Pricing}

%\partitle{Envy-free}
Ghosh et al. \cite{DBLP:conf/sigecom/GhoshR11} initiated the study of markets for private data using differential privacy. They modeled the first framework in which data buyers would like to buy sensitive information to estimate a population statistic. They defined a property named envy-free for the first time. Envy-free ensures that no individual would prefer to switching their payment and privacy cost with each other. Guruswami et al. \cite{DBLP:conf/soda/GuruswamiHKKKM05} studied the optimization problem of revenue maximization with envy-free guarantee. They investigated two cases of inputs: unit demand consumers and single minded consumers, and showed the optimization problem is APX-hard for both cases, which can be efficiently solved by a logarithmic approximation algorithm. Li et al. \cite{DBLP:conf/icdt/LiLMS13, DBLP:journals/tods/LiLMS14, DBLP:journals/cacm/LiLMS17} presented the first theoretical framework for assigning value to noisy query answers as function of their accuracy, and for dividing the price among data owners who deserve compensation for their loss of privacy. They defined an enhanced edition of envy-free, which is named arbitrage-free. Arbitrage-free ensures the data buyer cannot purchase the desired information at a lower price by combing two low-price queries.

%\partitle{Arbitrage-free}
Lin et al. \cite{DBLP:journals/pvldb/LinK14} proposed necessary conditions for avoiding arbitrage and provide new arbitrage-free pricing functions. They also presented a couple of negative results related to the tension between flexible pricing and arbitrage-free, and illustrated how this tension often results in unreasonable prices. In addition to arbitrage-free, Koutris et al. \cite{DBLP:journals/jacm/KoutrisUBHS15} proposed another desirable property for the pricing function, discount-free, which requires that the prices offer no additional discounts than the ones specified by the broker. In fact, discount-free is the discrete version of arbitrage-free. Furthermore, they presented a polynomial time algorithm for pricing generalized chain queries.

Recently, Chawla et al. \cite{DBLP:journals/pvldb/ChawlaDKT19} investigated three types of succinct pricing functions and studied the corresponding revenue maximization problems. Due to the increasing pervasiveness of machine learning based analytic, there is an emerging interest in studying the cost of acquiring data for machine learning. Chen et al. \cite{DBLP:conf/sigmod/ChenK019} proposed the first and the only existing model-based pricing framework in which instead of pricing the data, directly prices machine learning model instances. They formulated an optimization problem to find the arbitrage-free price that maximizes the revenue of the broker, and proved such optimization problem is coNP-hard. However, their work only focuses on the interactions between the broker and the model buyers. Furthermore, they assume there is only one survey price for each model, which is too simplified.

\subsection{Compensation Allocation}
An acquiescent method to evaluate data importance/value to a model is leave-one-out (LOO) which compares the difference between the predictor's performance when trained on the entire dataset and the predictor's performance when trained on the entire dataset minus one point \cite{DBLP:journals/pr/CawleyT03LOO}. However, LOO does not satisfy all the ideal properties that we expect for the data valuation. For example, given a point $p$ in a dataset, if there is an exact copy $p'$ in the dataset, removing $p$ from this datasets does not change the predictor at all since $p'$ is still there. Therefore, LOO will assign zero value to $p$ regardless of how important $p$ is.

Shapley value is a concept in cooperative game theory, which was named in honor of Lloyd Shapley \cite{shapley1953value}. Shapley value is the only value division scheme used for compensation allocation that meets three desirable criteria, group rationality, fairness, and additivity \cite{jia2019efficient}. Combining with its flexibility to support different utility functions, Shapley value has been extensively employed in the data pricing field \cite{agarwal2019marketplace, DBLP:conf/icml/AnconaOG19, DBLP:conf/icml/GhorbaniZ19, jia2019efficient}. One major challenge of applying Shapley value is its prohibitively high computational complexity. Evaluating the exact Shapley value involves the computation of the marginal utility of each user to every coalition, which is $\sharp P$-complete \cite{DBLP:journals/ai/FatimaWJ08}. Such exponential computation is clearly impractical for evaluating a large number of training points. Even worse, for machine learning tasks, evaluating the utility function is extremely expensive as machine learning tasks need to train models. The worst case is that we need to train $O(2^n)$ models for computing the exact Shapley value for each data owner.

A number of approximation methods have been developed to overcome the computational hardness of finding the exact Shapley value. The most representative method is Monte Carlo method \cite{DBLP:journals/cor/CastroGT09,DBLP:journals/ai/FatimaWJ08}, which is based on the random sampling of permutations. However, the time cost is still prohibitively high due to the high training cost of deep learning models. Therefore, Ghorbani et al. \cite{DBLP:conf/icml/GhorbaniZ19} and Ancona et al. \cite{DBLP:conf/icml/AnconaOG19} illustrated how to compute the approximate Shapley value by performing stochastic gradient descent on one data point at a time.

%--------------------------------------------------------------------------------------------------------------------------------------------------
\section{Background and Preliminaries}\label{sec:Definition}
In this section, we introduce the background and preliminaries of \emph{Dealer}. We summarize the frequently used notations in Table \ref{tab:notations}. In particular, we denote the data owners by $\mathcal{O}_1,...,\mathcal{O}_i,...,\mathcal{O}_n$, a series of models prepared by the broker for sale by $\mathcal{M}^1,...,\mathcal{M}^m,...,\mathcal{M}^M$, and the model buyers by $\mathcal{B}_1,...\mathcal{B}_{k},...,\mathcal{B}_{K}$.

\begin{table}[htb]\centering
\caption{The summary of notations.}\label{tab:notations}
\vspace{-1em}
{%
\footnotesize
\begin{tabular}{|c|c|}
\hline
Notation & Definition\\
\hline
$\mathcal{O}_i$ & the $i^{th}$ data owner\\
\hline
$\mathcal{M}^m$ & the $m^{th}$ model\\
\hline
$\mathcal{B}_k$ & the $k^{th}$ model buyer\\
\hline
$\bm{Z}_{train}=\{\bm{z}_1,\bm{z}_2,...,\bm{z}_n\}$ & training dataset\\
\hline
$\bm{X}_{train}=\{\bm{x}_1,\bm{x}_2,...,\bm{x}_n\}$ & features of training dataset\\
\hline
$\bm{y}_{train}=\{y_1,y_2,...,y_n\}$ & labels of training dataset\\
\hline
$\bm{z}_i=\{\bm{x}_i,y_i\}$ & the $i^{th}$ training data\\
\hline
$\bm{Z}_{test}$ & testing dataset\\
\hline
$\bm{X}_{test}$ & features of testing dataset\\
\hline
$\bm{y}_{test}$ & labels of testing dataset\\
\hline
$\mathcal{U}$ & model utility\\
\hline
$\mathcal{SV}$ & Shapley value\\
\hline
$\mathcal{UV}$ & utility valuation\\
\hline
$\mathcal{UF}$ & utility function\\
\hline
$\mathcal{MR}$ & model risk factor\\
\hline
$\mathcal{MB}$ & manufacturing budget\\
\hline
$\mathcal{DR}$ & data owner restriction function\\
\hline
$\epsilon,\delta$ & parameters for the DP\\
\hline
$\bm{bc}$ & base compensation\\
\hline
$\bm{ec}$ & extra compensation\\
\hline
$\bm{tm}$ & target model\\
\hline
$\langle p(\epsilon^1),p(\epsilon^2),...,p(\epsilon^M)\rangle$ & optimal pricing\\
\hline
$(m,sp^m[j])$ & survey price point\\
\hline
$(m,p^m[j])$ & complete price point\\
\hline
\end{tabular}}
\end{table}%

\subsection{Established Market Design Principals}
There are a number of market design principals, which are also considered in data marketplaces with data-based pricing, query-based pricing, and model-based pricing. In the following, we review the most widely adopted ones and introduce related techniques.

\subsubsection{Fairness in Compensation Allocation: Shapley Value}\label{sub:ApprShapey}

Shapley value based compensation is a prevalently adopted approach mostly due to its theoretical properties, especially the fairness. Shapley value measures the marginal improvement of model utility contributed by $\bm{z}_i$ of data owner $\mathcal{O}_i$, averaged over all possible coalitions of the data owners. The formal Shapley value definition of data owner $\mathcal{O}_i$ is shown as follows.
\begin{equation}\label{equ:shapleyValue}
  \mathcal{SV}_i=\sum_{\bm{S}\subseteq \{\bm{z}_1,...,\bm{z}_n\}\setminus \bm{z}_i}\frac{\mathcal{U}(\bm{S}\cup \{\bm{z}_i\})-\mathcal{U}(\bm{S})}{\binom{n-1}{|\bm{S}|}}
\end{equation}
where $\mathcal{U}(\cdot)$ is the utility of the model trained by a coalition of the data owners, and the model utility is tested on the training dataset.

%\partitle{Traditional Approximate Shapley Value}
%We do not attempt to find the marginal contribution to all the $O(2^n)$ possible coalitions. Instead, we consider $n$ random coalitions. The first coalition is of size one; the second is of size two, and so on. We find each data owner's approximate marginal contribution to each of these $n$ coalitions. The average of all these marginal contributions gives the data owners the approximate Shapley value. However, we cannot control the degree of approximation in the traditional approximate algorithm.

\partitle{Monte Carlo Simulation Method}
Since the exact Shapley value computation is based on enumeration which is prohibitively expensive, we adopt a commonly used Monte Carlo simulation method \cite{DBLP:journals/cor/CastroGT09,DBLP:journals/ai/FatimaWJ08} to compute the approximate Shapley value. We first sample random permutations of the data points corresponding to different data owners, and then scan the permutation from the first element to the last element and calculate the marginal contribution of every new data point. Repeating the same procedure over multiple Monte Carlo permutations, the final estimation of the Shapley value is simply the average of all the calculated marginal contributions. This Monte Carlo sampling gives an unbiased estimate of the Shapley value. In practical applications, we generate Monte Carlo estimates until the average has empirically converged and the experiments show that the estimates converge very quickly. Therefore, Monte Carlo simulation method can control the degree of approximation, i.e., the more permutations, the better the accuracy . The detailed algorithm is shown in Algorithm \ref{Alg:MCShapley}, where $|\pi|$ is the number of permutations. The larger the $|\pi|$, the more accurate the computed Shapley value.

\begin{algorithm}[thb] \caption{Monte Carlo Shapley value computation.}\label{Alg:MCShapley}
\SetKwInOut{Input}{input}\SetKwInOut{Output}{output}

\Input{$\bm{Z}_{train}=(\bm{X}_{train},\bm{y}_{train})$ and $\bm{Z}_{test}=(\bm{X}_{test},\bm{y}_{test})$.}
\Output{Shapley value $\mathcal{SV}_i$ for each data $\bm{z}_i$ in $\bm{Z}_{train}$.}

initialize $\mathcal{SV}_i=0$\;

    \For{k=1 to $|\pi|$}{
        we have a training dataset ordered in $\pi^k$, $\bm{Z}_{train}^k=\{\bm{z}_{\pi^k_1},\bm{z}_{\pi^k_2},...,\bm{z}_{\pi^k_n}\}$\;
        \For{i=1 to n}{
            $\mathcal{SV}(\bm{z}_{\pi^k_i})=\mathcal{U}(\{\bm{z}_{\pi^k_1},...,\bm{z}_{\pi^k_i}\})-\mathcal{U}(\{\bm{z}_{\pi^k_1},...,\bm{z}_{\pi^k_{i-1}}\})$\;
            $\mathcal{SV}_{\pi^k_i}=\mathcal{SV}_{\pi^k_i}+\mathcal{SV}(\bm{z}_{\pi^k_i})$\;
        }
    }
\end{algorithm}

\subsubsection{Versioning}
For a perfect world, the broker would sell a personalized model to each model buyer at a different price to maximize the revenue. However, such personalized pricing is rarely possible in practical applications. On the one hand, it is expensive to train different models for different model buyers. On the other hand, it is difficult to set an array of prices for the same model. Even if we could, it would be impossible to get model buyers to stay within their intended pricing strata rather than to look for the lowest price. Finally, the broker runs the risk of annoying or even alienating model buyers if they charge different prices for the same model.

There is a practical way to set different prices for the same training dataset without incurring high costs or offending model buyers. We can do it by offering the training dataset in different versions designed to attract different types of model buyers. With this strategy, which is called versioning \cite{shapiro1998versioning}, model buyers segment themselves. The model version they choose reveals the value they place on the training dataset and the price they are willing to pay. Therefore, in our model marketplace, the broker would train $M$ different model versions by injecting different noise to different subsets of training data contributed by the data owners.

\subsubsection{Arbitrage-free in Model Pricing}\label{sub:AF}
Arbitrage is possible when the ``better'' model can be obtained more cheaply than the advertised price by combining two or more worse models with lower price. Arbitrage complicates the interactions between the broker and the model buyers, i.e., the model buyers need to carefully choose the models to achieve the lowest price, while the broker may not achieve the revenue intended by her advertised prices. Therefore, an arbitrage-free pricing function is highly desirable. We say a pricing function is arbitrage-free if it satisfies the following two properties proved in \cite{DBLP:conf/sigmod/ChenK019, DBLP:journals/cacm/LiLMS17}.

\begin{property}\partitle{(Monotonicity)}
Given a function $f: (\mathbb{R}^+)^k \rightarrow \mathbb{R}^+$, we say $f$ is monotone if and only if for any two vectors $\mathbf{x},\mathbf{y}\in (\mathbb{R^+})^k$, $\mathbf{x}\leq \mathbf{y}$, we have $f(\mathbf{x})\leq f(\mathbf{y})$.
\end{property}

%For example, if $a\leq b$, the cost to buy a dataset with $b$ tuples is equal to or larger than the cost to buy a dataset with $a$ tuples.

\begin{property}\partitle{(Subadditivity)}
Given a function $f: (\mathbb{R}^+)^k \rightarrow \mathbb{R}^+$, we say $f$ is subadditive if and only if for any two vectors $\mathbf{x},\mathbf{y}\in (\mathbb{R^+})^k$, we have $f(\mathbf{x}+\mathbf{y})\leq f(\mathbf{x})+f(\mathbf{y})$.
\end{property}

%For example, the cost to buy a dataset with $a+b$ tuples is equal to or smaller than the cost to buy a dataset with $a$ tuples and a dataset with $b$ tuples separately.

\subsection{Machine Learning Models}
In this paper, we focus on the Empirical Risk Minimization (ERM), which is a widely-applied tool in machine learning. Denote the training dataset by $\bm{Z}_{train}:= \{\bm{z}_i\},\ i={1,2,...,n}$, where $\bm{z}_i \sim \mathcal{D}$ and $\bm{z}_i=(\bm{x}_i,y_i)$. The $\bm{x}_i\in \mathbb{R}^d$ is the $d$-dimensional feature vector and $y_i$ is the response value which can be $\{-1,+1\}$ for binary classification task or $[0,1]$ for regression task. The ERM has the following objective function:
\begin{equation}
    \arg\min_{\bm{w}\in\Omega} \bm{L}(\bm{w};\bm{Z}_{train})+\lambda\|\bm{w}\|_2^2 = \arg\min_{\bm{w}\in\Omega} \sum_{i=1}^{n} \frac{1}{n} l(\bm{w};\bm{z}_i)+\lambda\|\bm{w}\|_2^2,
\end{equation}
where $\bm{L}(\bm{w};\bm{Z}_{train}) = \sum_{i=1}^{n}l(\bm{w};\bm{z}_i)$ is the empirical loss averaged from losses taken from all $\bm{z}_i$, $\lambda\|\bm{w}\|_2^2$ is the regularizer and $\Omega$ is the constraint set. In this paper, we focus on models with convex, Lipschitz continuous, and smooth loss (with respect to $\bm{w}$) functions. The formal definitions are in the following.
\begin{definition} (Convex Loss Function)
A loss function $l(\bm{w}):\ \mathbb{R}^d\to \mathbb{R}$ is called convex if for all $\bm{w}_1,\bm{w}_2\in\mathbb{R}^d$,
\begin{equation}
    |l(\bm{w}_1) - l(\bm{w})_2| \geq \langle \nabla l(\bm{w}_2), \bm{w}_1-\bm{w}_2\rangle.
\end{equation}
In addition, if $l(\bm{w}_1) - l(\bm{w})_2 \geq \langle \nabla l(\bm{w}_2), \bm{w}_1-\bm{w}_2\rangle + \frac{\mu}{2}\|\bm{w}_1-\bm{w}_2\|_2^2$, for $\mu> 0$, $l(\bm{w})$ is $\mu$-strongly convex.
\end{definition}

\begin{definition} (Lipschitz continuous Loss Function)
A loss function $l(\bm{w}):\ \mathbb{R}^d\to \mathbb{R}$ is called $L$-Lipschitz continuous if for all $\bm{w}_1,\bm{w}_2\in\mathbb{R}^d$,
\begin{equation}
    |l(\bm{w}_1) - l(\bm{w})_2| \leq L\|\bm{w}_1-\bm{w}_2\|_2.
\end{equation}
\end{definition}

\begin{definition} (Smooth Loss Function)
A loss function $l(\bm{w}):\ \mathbb{R}^d\to \mathbb{R}$ is called $\beta$-smooth if for all $\bm{w}_1,\bm{w}_2\in\mathbb{R}^d$,
\begin{equation}
    l(\bm{w}_1) - l(\bm{w})_2 \leq \langle \nabla l(\bm{w}_2), \bm{w}_1-\bm{w}_2\rangle + \frac{\beta}{2}\|\bm{w}_1-\bm{w}_2\|_2^2
\end{equation}
\end{definition}

We focus on the loss functions satisfying the above assumptions like least square loss, logistic loss, and smoothed hinge loss. In machine learning, these losses are thoroughly studied with theoretical properties like generalization performance, and easy to use with efficient optimization algorithm with guaranteed convergence. Furthermore, their differentially private versions are also equipped with efficiency, utility, and privacy guarantees. As a result, we focus on this particular type of machine learning models in our market design for a thorough understanding of our proposed market. However, we would like to mention that most of our algorithms can be applied to other types of machine learning models like the popular deep learning family.

\subsection{Differential Privacy}
Differential privacy is a formal mathematical tool for rigorously providing privacy protection. However, none of the existing (published) data marketplace with model-based pricing paper has considered it and it is still unknown how it can be incorporated in data marketplace with model-based pricing and how it affects market designs when adopted.

\begin{definition}(Differential Privacy)
A randomized algorithm $\mathcal{A}$ is $(\epsilon, \delta)$-differentially private, if for any pair of datasets $\bm{S}$ and $\bm{S}'$ that differs in one data sample, and for all possible output $\mathcal{O}$ of $\mathcal{A}$, the following holds,
\begin{equation}
    \mathbb{P}[\mathcal{A}(\bm{S})\in\mathcal{O}] \leq e^{\epsilon}\mathbb{P}[\mathcal{A}(\bm{S}')\in\mathcal{O}] + \delta,
\end{equation}
where the probability is taken over the randomness of $\mathcal{A}$.
\end{definition}
In practice, for a meaningful DP guarantee, the parameters are chosen as $0 < \epsilon \leq 1$, $\delta \ll \frac{1}{N}$, where $N$ is the number of data samples.

\begin{lemma} (Simple Composition)\label{lem:simpleCompo}
Let $\mathcal{A}_j$ be an $(\epsilon_j,\delta_j)$-differentially private algorithm. We have $\mathcal{A} = (\mathcal{A}_1,...,\mathcal{A}_J)$ is $(\sum_{j=1}^J\epsilon_j,\sum_{j=1}^J\delta)$-differentially private.
\end{lemma}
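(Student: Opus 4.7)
The plan is to prove the lemma by induction on $J$. The base case $J=1$ is immediate from the assumption that $\mathcal{A}_1$ is $(\epsilon_1,\delta_1)$-DP. For the inductive step I would assume the composition $\mathcal{A}_{1:J-1}=(\mathcal{A}_1,\ldots,\mathcal{A}_{J-1})$ is already $(\sum_{j=1}^{J-1}\epsilon_j,\sum_{j=1}^{J-1}\delta_j)$-DP and then show that appending $\mathcal{A}_J$ produces an $(\sum_{j=1}^J\epsilon_j,\sum_{j=1}^J\delta_j)$-DP algorithm. Since the algorithms here are run non-adaptively (each $\mathcal{A}_j$ accesses the input dataset but not the outputs of its predecessors), the joint output distribution factors as a product, which is the structural fact the rest of the argument exploits.

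The cleanest route is to invoke the standard equivalent characterization of approximate DP: a mechanism $\mathcal{M}$ is $(\epsilon,\delta)$-DP iff for every pair of neighboring datasets $S,S'$ there is a ``bad'' set $B\subseteq\mathrm{Range}(\mathcal{M})$ with $\mathbb{P}[\mathcal{M}(S)\in B]\leq\delta$ such that on its complement the pointwise likelihood ratio $p_{\mathcal{M}(S)}(o)/p_{\mathcal{M}(S')}(o)$ is at most $e^{\epsilon}$. Applying this characterization to $\mathcal{A}_{1:J-1}$ (via the inductive hypothesis) and to $\mathcal{A}_J$ separately, I would obtain bad sets $B'$ and $B_J$ of mass at most $\sum_{j<J}\delta_j$ and $\delta_J$. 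The bad set for $\mathcal{A}$ is then the union of their cylinder lifts to the product output space, which by a union bound carries total mass at most $\sum_{j=1}^J\delta_j$. Off this bad set, the joint likelihood ratio factors coordinate-wise and is bounded by $e^{\sum_{j<J}\epsilon_j}\cdot e^{\epsilon_J}=e^{\sum_{j=1}^J\epsilon_j}$; the other direction of the characterization then translates this pointwise bound back into the set-based $(\sum_j\epsilon_j,\sum_j\delta_j)$-DP inequality.

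The main obstacle, and the reason I adopt the bad-event route rather than directly multiplying raw inequalities, is keeping the $\delta$ budget additive. A naive attempt that substitutes $\mathbb{P}[\mathcal{A}_{1:J-1}(S)\in\cdot]\leq e^{\sum_{j<J}\epsilon_j}\mathbb{P}[\mathcal{A}_{1:J-1}(S')\in\cdot]+\sum_{j<J}\delta_j$ into the expansion of $\mathbb{P}[\mathcal{A}(S)\in\cdot]$ alongside the $(\epsilon_J,\delta_J)$-DP bound for $\mathcal{A}_J$ generates cross terms such as $e^{\sum_{j<J}\epsilon_j}\delta_J$ and $\delta_J\cdot\sum_{j<J}\delta_j$, which do not telescope into $\sum_{j=1}^J\delta_j$. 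The decomposition into a bad event plus a pointwise-bounded good event cleanly separates the two slacks: the $\delta$ slack is absorbed by a union bound, while the $\epsilon$ slack composes multiplicatively on the good event and never interacts with $\delta$. Once this separation is in place, the inductive step reduces to routine bookkeeping and the lemma follows.
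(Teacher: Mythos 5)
The paper never proves this lemma---it is quoted as standard background (simple composition from the differential privacy literature)---so your argument has to stand on its own, and it has one genuine gap. The pivot of your induction is the claim that $(\epsilon,\delta)$-DP is \emph{equivalent} to the existence of a bad set $B$ with $\mathbb{P}[\mathcal{M}(S)\in B]\leq\delta$ off which the pointwise likelihood ratio is at most $e^{\epsilon}$. Only one direction of that is true: a bad set of mass $\delta$ implies $(\epsilon,\delta)$-DP. The converse, which is exactly what you invoke to convert the inductive hypothesis (a set-based DP statement about $\mathcal{A}_{1:J-1}$) and the hypothesis on each $\mathcal{A}_j$ into bad-set form, fails with the same parameters; the bad-set condition is the strictly stronger notion sometimes called probabilistic or pointwise DP. A two-point counterexample: on outputs $\{0,1\}$ take $\mathbb{P}[\mathcal{M}(S)=\cdot]=(0.6,0.4)$ and $\mathbb{P}[\mathcal{M}(S')=\cdot]=(0.5,0.5)$. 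These are $(0,0.1)$-indistinguishable, yet any set $B$ outside of which $p_S(o)\leq p_{S'}(o)$ must contain the point $0$ and hence has mass $0.6>0.1$ under $\mathcal{M}(S)$. So the base case and the unpacking of the inductive hypothesis both rest on a false equivalence, and the cross terms you were trying to avoid (e.g., $e^{\epsilon_J}\sum_{j<J}\delta_j$) have not actually been eliminated.

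The architecture is salvageable with a different decomposition. The correct exact characterization is distributional rather than event-based: $\mathbb{P}[\mathcal{M}(S)\in O]\leq e^{\epsilon}\mathbb{P}[\mathcal{M}(S')\in O]+\delta$ for all $O$ if and only if there exists a distribution $P'$ with total variation distance at most $\delta$ from the law of $\mathcal{M}(S)$ satisfying $P'(o)\leq e^{\epsilon}\,p_{S'}(o)$ pointwise (construct $P'$ by shaving off the excess mass $\max\{p_S(o)-e^{\epsilon}p_{S'}(o),0\}$, whose total is at most $\delta$, and redepositing it where there is slack). This surrogate composes exactly as you intended: pointwise domination multiplies across coordinates to give $e^{\sum_j\epsilon_j}$, and total variation distance of product measures is subadditive, giving $\sum_j\delta_j$; the final conversion back to the set-based inequality is the easy direction. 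With that substitution your induction goes through. One further caveat worth a sentence in a written proof: you restrict to non-adaptive composition so that the joint law factors as a product; the lemma as stated is typically also claimed for adaptive composition, which requires conditioning the argument on the prefix of outputs rather than literally taking products.
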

Lemma \ref{lem:simpleCompo} is essential for DP mechanism design and analysis, which enables algorithm designers to compose elementary DP operations into a more sophisticated one. More importantly, we will show that it plays a crucial role in model market design as well. That is, Lemma \ref{lem:simpleCompo} determines that DP is an appropriate mechanism for versioning models, based on which prices have to satisfy the arbitrage-free property.

\begin{definition} ($\ell _2$-sensitivity)
A function $f:\ \mathcal{D}^{N}\to
\mathbb{R}^d$ has $\ell_2$ sensitivity $\Delta _2$ if
\begin{equation}
    \max _{\text{neighboring }\bm{S},\bm{S}'} \|f(\bm{S})-f(\bm{S}')\|_2 = \Delta _2
\end{equation}
\end{definition}
For training ERM with DP restriction, a popular method is the objective perturbation, which perturbates the objective function of the model with quantified noise. Conventional objective perturbation only supports DP guarantee for the exact optimum, which is hardly achievable in practice. In this paper, we follow the enhanced objective perturbation called \emph{approximate minima perturbation} \cite{bassily2019private, iyengar2019towards}, which allows solving the perturbed objective up to $\alpha$ approximation. It uses a two-phase noise injection strategy that perturbates both the objective and the approximate output. The detail is summarized in Algorithm \ref{Alg:DPtraining}. The algorithm trains a $(\epsilon,\delta)$-DP model based on training dataset $\bm{Z}_{train}$, which outputs model parameter $\bm{w}_{DP}$. In particular, Line 2 perturbates the model with calibrated noise $\bm{N}_1$, Line 3 optimizes the perturbated model which is followed by an output perturbation with noise $\bm{N}_2$. Finally, $\bm{w}_{DP}$ is obtained by a projection to the constrained set $\Omega$.

\begin{algorithm}[thb] \caption{Objective perturbation for differentially private ERM training.}\label{Alg:DPtraining}
\SetKwInOut{Input}{input}\SetKwInOut{Output}{output}

\Input{$\bm{Z}_{train}$ and $(\epsilon,\delta)$.}
\Output{$\bm{w}_{DP}$.}

Sample $\bm{N}_1 \sim \mathcal{N}(\bm{0}_d,\sigma _1^2 \bm{I}_d)$, where $\sigma_1=\frac{20L^2 \log(1/\delta)}{\epsilon ^2}$\;

Objective Perturbation: $\mathcal{L}_{OP}(\bm{w}) = \mathcal{L}(\bm{w};\bm{Z}_{train}) +\lambda\|\bm{w}\|_2^2 +\frac{1}{n}\langle \bm{N}_1,\bm{w}\rangle$\;

Optimize $\mathcal{L}_{OP}(\bm{w})$ to obtain $\alpha$ approximate solution $\hat{\bm{w}}$\;

Sample $\bm{N}_2 \sim \mathcal{N}(\bm{0}_d,\sigma _2^2 \bm{I}_d)$, where $\sigma_2=\frac{40\alpha \log(1/\delta)}{\lambda\epsilon ^2}$\;

\Return $\bm{w}_{DP} = proj_{\mathcal{W}}(\hat{\bm{w}}+\bm{N}_2)$\;
\end{algorithm}

%---------------------------------------------------------------------------------------------------------------------------------------------------------------------------------
%---------------------------------------------------------------------------------------------------------------------------------------------------------------------------------
\section{General End-to-End Data Marketplace with Model-based Pricing}\label{sec:FrameWorkDesiderata}

% \subsection{Market Participants Formalization}
In this section, we propose Gen-\emph{Dealer} to model the entire end-to-end data marketplace with model-based pricing. Gen-\emph{Dealer} models the following three aspects of the data marketplace: 1) the functionalities and restrictions of the three participating entities; 2) interactions between the participating entities; 3) market decisions taken by the broker. In addition to the well recognized market design principals considered by previous work, we bring new considerations for the two ends of the marketplace (i.e., the data owners and the model buyers), and study how the marketplace interactions should respond to their requirements and how the market decisions will be affected to maintain a sustainable (or even profitable) data marketplace with model-based pricing.

\subsection{Formalization of Marketplace Entities }\label{sub:marketPartForm}
To begin with, we formalize the three participating entities as follows.

\partitle{Data Owners}
Data owners can be professional institutes, organizations, or individuals. In this paper, we focus on the individual case where each data owner $\mathcal{O}_i$ contributes their own data $\bm{z}_i$. For organizations or institutes which collects multiple individual owners' data for sale, we treat every atomic data tuple within the dataset as an individual data owner. Individual data owners are interested in contributing their data for compensation, e.g., coupons, exclusive sale, cashback, but are cautious about their data usage, e.g., personal privacy exposure. The three main functionalities of the data owners are shown as follows.
\begin{enumerate}[leftmargin=*]
       \item Contributing Data: each data owner can contribute the personal data to the broker. Denote the data of data owner $\mathcal{O}_i$ by $\bm{z}_i = (\bm{x}_i,y_i)$, where $\bm{x}_i$ is the feature vector and $y_i$ is the response vector.

       \item Setting Usage Restriction: the data owners will set restrictions on how their data can be used, e.g., what types of models or how many models their data can be used. In this paper, we consider a natural strategy, where $\mathcal{O}_i$ sets one restriction per each model to train. That is, for model $\mathcal{M}^m$, $\mathcal{O}_i$ provides \textbf{D}ata \textbf{R}estriction function $\mathcal{DR}_i^m$. Two types of restriction functions are modeled in this paper. The first is the simpler ``in or out'' restriction, where $\mathcal{DR}_i^m = \{0,1\}$ is an indication function providing hard restriction on whether $\bm{z}_i$ is allowed to be used for training the $m^{th}$ model ($\mathcal{DR}_i^m = 1$) or not ($\mathcal{DR}_i^m = 0$). The second provides negotiable ranges that $\mathcal{DR}_i^m$ is not only a function of the model tier but also related to \textbf{e}xtra \textbf{c}ompensation ($\bm{ec}_i^m$): $\mathcal{DR}_i^m(\bm{ec}_i^m) = \{0,1\}$. That is, $\bm{z}_i$ can be used for training model $\mathcal{M}^m$ if the extra compensation $\bm{ec}_i^m$ satisfies $\mathcal{O}_i$'s expectation.

       \item Receiving Compensation: After the sales, the data owners receive compensation based on their data usage. In addition, extra compensation will be paid if their data is used after negotiation. In particular, the extra compensation $\bm{ec}_i^m$ is a function of risk factor $risk^m$ to be introduced in the next subsection: $\bm{ec}_i^m = 0$, if $risk_{i}^m \leq\mathcal{MR}^m$ and is an nondecreasing nonnegative function of $risk_i^m - \mathcal{MR}^m$ if $risk_{i}^m > \mathcal{MR}^m$.
\end{enumerate}

\begin{discussion}
       Compared to existing machine learning model marketplace designs which limit the data owners actions to merely contribute data and receive compensation, our formulation has the following two strengths: 1) it allows the data owners to set data usage restrictions; 2) the negotiable-type restrictions help data owner to better estimate the value of their personal data, which is often difficult for individuals who have limited market information and evaluation of data usage risk. We believe that by better modeling the data owner by providing the rights on setting data usage restrictions and receiving extra compensation, it will eventually incentivize more data owners to contribute their data.
\end{discussion}

\begin{assumption}
The data owners do not fake data. Each data owner only contributes one data sample and all data are independent.
\end{assumption}

\begin{remark}
In practice, each data owner may have multiple atomic data tuples, and those data tuples among different data owners may be correlated. Therefore, we need to consider their relationship when we allocate compensation. For the correlated data, we leave it as an open question.
\end{remark}

\partitle{Model Buyers}
Model buyers can be industries or everyday users, who are interested in purchasing machine learning models to either integrate into their product or support certain decision making. They have very different budgets and model utility requirements. In this paper, we focus on the single minded model buyers who will purchase at most one model. There are two functionalities of the model buyers as follows.
\begin{enumerate}[leftmargin=*]
       \item Providing Purchase Willingness: $\mathcal{B}_k$ provides the purchase willingness by providing $(\bm{tm}_k,v_k)$, where the \textbf{t}arget \textbf{m}odel $\bm{tm}_k\in\{1,...,M\}$ indicates target model of $\mathcal{B}_k$ and $v_k$ is the purchasing budget. We note that the number of potential model buyers can be very large, but the purchase willingness of hundreds of sampled model buyers is enough for the broker to make market decision.

       \item Model Transaction: $\mathcal{B}_k$ decides to purchase the target model or not by comparing the released price $p(\bm{tm}_k)$ by the broker and her budget.
\end{enumerate}

\partitle{Broker}
The broker collects data from the data owners and trains a series of machine learning models for sale to the model buyers. Let the \textbf{M}odels be $\mathcal{M}^1, ..., \mathcal{M}^m, ..., \mathcal{M}^M$. As discussed, a common versioning strategy is to sell the models in various tiers, from the lowest tier (say $\mathcal{M}^1$) to the highest tier (say $\mathcal{M}^M$). For these models, it sets prices $\langle p(\mathcal{M}^1),...,p(\mathcal{M}^M)\rangle$. In addition, we assume the broker is honest in the sense that it will strictly follow contract with the data owners, e.g., respecting their usage restrictions, allocating the compensation based on the true data usage. The broker also wishes to remain competitive by training the best model for each tier within that tier's resource budget. More importantly, the broker interacts with both the data owners and the model buyers, and makes various market decisions, which are detailed in the later sections. Finally, we assume there is a model risk factor $\mathcal{MR}^m$ associated with each model $\mathcal{M}^m$ to be detailed in Section \ref{subsec:marketDecisionMak}. The model risk measures how large certain risk is to the data owner who participates in the model training. When the risk is low, the broker usually puts high restrictions on data usage, which often leads to limited data information extraction. As a result, the lower risk model often corresponds to lower tier models coming with a lower price. We denote such connection by $\langle p(\mathcal{MR}^1),...,p(\mathcal{MR}^M)\rangle$, where $\mathcal{MR}^1 \leq ... \leq \mathcal{MR}^M$ and the price satisfies arbitrage-free with respect to the model risk. In a more generalized market setting, multiple brokers can co-exist, which forms a competitive relationship. To focus on the principal functionalities of a broker, we follow existing work \cite{agarwal2019marketplace,DBLP:conf/sigmod/ChenK019,jia2019efficient} to consider only a single broker case in this paper.

\subsection{Formalization of Marketplace Dynamics}
In this subsection, we formalize the data marketplace dynamics, which consist of the interaction between the data owner and the broker, the interaction between the model buyer and the broker.

\partitle{Interaction between Data Owner and Broker}

\begin{itemize}[leftmargin=*]
       \item Data Collection: The broker posts model tiers $\mathcal{M}^1,...,\mathcal{M}^M$ and explains each tier's model risk $\mathcal{MR}^m$ to the data owners and how each tier will possibly be compensated if a data owner chooses to participate. The data owner $\mathcal{O}_i$ contributes data $\bm{z}_i$ and set data usage restriction $\mathcal{DR}_i^m$ as well as a potential extra compensation $\bm{ec}_i^m$ if he is willing to negotiate, for all $m=1,...,M$. Recall that $\mathcal{DR}_i^m$ and $\bm{ec}_i^m$ are functions of $\mathcal{MR}^m$.

       \item Compensation Allocation: After training the models under data usage restrictions (see \emph{Model Training} part in the next subsection), the broker pays compensation to the data owners according to the \emph{Compensation Allocation} market decision algorithm detailed in the next subsection. Three key quantities are \textbf{U}tility \textbf{V}aluation to model $\mathcal{M}^m$: $\mathcal{UV}_i^m$, its \textbf{b}ase \textbf{c}ompensation: $\bm{bc}_i^m$, and \textbf{e}xtra \textbf{c}ompensation: $\bm{ec}_i^m$.
\end{itemize}

\partitle{Interaction between Model Buyer and Broker}

\begin{itemize}[leftmargin=*]
       \item Market Survey: The broker posts the models tiers $\mathcal{M}^1,...,\mathcal{M}^M$ to the potential model buyers. Each model buyer $\mathcal{B}_k$ provides purchase willingness $(\bm{tm}_k,v_k)$.
       \item Model Transaction: The broker makes the model pricing $\langle p^1,...,p^M\rangle$ (see \emph{Model Pricing} in the next subsection) based on $(\bm{tm}_k,v_k)$. The model buyers then make purchase decisions and complete the transaction if meeting budget restriction.
\end{itemize}

\begin{discussion}
       The data owners' contributing willingness is based on risk level if their data is used for training model $\mathcal{M}^m$, while the model buyers' purchasing willingness is based on the usefulness of the model (i.e., model utility). Thus, the two ends have a different standard for the same model tier, which requests the broker to make optimal market decisions to bridge the two ends' different requirements.
\end{discussion}

\subsection{Marketplace Decision Making}\label{subsec:marketDecisionMak}

\subsubsection{Model Training under Data Usage Restriction}
In this part, we propose a brand new versioning strategy, which centers on the data owners' perspective, rather than simply lowering the quality of the model which merely considers the model buyers' payment ability like \cite{DBLP:conf/sigmod/ChenK019}.

\partitle{Versioning}
In general, more data owners' contribution makes the ``manufacturing cost'' of the corresponding model cheaper, because the broker has many alternative choices and tends to dominate the compensation bargain. On the contrary, for the models with higher risk, much fewer data owners are willing to participate, which makes the broker have to pay higher compensation to intrigue more data contribution. With increased manufacturing cost, the product (i.e., the model for sale) price should increase. Thus, we propose to set a version based on the participating risk, which is out of the data owners' perspective. However, since the model buyers, in general, do not care about the risk factor but the model utility, the broker should provide conversion from the risk-tiering standard to the model utility standard for each model tier. To bridge both ends, the broker needs to make market decisions (e.g., set model pricing) constrained by all participating entities' requirements as constraints. As a result, it assures our claim that designing the data marketplace with model-based pricing from end to end is of necessity and importance.

Recall that each model $\mathcal{M}^m$ is associated with model risk $\mathcal{MR}^m$. The versioning is a process where the broker trains a series of models with different model risks: $\mathcal{MR}^1,...,\mathcal{MR}^M$ under the restrictions of all $n$ data owners $\mathcal{DR}_i^m$ and extra compensation requirements $\bm{ec}_i^m$. On the model buyer's end, the broker in our marketplace will provide a utility function $\mathcal{UF}(\mathcal{MR}^m)$ for each risk-based model tier, since the utility is the fundamental model property interested to the model buyers.

In comparison, the versioning strategy of the existing model market \cite{DBLP:conf/sigmod/ChenK019} produces different versions of models by controlling model utility through directly adding noise to the model parameters, in order to suit different model buyers coming with various payment ability. Obviously, their simplified versioning strategy fails to reflect the true ``model manufacturing cost''.

\partitle{Model Utility Maximization with Manufacturing Budget}
Given $\mathcal{O}_i = (\bm{z}_i,\mathcal{DR}_i^m)$ or $\mathcal{O}_i = (\bm{z}_i,\mathcal{DR}_i^m(\bm{ec}_i^m))$, for $i = 1,2,...,n$, the broker trains each model $\mathcal{M}^m$ under data usage restrictions and tries its best to train the best model for each model tier to remain competitive. Let data owner $\mathcal{O}_i$'s preferred model risk be $\bm{risk}_i$, which indicates the highest risk she wants to take (without extra compensation). In this paper, we instantiate the data restriction $\mathcal{DR}_i^m$ function as follows: 1) for the hard restriction case, $\mathcal{DR}_i^m = \mathbb{I}(\mathcal{MR}^m\leq \bm{risk}_i)$, i.e., the data can be used for $\mathcal{M}^m$ only when the model risk $\mathcal{MR}^m$ is lower than the data owner's preferred risk $\bm{risk}_i$; 2) for the negotiable case, $\mathcal{DR}_i^m(\bm{ec}_i^m) = \mathbb{I}(\mathcal{MR}^m\leq \bm{risk}_i) \wedge \mathbb{I}(\bm{ec}_i^m)$, i.e., the data can be used either the model risk is lower than the preferred risk, or the extra compensation is made.

For the simpler hard data usage restriction, the broker trains model $\mathcal{M}^m$ with data \textbf{S}ubset: $\{\bm{S}^m: i\in 1,2,...,n,\ s.t.\ \mathcal{DR}_i^m=1\}$. For the negotiable data usage restriction, under limited manufacturing budget $\mathcal{MB}$, the broker needs to decide whose data worth the extra compensation, so that the utility valuation of the trained model will be maximized for the broker to be competitive in the market. Denote the utility valuation (to be detailed in the next subsection) of $\bm{z}_i$ to model $\mathcal{M}^m$ by $\mathcal{UV}_i^m$. We formalize the subset selection of $\bm{S}^m$ as a training budget constrained utility valuation maximization problem as follows.
\begin{gather}
       \arg\max_{\bm{S}^m \subseteq \{\bm{z}_1,...,\bm{z}_n\}} \sum _{i \in \bm{S}^m} \mathcal{UV}_i^m,\\
       \label{eq.gendealer.constraint}
       s.t.\ (\sum _{i\in \bm{S}^m}(\bm{bc}_i^m+\bm{ec}_i^m(\max\{0,\mathcal{MR}^m-\bm{risk}_i\})) ) \leq \mathcal{MB}^m.
\end{gather}
In the above, the utility value $\mathcal{UV}_i^m$ and the base compensation $\bm{bc}_i^m$ should satisfy certain market design principals, which will be discussed in the following.

\subsubsection{Compensation Allocation}
In this part, we elaborate how Gen-\emph{Dealer} allocates base compensation $\bm{bc}_i^m$ and utility valuation $\mathcal{UV}_i^m$ strategy. Recall that for model $\mathcal{M}^m$, its \textbf{b}ase \textbf{c}ompensation $\bm{bc}_i^m$ and \textbf{e}xtra \textbf{c}ompensation $\bm{ec}_i^m$. The extra compensation is a function of the data owner preferred risk $\bm{risk}_i$ and the model risk $\mathcal{MR}^m$: if $\mathcal{MR}^m\leq\bm{risk}_i$, data owner $\mathcal{O}^i$ will participate the training of $\mathcal{M}^m$ with only base compensation; else if $\mathcal{MR}^m>\bm{risk}_i$, $\bm{ec}_i^m$ is charged with respect to $\mathcal{MR}^m-\bm{risk}_i$, i.e., the broker needs to pay for the extra risk the data owner suffers. Together, $\bm{ec}_i^m$ is a function of $\max\{0,\mathcal{MR}^m-\bm{risk}_i\}$ as shown in Equation (\ref{eq.gendealer.constraint}).

For the base compensation, Gen-\emph{Dealer} allocates it based on the $z_i$'s \textbf{U}tility \textbf{V}alue $\mathcal{UV}_i^m$, where $\mathcal{UV}_i^m$ is based on the (approximate) Shapely value and divides $\bm{bc}_i^m$ according to the relative Shapely value. This way, the true contribution of data owner $\mathcal{O}^i$ to model $\mathcal{M}^m$ can be evaluated and the base compensation is consistent with market design principals. To be practical, efficient approximation algorithms will be utilized.

To summarize, Gen-\emph{Dealer} will allocate the compensation to data owner $\mathcal{O}^i$ for participating model $\mathcal{M}^m$ as $\bm{bc}_i^m+\bm{ec}_i^m(\max\{0,\mathcal{MR}^m-\bm{risk}_i\})) $. The total compensation allocated to data owner $\mathcal{O}^i$ is:
\begin{equation}
    \sum _{m=1}^M \mathbb{I}(i\in\bm{S}^m)\cdot\Big[\bm{bc}_i^m+\bm{ec}_i^m(\max\{0,\mathcal{MR}^m-\bm{risk}_i\}))\Big],
\end{equation}
where $\mathbb{I}(i\in\bm{S}^m)$ is an indicator function for indication whether $z_i$ is in  data subset $\bm{S}^m$ for training $\mathcal{M}^m$.

\subsubsection{Market Survey and Revenue Maximization}\label{subsubsec:marketSurvey}
In this part, we show how to construct the market survey between the broker and the potential model buyers, and how the broker maximizes the revenue based on the market survey.

\partitle{Market Survey}
Prior to release models and prices for sale, the broker will estimate the price for each model through a market survey, which can be done by the broker himself or by third-party companies like consultation service providers. Let the survey size by $K'$, i.e., $K'$ potential model buyers are recruited to provide their purchasing willingness. The survey result will contain $K'$ tuples, one from each survey participant. For the $k^{th}$ survey participant, it provides $(\bm{tm}_k,\bm{v}_k)$, where $\bm{tm}_k\in\{1,...,M\}$ is the target model and $\bm{v}_k$ is the acceptable price of model $\bm{tm}_k$ she is willing to purchase. We note that the survey participants may have an incentive to report lower valuations in order to decrease the price, which can be alleviated by the digital goods auction \cite{DBLP:journals/teco/AlaeiMS14} based on two approaches: random-sampling mechanisms and consensus estimates.

\partitle{Revenue Maximization (Model Pricing)}

With the surveyed purchasing willingness, the broker will price each model in the aim of maximizing revenue and at the same time following the market design principal of arbitrage-free. To do so, the revenue maximization problem is formulated as follows.
\begin{gather}
    \arg\max_{\langle p(\mathcal{MR} ^1),...,p(\mathcal{MR}^M)\rangle} \sum_{m=1}^M \sum_{k=1}^{K'} p(\mathcal{MR}^m) \cdot \mathbb{I}(\bm{tm}_k==m)\cdot\mathbb{I}(p(\mathcal{MR}^m) \leq \bm{v}_k),\\
    s.t.\ p(\mathcal{MR} ^m) + p(\mathcal{MR} ^{m'}) \geq p(\mathcal{MR} ^m + \mathcal{MR} ^{m'}),\ \mathcal{MR} ^{m},\mathcal{MR} ^{m'} \geq 0,\\
    p(\mathcal{MR} ^m) \geq p(\mathcal{MR} ^{m'}),\ \mathcal{MR} ^{m} \geq \mathcal{MR} ^{m'} \geq 0,\\
    p(\mathcal{MR} ^m) \geq 0,\ \mathcal{MR}^m\geq 0,
\end{gather}
where $\mathcal{MR}^m$ is the model risk defined in Section \ref{sub:marketPartForm}, $p(\mathcal{MR}^m)$ is the model price for model $\mathcal{M}^m$ whose model risk is $\mathcal{MR}^m$. In the next section, we will see that this problem is co-NP hard and we will provide an efficient approximation with accuracy bound algorithm there for our DP-\emph{Dealer} instance.

%---------------------------------------------------------------------------------------------------------------------------------------------------------------------------------
\section{A Differentially Private Data Marketplace Instance and Efficient Approximate Optimization}\label{sec:FrameWorkInstance}
In this section, we propose a concrete realization of Gen-\emph{Dealer}, a differentially private data marketplace with model-based pricing framework DP-\emph{Dealer}. We illustrate the functionalities and restrictions of the data owners and the model buyers in DP-\emph{Dealer} in Sections \ref{sub:DPdataOwner} and \ref{sub:DPmodelBuyer}, respectively. Furthermore, in Section \ref{sub:DPbroker}, we present the broker's functioning in DP-\emph{Dealer} by providing concrete solutions, which are efficient approximate optimization algorithm to make the market practical. Finally, we summarize the complete DP-\emph{Dealer} dynamics in Section \ref{sub:DPcomplete}.

\subsection{Data Owner}\label{sub:DPdataOwner}
We provide a data owner instance by instantiating its risk factor, data restrictions, and extra compensation functions. For the risk factor, we focus on the privacy-preserving issue, which is arguably one of the major concerns limiting individual users from contributing their data. The data owners wish to contribute data for model training with a certain level of privacy in exchange of a fair share of compensation. We follow the differential privacy notion and instantiate both the hard and negotiable data usage restriction cases. Let the model risk factor $\mathcal{MR}^m$ of model $\mathcal{M}^m$ be described by $\epsilon^m$ which corresponds to $\epsilon^m$-differential privacy of the model. Traditional DP system and algorithm designs mostly consider the differential privacy strictness out of the broker' and the model buyers' perspective, which sets it as a tradeoff factor as long as it affects the model utility within a certain level. This overlooks the true privacy demand of the data owners. For those who do consider personalized DP budget, they seldom consider what value the privacy parameter actually means to the data owner. Under such a lack of reward scenario, the data owner still has difficulty evaluating their own privacy demands. Our design allows the data owners to choose their own privacy preference and receives rewards for providing more useful personal information. We believe it is a good starting point for a practical data marketplace with model-based pricing that respects the data owners' privacy demand and incentivizes the data owners for their personal data contribution.

Under the differential privacy risk factor, the three functionalities of data owner $\mathcal{O}_i$ are shown as follows.
\begin{enumerate}[leftmargin=*]
    \item Contributing Data: data owner $\mathcal{O}_i$ contributes her data $z_i=(\bm{x}_i,y_i)$ to the broker;

    \item Setting Usage Restriction: let the personal risk preference $\bm{risk}_i$ be $(\epsilon_{i},\delta)$-DP\footnote{In this paper, we assume the $\delta$ is sufficiently small so that we do not consider its value and composition for the remaining of the paper for convenience.}.

\partitle{Hard DP requirement}
In the first simplified case, each data owner $O_i$ chooses whether her data is allowed for training model $\mathcal{M}^m$ with certain level of privacy restriction by DP parameter $\epsilon_{i}^{prefer}$. That is, data owner $\mathcal{O}^i$ only allows her data to be used for models with DP restrictions stricter than $\epsilon_{i}$, i.e., $\epsilon^m \leq \epsilon_{i}$. Then, the data restriction $\mathcal{DR}_i^m = \mathbb{I}(\epsilon^m \leq \epsilon_{i})$.

\partitle{Negotiable DP requirement}
We further consider a more complicated data owner strategy, where the data owners have more options for making their own trade-offs between compensation and privacy risk. For data owner $\mathcal{O}_i$, in addition to DP requirement $\epsilon_i$, she is also willing to trade some of the privacy for more compensation. To do so, we introduce an extra compensation function $\bm{ec}_i^m(\epsilon_{i},\epsilon^m)$, which pays an extra fraction of the base compensation (allocated based on Shapley value) to compensate for the higher privacy risk. In this case, the data usage restriction function is also a function of the extra compensation: $\mathbb{I}(\bm{ec}_i^m(\epsilon_{i},\epsilon^m))$ which indicates whether the extra compensation has been allocated. Thus, $\mathcal{DR}_i^m(\bm{ec}_i^m) = [\mathbb{I}(\epsilon^m \leq \epsilon_{i}) \wedge \mathbb{I}(\bm{ec}_i^m(\epsilon_{i},\epsilon^m))]$.

\item Receiving Compensation: For both cases, we let the base compensation $\bm{bc}_i^m$ to be proportional to the relative approximated Shapley value (see the broker instance in the following). In addition, for the negotiable case, we introduce the extra compensation function if $\epsilon^m > \epsilon_{i}$ but the broker is willing to use $z_i$ for training $\mathcal{M}^m$ to maximize the model value (subject to the constraint of the manufacturing budget).

\partitle{Extra Compensation Function}
In particular, we present three types of the extra compensation function $\bm{ec}_i^m(\epsilon_{i},\epsilon^m)$: concave, linear, and convex, to model three user inclinations of their personal privacy risks: reserved, balanced, and casual, correspondingly.
\begin{itemize}[leftmargin=*]
    \item linear: $\bm{ec}_i^m(\epsilon_{i},\epsilon^m) = \rho_i^m \bm{bc}_i^m \max\{0,\epsilon^m-\epsilon_{i}\}$;

    \item convex: $\bm{ec}_i^m(\epsilon_{i},\epsilon^m) = \rho_i^m \bm{bc}_i^m (\max\{0,\epsilon^m-\epsilon_{i}\})^2$;

    \item concave: $\bm{ec}_i^m(\epsilon_{i},\epsilon^m) = \rho_i^m \bm{bc}_i^m (\max\{0,\epsilon^m-\epsilon_{i}\})^{\frac{1}{2}}$;
\end{itemize}
\end{enumerate}
For the ease of presentation, we use $\bm{ec}_i^m$ to replace $\bm{ec}_i^m(\epsilon_{i},\epsilon^m)$ to express the extra compensation of data owner $\mathcal{O}_i$ on model $\mathcal{M}^m$ in the following.

\begin{discussion}
       For all cases, each data owner $\mathcal{O}_i$ has the maximum total potential privacy leakage $\sum_{m=1}^{M} \epsilon_i^m$. In this work, we assume the data owners do not have too much information about the data marketplace except the information given by the broker. Thus, they invariably invest their total privacy budget to each of the $M$ models, which can be suboptimal for certain owners. In the future, we will consider more informed data owners, who have not only more knowledge about the data marketplace, e.g., the demand for each type of model, but also the quality and privacy restrictions of other data owners. With the additional market information, data owners can allocate their total privacy leakage more intelligently by investing their privacy budget towards models returning them more compensation.
\end{discussion}

\begin{assumption}
The goal of adding DP noise in models is to limit what can be inferred from the models about individual training data tuples. Therefore, it is better for the broker to support the relationship between DP parameter $\epsilon$ and what can be inferred from the models to the data owners. We note that such a relationship can be implemented by \cite{DBLP:conf/uss/Jayaraman019}.
\end{assumption}

\subsection{Model Buyer}\label{sub:DPmodelBuyer}
The model buyers in DP-\emph{Dealer} have the same functionalities with the model buyers in Gen-\emph{Dealer} shown in Section \ref{sub:marketPartForm} when considering the differential privacy instantiation.

\begin{assumption}
The arbitrage-free property of the models is established in terms of the differential privacy budget.
\end{assumption}

\begin{remark}
In practice, the model buyers can buy a couple of weak models and convert those weak models to strong ones by employing some machine learning techniques such as ensemble learning, bagging and boosting. However, it may be infeasible to formally characterize how model combinations behave in terms of the model utility. Therefore, instead of ensuring the models to satisfy arbitrage-free in terms of model utility, we ensure that the models satisfy arbitrage-free in terms of DP parameter. 
\end{remark}

\subsection{Broker}\label{sub:DPbroker}
In this part, we present the broker's functioning in the differentially private data marketplace by providing concrete solutions for selecting optimal training subsets with budget constraints, the market survey to potential model buyers, and pricing models for revenue maximization with an arbitrage-free guarantee.

\subsubsection{Selecting Optimal Training Subsets with Budget Constraint}
Given the training data along with the data owners' privacy and extra compensation functions, the broker aims to train the highest valued model for each price tier with the constraint on the privacy and manufacturing/compensation budget. According to different data owners' requirements, the broker has two types of workflows.

\partitle{Processing the Hard DP Restriction}
In this case, for model tier $\epsilon^{m}$, the broker is allowed to release model $\mathcal{M}^m$ trained strictly with a subset $\bm{S}^m$ within the data owners $\mathcal{O}_i$, where $\epsilon_{i} \leq \epsilon^m$. We formalize the optimization problem as follows.
\begin{equation}\label{equ:WorHardDPrestr}
    \arg\max_{\bm{S}^m} \sum_{i\in\bm{S}^m}\mathcal{SV}_i^m,\ s.t.\ \sum _{i\in\bm{S}^m} \bm{bc}_i^m \leq \mathcal{MB}^m
\end{equation}
We omit the solutions for Equation (\ref{equ:WorHardDPrestr}) because it is a special case of the following Equation (\ref{equ:WorNegotiableDPrestr}).

\partitle{Processing the Negotiable DP Restriction}
A more practical case is the negotiable DP restriction, where the broker has the option to decide whether to intrigue high quality data owners to lower their privacy restriction with extra compensation. We formalize it as the following Budget Constrained Maximum Value Problem (BCMVP) on model $\mathcal{M}^m$:
\begin{equation}\label{equ:WorNegotiableDPrestr}
    \arg\max_{\bm{S}^m} \sum_{i\in\bm{S}^m}\mathcal{SV}_i^m,\ s.t.\ \sum _{i\in\bm{S}^m} (\bm{bc}_i^m+\bm{ec}_i^m) \leq \mathcal{MB}^m
\end{equation}
where $\mathcal{MB}^m$ is the manufacturing budget of model $\mathcal{M}^m$. In essence, we reallocate the payment of lower valued data owners to become the extra compensation of higher valued data owners.

%In order to have better consistency with the excess population loss provided by the broker to the model buyers in Section \ref{subsubsec:MStoMB}, we need to restrict the size of subset $\bm{S}^m$. Furthermore, we assume the budget of the broker is enough to pay $|\bm{S}^m|$ data owners, otherwise the broker will not sell model $\mathcal{M}^m$ due to the negative revenue. For the ease of presentation, we first present the algorithms without considering the constraint of $|\bm{S}^m|$ and then adopt the algorithms satisfying the constraint of $|\bm{S}^m|$.

The above problem is difficult to be exactly solved. In fact, we prove the problem is NP-hard. Given this NP-hard complexity, we then present three approximation algorithms. First, we present a pseudo-polynomial time algorithm using dynamic programming technique. Then, we present a fully polynomial-time approximation scheme with the worst case bound if each data owner's compensation is not too large. Finally, we propose an enumeration guess based polynomial time approximation algorithm with the worst case bound by relaxing the compensation constraint, which uses the pseudo-polynomial time algorithm as a subroutine.

\partitle{NP-hardness proof}
We prove that BCMVP is NP-hard by showing that the well-known partition problem is polynomial time reducible to BCMVP.

\begin{definition}(\textbf{Decision Version of BCMVP})
Given a set $\bm{S}$ of $n$ data owners with their corresponding privacy compensation $\bm{bc}_1^m+\bm{ec}_1^m, \bm{bc}_2^m+\bm{ec}_2^m, ..., \bm{bc}_n^m+\bm{ec}_n^m$ and Shapley value $\mathcal{SV}_1^m, \mathcal{SV}_2^m, ..., \mathcal{SV}_n^m$, the decision version of BCMVP has the task of deciding whether there is a subset $\bm{S}_1 \subseteq \bm{S}$ such that $\sum_{i\in \bm{S}_1}\bm{bc}_i^m+\bm{ec}_i^m\leq B$ and $\sum_{i\in \bm{S}_1} \mathcal{SV}_i^m \geq V$.
\end{definition}

\begin{definition}(\textbf{Decision Version of Partition Problem})
Given a set $\bm{S}$ of $n$ positive integer values $v_1, v_2, ...,v_n$, the decision version of partition problem has the task of deciding whether the given set S can be partitioned into two subsets $\bm{S}_1$ and $\bm{S}_2$ such that the sum of the integers in $\bm{S}_1$ equals the sum of the integers in $\bm{S}_2$.
\end{definition}

\begin{theorem}
The decision version of BCMVP is an NP-hard problem.
\end{theorem}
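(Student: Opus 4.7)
The plan is to reduce the well-known NP-hard Partition problem to the decision version of BCMVP in polynomial time, using the standard subset-sum-style construction. Given an arbitrary Partition instance with positive integers $v_1,v_2,\ldots,v_n$ and total $T=\sum_{i=1}^n v_i$, I would build a BCMVP instance with $n$ data owners where for each $i$ the combined compensation is set to $\bm{bc}_i^m+\bm{ec}_i^m=v_i$ and the Shapley value is set to $\mathcal{SV}_i^m=v_i$, and then fix the budget $B=T/2$ and the target value $V=T/2$. If $T$ is odd the Partition instance is trivially a "no" and the reduction can emit any canonical "no" BCMVP instance; otherwise the construction above is used. Clearly this reduction runs in time linear in the size of the Partition input, so it is polynomial.

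Next I would verify correctness by showing the two-way equivalence between yes-instances. In the forward direction, if there exists $\bm{S}_1\subseteq\bm{S}$ with $\sum_{i\in\bm{S}_1} v_i=T/2$, the same $\bm{S}_1$ gives $\sum_{i\in\bm{S}_1}(\bm{bc}_i^m+\bm{ec}_i^m)=T/2\leq B$ and $\sum_{i\in\bm{S}_1}\mathcal{SV}_i^m=T/2\geq V$, so BCMVP answers yes. Conversely, if some $\bm{S}_1$ certifies a yes for BCMVP, then by construction $\sum_{i\in\bm{S}_1} v_i\leq T/2$ from the budget constraint and $\sum_{i\in\bm{S}_1} v_i\geq T/2$ from the value constraint, so equality holds and $(\bm{S}_1,\bm{S}\setminus\bm{S}_1)$ is a valid partition.

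Combining the two directions with the polynomial-time construction and invoking the NP-hardness of Partition yields the NP-hardness of the decision version of BCMVP. The main obstacle in this proof is essentially cosmetic rather than conceptual: one must choose parameters so that the budget upper bound and the value lower bound pinch together to force the exact-half condition. Identifying the two per-owner quantities with the same integer $v_i$ and tying $B=V=T/2$ accomplishes this cleanly; the only subtlety worth flagging in the write-up is the odd-$T$ corner case so that the reduction is total.
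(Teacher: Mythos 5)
Your reduction is correct and matches the paper's proof essentially verbatim: the same identification $\bm{bc}_i^m+\bm{ec}_i^m=\mathcal{SV}_i^m=v_i$ with $B=V=\frac{1}{2}\sum_i v_i$, and the same two-direction equivalence argument. The only difference is your explicit handling of the odd-total corner case, which the paper omits but which is a harmless (and slightly tidier) addition.
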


\begin{proof}
We show that there exists a polynomial reduction by proving that there exists a subset $\bm{S}_1 \subseteq \bm{S}$ such that $\sum_{i\in \bm{S}_1}\bm{bc}_i^m+\bm{ec}_i^m\leq B$ and $\sum_{i\in \bm{S}_1} \mathcal{SV}_i^m \geq V$ if and only if there is a partition $\bm{S}_1$ and $\bm{S}_2$ such that the sum of the integer values in $\bm{S}_1$ equals the sum of the integer values in $\bm{S}_2$. We construct the polynomial reduction as follows. Consider the following instance of BCMVP: $\bm{bc}_i^m+\bm{ec}_i^m=v_i$ and $\mathcal{SV}_i^m=v_i$ for $i=1,2,...,n$, and $B=V=\frac{1}{2}\sum_{i=1}^nv_i$.

We show the reduction as follows.

(1) If there exists a partition $\bm{S}_1$ and $\bm{S}_2$ such that the sum of the integer values in $\bm{S}_1$ equals to the sum of the integers in $\bm{S}_2$, there exists $\bm{S}_1$ and $\bm{S}_2$ such that $\sum_{i\in \bm{S}_1}v_i=\sum_{i\in \bm{S}_2}v_i=\frac{1}{2}\sum_{i=1}^nv_i$. We choose the set of data owners $\bm{S}_1$ in BCMVP and we have $\sum_{i\in \bm{S}_1}\bm{bc}_i^m+\bm{ec}_i^m=\sum_{i\in \bm{S}_1}v_i=\frac{1}{2}\sum_{i=1}^nv_i=B$ and $\sum_{i\in \bm{S}_1} \mathcal{SV}_i^m=\sum_{i\in \bm{S}_1}v_i=\frac{1}{2}\sum_{i=1}^nv_i = V$. Therefore, we know that there exists a subset $\bm{S}_1 \subseteq \bm{S}$ such that $\sum_{i\in \bm{S}_1}\bm{bc}_i^m+\bm{ec}_i^m\leq B$ and $\sum_{i\in \bm{S}_1} \mathcal{SV}_i^m \geq V$.

(2) If there exists a subset $\bm{S}_1 \subseteq \bm{S}$ such that $\sum_{i\in \bm{S}_1}\bm{bc}_i^m+\bm{ec}_i^m\leq B$ and $\sum_{i\in \bm{S}_1} \mathcal{SV}_i^m \geq V$, we partition the set $\bm{S}$ into $\bm{S}_1$ and $\bm{S}_2=\bm{S}-\bm{S}_1$. We have $\sum_{i\in \bm{S}_1}\bm{bc}_i^m+\bm{ec}_i^m=\sum_{i\in \bm{S}_1}v_i\leq B=\frac{1}{2}\sum_{i=1}^nv_i$ and $\sum_{i\in \bm{S}_1} \mathcal{SV}_i^m=\sum_{i\in \bm{S}_1}v_i \geq V=\frac{1}{2}\sum_{i=1}^nv_i$. This implies that $\sum_{i\in \bm{S}_1}v_i =\frac{1}{2}\sum_{i=1}^nv_i$. We also have $\sum_{i\in \bm{S}_2}v_i =\sum_{i=1}^nv_i-\frac{1}{2}\sum_{i=1}^nv_i=\frac{1}{2}\sum_{i=1}^nv_i$. Therefore, there exists a partition $\bm{S}_1$ and $\bm{S}_2$ such that $\sum_{i\in \bm{S}_1}v_i=\sum_{i\in \bm{S}_2}v_i=\frac{1}{2}\sum_{i=1}^nv_i$.
\end{proof}

\partitle{Pseudo-polynomial time algorithm}
We present a pseudo-polynomial time algorithm for BCMVP. Pseudo-polynomial means that our algorithm has the polynomial time complexity in terms of $\mathcal{MB}^m$ rather than the number of data owners $n$. We divide $\mathcal{MB}^m$ into $\lceil \frac{\mathcal{MB}^m}{a}\rceil$ parts, where $a$ is the greatest common divisor in $\bm{bc}_i^m+\bm{ec}_i^m$ for all $i=1,2,...,n$. We define $\mathcal{SV}[i,j]$ as the maximum BCMVP that can be attained with compensation budget $\leq j\times a$ by only using the first $i$ data owners. The detailed algorithm is shown in Algorithm \ref{Alg:DynamicProg}. In Line 5, if the compensation budget is not enough, we do not need to consider the $i^{th}$ data owner. Otherwise, we can take $\mathcal{O}_i$ if we can get more value by replacing some data owners from $\mathcal{O}_1,...,\mathcal{O}_{i-1}$ in Line 8.

\begin{algorithm}[thb] \caption{Pseudo-polynomial time algorithm for BCMVP.}\label{Alg:DynamicProg}
\SetKwInOut{Input}{input}\SetKwInOut{Output}{output}

\Input{$\bm{bc}_i^m+\bm{ec}_i^m$, $\mathcal{MB}^m$, and $\mathcal{SV}_i^m$ for $i=1,2,...,n$.}
\Output{$\bm{S}^m$.}

\For{j=0:a:$\mathcal{MB}^m$}{
$\mathcal{SV}[0,j]=0$\;
}

\For{i =1 to n}{
    \For{j=0:a:$\mathcal{MB}^m$}{
    \If{$\bm{bc}_i^m+\bm{ec}_i^m>j\times a$}{
    $\mathcal{SV}[i,j]=\mathcal{SV}[i-1,j]$\;}
    \Else{
    $\mathcal{SV}[i,j]=max\{\mathcal{SV}[i-1,j],\mathcal{SV}[i-1,j\times a-\bm{bc}_i^m+\bm{ec}_i^m]+\mathcal{SV}_i^m\}$\;}
    }
}
backtrack from $\mathcal{SV}[n,\lceil \frac{\mathcal{MB}^m}{a}\rceil]$ to $\mathcal{SV}[1,0]$ to find the selected $\mathcal{O}_i$\;
\end{algorithm}

\partitle{Polynomial-time approximation algorithm}
The time cost of the proposed pseudo-polynomial time algorithm in Algorithm \ref{Alg:DynamicProg} is extremely dominated by the compensation budget. We propose a simple yet efficient polynomial-time approximation algorithm in Algorithm \ref{Alg:greedy}, which is not sensitive to the compensation budget. We sort the data owners in decreasing order of Shapley value per compensation budget $\frac{\mathcal{SV}_i^m}{\bm{bc}_i^m+\bm{ec}_i^m}$ in Line 3. In Lines 6-8, we proceed to take the data owners, starting with as high as possible of $\frac{\mathcal{SV}_i^m}{\bm{bc}_i^m+\bm{ec}_i^m}$ until there is no budget. We also present a lower bound for Algorithm \ref{Alg:greedy} in Theorem \ref{the:polynomialTimeGreedy}, where $MAX$ is the maximum value that we can obtain in function (\ref{equ:WorNegotiableDPrestr}).

\begin{algorithm}[thb] \caption{Polynomial-time approximation algorithm for BCMVP.}\label{Alg:greedy}
\SetKwInOut{Input}{input}\SetKwInOut{Output}{output}

\Input{$\bm{bc}_i^m+\bm{ec}_i^m$, $\mathcal{MB}^m$, and $\mathcal{SV}_i^m$ for $i=1,2,...,n$.}
\Output{$\bm{S}^m$.}

\For{i=1 to n}{
compute $\frac{\mathcal{SV}_i^m}{\bm{bc}_i^m+\bm{ec}_i^m}$\;
}
sort $\frac{\mathcal{SV}_i^m}{\bm{bc}_i^m+\bm{ec}_i^m}$ for $i=1,2,...,n$ in decreasing order and denote as $\frac{\mathcal{SV}_1^m}{\bm{bc}_1^m+\bm{ec}_1^m}\geq \frac{\mathcal{SV}_2^m}{\bm{bc}_2^m+\bm{ec}_2^m}\geq ... \geq \frac{\mathcal{SV}_n^m}{\bm{bc}_n^m+\bm{ec}_n^m}$\;
B=0\;
i=1\;
\While{$B\leq \mathcal{MB}^m$}
{
add $\bm{bc}_i^m+\bm{ec}_i^m$ to B\;
i=i+1\;}
\Return the corresponding $\mathcal{O}_i$ of those $\bm{bc}_i^m+\bm{ec}_i^m$ in B\;
\end{algorithm}

\begin{theorem}\label{the:polynomialTimeGreedy}
If for all $i$, $\bm{bc}_i^m+\bm{ec}_i^m\leq \zeta \mathcal{MB}^m$, Algorithm \ref{Alg:greedy} has a lower bound guarantee $(1-\zeta)MAX$.
\end{theorem}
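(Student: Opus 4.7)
The plan is to reproduce the classical linear-programming relaxation argument for the knapsack greedy, instantiated to BCMVP. For brevity write $w_i := \bm{bc}_i^m + \bm{ec}_i^m$, $v_i := \mathcal{SV}_i^m$, $W := \mathcal{MB}^m$, and relabel indices so that after the sort in Line~3 of Algorithm~\ref{Alg:greedy} we have $v_1/w_1 \geq \cdots \geq v_n/w_n$. Let $k$ be the smallest index with $\sum_{i\leq k} w_i > W$ (if no such index exists the greedy takes everything and trivially achieves $\mathrm{MAX}$, so assume one does), so greedy returns $S=\{1,\ldots,k-1\}$ with value $G = \sum_{i\in S} v_i$. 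The first observation is that greedy spends nearly all of the budget: since $\sum_{i\leq k} w_i > W$ and the hypothesis gives $w_k \leq \zeta W$, we have $\sum_{i\in S} w_i > W - w_k \geq (1-\zeta)\,W$.

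Next I would pass to the continuous relaxation of BCMVP obtained by allowing $x_i \in [0,1]$ in place of $x_i \in \{0,1\}$. A standard exchange argument on the $v/w$-sorted order shows that its optimum is attained by $x_i = 1$ for $i\in S$ and $x_k = \alpha := (W - \sum_{i\in S} w_i)/w_k \in [0,1)$, yielding $\mathrm{LP}^{\ast} = G + \alpha\, v_k$. Because every integral feasible solution for BCMVP is also fractionally feasible, $\mathrm{MAX} \leq \mathrm{LP}^{\ast}$.

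The crux is then a mediant comparison. Since the sort guarantees $v_i/w_i \geq v_k/w_k$ for every $i\in S$, weighting by $w_i$ yields $G / \sum_{i\in S} w_i \geq v_k/w_k = (\alpha v_k)/(\alpha w_k)$. The mediant inequality therefore gives
\[
\frac{\mathrm{LP}^{\ast}}{W} \;=\; \frac{G + \alpha v_k}{\sum_{i\in S} w_i + \alpha w_k} \;\leq\; \frac{G}{\sum_{i\in S} w_i},
\]
which rearranges, together with the first step, to $G \geq \tfrac{\sum_{i\in S} w_i}{W}\,\mathrm{LP}^{\ast} \geq (1-\zeta)\,\mathrm{LP}^{\ast} \geq (1-\zeta)\,\mathrm{MAX}$, as claimed.

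The main obstacle is justifying the mediant step cleanly: one has to argue both that the LP optimum has the greedy prefix-plus-fraction structure (a short exchange argument exploiting the $v/w$ ordering) and that the value-per-unit-budget of the fractional optimum is bounded above by that of the integral greedy. Given those ingredients, the hypothesis $w_k \leq \zeta W$ does the rest of the work in a single line, so the remaining effort is really just careful bookkeeping of the ``break index'' $k$.
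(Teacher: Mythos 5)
Your proof is correct, and it is at heart the same classical knapsack-greedy analysis the paper uses: sort by $\mathcal{SV}_i^m/(\bm{bc}_i^m+\bm{ec}_i^m)$, locate the break index $k$, and exploit the hypothesis that the break item's compensation is at most a $\zeta$ fraction of $\mathcal{MB}^m$. The difference is in the bookkeeping. The paper upper-bounds $\mathrm{MAX}$ by the integral prefix value $\mathcal{SV}_1^m+\cdots+\mathcal{SV}_k^m$ and shows that the discarded term $\mathcal{SV}_k^m$ is at most a $\zeta$ fraction of that prefix; you instead upper-bound $\mathrm{MAX}$ by the fractional LP optimum $G+\alpha\,\mathcal{SV}_k^m$ and push the $(1-\zeta)$ loss through the budget utilization $\sum_{i\in S}w_i \geq (1-\zeta)\mathcal{MB}^m$ via the mediant inequality. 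Both yield the same bound, but your route has a concrete advantage: the paper's proof simply asserts ``Because $\mathcal{SV}_1^m+\cdots+\mathcal{SV}_k^m\geq \mathrm{MAX}$'' without justification (and contains a typo, comparing the prefix weight to $\sum_{i=1}^n\bm{bc}_i^m$ rather than to $\mathcal{MB}^m$), whereas your LP-relaxation step, together with the exchange argument for the structure of the fractional optimum, supplies exactly the missing justification for that assertion. The only loose ends on your side are minor: you should dispatch the degenerate case $\alpha=0$ (where $\mathrm{LP}^{\ast}=G$ and the claim is immediate) before invoking the mediant, and the exchange argument for the prefix-plus-fraction structure of the LP optimum, while standard, is only sketched.
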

\begin{proof}
We set $\bm{bc}_k^m+\bm{ec}_k^m$ as the first data that is not accepted in Algorithm \ref{Alg:greedy}, i.e., we choose the corresponding data owners of $\bm{bc}_1^m+\bm{ec}_1^m, \bm{bc}_2^m+\bm{ec}_2^m, ..., \bm{bc}_{k-1}^m+\bm{ec}_{k-1}^m$. For $1\leq i\leq k$, we have $\frac{\mathcal{SV}_i^m}{\bm{bc}_i^m+\bm{ec}_i^m}\geq \frac{\mathcal{SV}_k^m}{\bm{bc}_k^m+\bm{ec}_k^m}$.

$\Rightarrow~ \mathcal{SV}_i^m\geq (\bm{bc}_i^m+\bm{ec}_i^m)\frac{\mathcal{SV}_k^m}{\bm{bc}_k^m+\bm{ec}_k^m}$

$\Rightarrow~ \mathcal{SV}_1^m+\mathcal{SV}_2^m+...+\mathcal{SV}_k^m\geq (\bm{bc}_1^m+\bm{ec}_1^m+\bm{bc}_2^m+\bm{ec}_2^m+...+\bm{bc}_k^m+\bm{ec}_k^m)\frac{\mathcal{SV}_k^m}{\bm{bc}_k^m+\bm{ec}_k^m}$

Because we set $\bm{bc}_k^m+\bm{ec}_k^m$ as the first data that is not accepted, i.e., $\bm{bc}_1^m+\bm{ec}_1^m+\bm{bc}_2^m+\bm{ec}_2^m+...+\bm{bc}_k^m+\bm{ec}_k^m> \sum_{i=1}^n\bm{bc}_i^m$, we have

$\Rightarrow~ \mathcal{SV}_k^m\leq (\mathcal{SV}_1^m+\mathcal{SV}_2^m+...+\mathcal{SV}_k^m)\frac{\bm{bc}_k^m+\bm{ec}_k^m}{\sum_{i=1}^n\bm{bc}_i^m}$

$\Rightarrow~ \mathcal{SV}_k^m\leq \zeta(\mathcal{SV}_1^m+\mathcal{SV}_2^m+...+\mathcal{SV}_k^m)$

$\Rightarrow~ \mathcal{SV}_k^m\leq \frac{\zeta(\mathcal{SV}_1^m+\mathcal{SV}_2^m+...+\mathcal{SV}_{k-1}^m)}{1-\zeta}$

Because $\mathcal{SV}_1^m+\mathcal{SV}_2^m+...+\mathcal{SV}_k^m\geq MAX$, we have $\mathcal{SV}_1^m+\mathcal{SV}_2^m+...+\mathcal{SV}_{k-1}^m\geq (1-\zeta)MAX$. Therefore, Algorithm \ref{Alg:greedy} has a lower bound guarantee $(1-\zeta)MAX$.
\end{proof}

\begin{lemma}\label{lemma:opt}
There are at most $\lceil \frac{1}{\alpha}\rceil$ data owners having compensation $\bm{bc}_i^m+\bm{ec}_i^m$ such that their corresponding Shapley value $\mathcal{SV}_i^m$ is at least $\alpha MAX$ in any optimal solution.
\end{lemma}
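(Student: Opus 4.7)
The plan is to prove this by a simple counting/pigeonhole argument using the definition of $MAX$ as the optimal objective value of problem (\ref{equ:WorNegotiableDPrestr}). Let $\bm{S}^m_{\text{opt}}$ denote any optimal solution, so by definition $\sum_{i \in \bm{S}^m_{\text{opt}}} \mathcal{SV}_i^m = MAX$.

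Next I would introduce the set $T \subseteq \bm{S}^m_{\text{opt}}$ of data owners in this optimal solution whose Shapley value satisfies $\mathcal{SV}_i^m \geq \alpha \cdot MAX$, and let $k = |T|$. Summing the Shapley values over $T$ gives $\sum_{i \in T} \mathcal{SV}_i^m \geq k \alpha \cdot MAX$. Because Shapley values are nonnegative (each marginal utility in Equation (\ref{equ:shapleyValue}) is computed against a monotone utility), removing the elements of $\bm{S}^m_{\text{opt}} \setminus T$ only decreases the total, so $\sum_{i \in T} \mathcal{SV}_i^m \leq \sum_{i \in \bm{S}^m_{\text{opt}}} \mathcal{SV}_i^m = MAX$. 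Chaining these two inequalities yields $k \alpha \cdot MAX \leq MAX$, hence $k \leq 1/\alpha$, and since $k$ is an integer, $k \leq \lceil 1/\alpha \rceil$.

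The argument requires essentially no heavy machinery, so the ``hard part'' is really just being careful about two modeling points. First, one must observe that $T$ is a subset of a feasible (in fact optimal) solution, so the claim is about what can appear inside any single optimal solution, not about the whole data owner pool. Second, one must justify that Shapley contributions being nonnegative (or, if one prefers, that the objective in (\ref{equ:WorNegotiableDPrestr}) is additive and monotone) is what lets us drop the contribution of $\bm{S}^m_{\text{opt}} \setminus T$ from the sum. If in a degenerate instance some $\mathcal{SV}_i^m$ were negative, one could handle it by noting that such owners would be strictly disadvantageous to include in an optimal solution, contradicting optimality, so they may be assumed absent from $\bm{S}^m_{\text{opt}}$.
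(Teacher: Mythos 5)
Your proposal is correct and is essentially the paper's own argument: the paper dispenses with the lemma in one sentence by noting that otherwise the optimal solution's value would exceed $MAX$, which is exactly the contrapositive of your counting bound $k\,\alpha\, MAX \leq \sum_{i\in T}\mathcal{SV}_i^m \leq MAX$. Your additional care about nonnegativity of the $\mathcal{SV}_i^m$ (needed to drop the terms outside $T$) is a reasonable point the paper glosses over, but it does not change the route.
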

Lemma \ref{lemma:opt} is easy to see, otherwise, the optimal solution value is larger than $MAX$, which is a contradiction.

\partitle{Enumeration guess based polynomial time approximation algorithm}
Although Algorithm \ref{Alg:greedy} can achieve ($1-\zeta)MAX$, the requirement of  $\bm{bc}_i^m+\bm{ec}_i^m\leq \zeta \mathcal{MB}^m$ is too strict. We present another algorithm with the same worst case bound but without the above requirement. Let $\alpha\in (0,1)$ be a fixed constant and $h=\lceil \frac{1}{\alpha}\rceil$. We will try to guess the $h$ most profitable data owners in an optimal solution and compute the rest greedily as in Algorithm \ref{Alg:greedy}. The detailed algorithm is shown in Algorithm \ref{Alg:GuessGreedy} as follows. We first enumerate all the subsets with data owner size $\leq h$ in Lines 1-3. We delete those subsets with higher compensation budget than $\mathcal{MB}^m$ in Lines 4-6. In Lines 7-10, for each remaining subset, we call Algorithm \ref{Alg:greedy} to maximize the value with the remaining budget after taking the $\leq h$ data owners.

\begin{algorithm}[thb] \caption{Guess and Polynomial-time approximation algorithm for BCMVP.}\label{Alg:GuessGreedy}
\SetKwInOut{Input}{input}\SetKwInOut{Output}{output}

\Input{$\bm{bc}_i^m+\bm{ec}_i^m$, $\mathcal{MB}^m$, and $\mathcal{SV}_i^m$ for $i=1,2,...,n$.}
\Output{$\bm{S}^m$.}

\For{i=1 to h}{
choose $i$ data owner(s) to compose a subset $\bm{S}'$\;
}
we have $\sum_{i=1}^h {n\choose i}$ such subsets\;
\For{j=1 to $\sum_{i=1}^h {n\choose i}$}{
compute the compensation budget of the data owners in $\bm{S}'$\;
delete those $\bm{S}'$ if their compensation budget is larger than $\mathcal{MB}^m$\;
}

we have $r$ remaining subsets $\bm{S}'_1,\bm{S}'_2,...,\bm{S}'_r$\;
\For{each subset $\bm{S}'_j$, $j=1,2,...,r$}{
let $\mathcal{O}_a$ be the data owner with the least Shapley value in $\bm{S}'_j$, remove all data owners in $\bm{S}_j-\bm{S}'_j$ if their Shapley value is larger than $\mathcal{SV}_a^m$ and get a new subset $\bm{S}''_j$\;
run Algorithm \ref{Alg:greedy} in $\bm{S}''_j$ with remaining compensation budget $\mathcal{MB}^m-\sum_{i=1}^{|\bm{S}'_j|} (\bm{bc}_i^m+\bm{ec}_i^m)$\;}
\Return the data owners in $\bm{S}'_j$ and $\bm{S}''_j$, where $\bm{S}'_j$ and $\bm{S}''_j$ have the highest Shapley value among $j=1,2,...,r$\;
\end{algorithm}

\begin{theorem}
  Algorithm \ref{Alg:GuessGreedy} runs in $O(n^{\lceil \frac{1}{\alpha}\rceil})$ time with ($1-\alpha$)MAX worst case bound.
\end{theorem}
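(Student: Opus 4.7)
The plan is to prove the two claims separately: the $O(n^{\lceil 1/\alpha\rceil})$ running time follows from a direct enumeration count, while the $(1-\alpha)MAX$ bound is obtained by pinpointing a single ``lucky'' iteration of the outer loop of Algorithm \ref{Alg:GuessGreedy} whose output is guaranteed to hit this bound.

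For the running time, the outer loop enumerates every subset of $\{\mathcal{O}_1,\ldots,\mathcal{O}_n\}$ of size at most $h=\lceil 1/\alpha\rceil$, giving $\sum_{i=1}^h \binom{n}{i}=O(n^h)$ iterations. Each iteration performs one budget check (linear in $n$), the filtering of Line 9 (also linear), and one invocation of the polynomial-time Algorithm \ref{Alg:greedy}, whose dominant cost is $O(n\log n)$. Multiplying and absorbing the $n\log n$ factor into the enumeration term yields the stated $O(n^{\lceil 1/\alpha\rceil})$ bound.

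For the approximation guarantee, fix an optimal solution $OPT^\star$ of value $MAX$. If $|OPT^\star|\leq h$, then $OPT^\star$ itself is one of the enumerated subsets $\bm{S}'_j$ and is feasible; the greedy phase can only add value, so the output is at least $MAX$. Otherwise $|OPT^\star|>h$; let $H^\star$ consist of the $h$ data owners of $OPT^\star$ having the largest Shapley values, and set $L^\star=OPT^\star\setminus H^\star$. Because the $h$ values in $H^\star$ sum to at most $MAX$, an averaging argument gives $\min_{i\in H^\star}\mathcal{SV}_i^m\leq MAX/h\leq \alpha\, MAX$. Consider the iteration $\bm{S}'_j=H^\star$; after Line 9 filters out every data owner with Shapley value exceeding $\min_{i\in H^\star}\mathcal{SV}_i^m$, all surviving candidates have Shapley value at most $\alpha\, MAX$, and $L^\star$ remains feasible for the restricted subproblem with residual budget $B'=\mathcal{MB}^m-\sum_{i\in H^\star}(\bm{bc}_i^m+\bm{ec}_i^m)$.

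Apply the classical fractional-knapsack greedy analysis to the call to Algorithm \ref{Alg:greedy} in this iteration: if $G$ denotes the returned value and $v_k$ the Shapley value of the first rejected item, then $G+v_k$ dominates the LP relaxation of the restricted knapsack, which in turn dominates $\sum_{i\in L^\star}\mathcal{SV}_i^m$. Combined with the bound $v_k\leq \alpha\, MAX$ established above, this gives $G\geq \sum_{i\in L^\star}\mathcal{SV}_i^m-\alpha\, MAX$, so the iteration returns total value at least $\sum_{i\in H^\star}\mathcal{SV}_i^m+G\geq MAX-\alpha\, MAX=(1-\alpha)MAX$; since Algorithm \ref{Alg:GuessGreedy} keeps the best iteration, the guarantee follows. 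The main subtlety, and the step I expect to require the most care, is the choice of $H^\star$: naively taking the items of $OPT^\star$ with $\mathcal{SV}^m\geq \alpha\, MAX$ (as offered by Lemma \ref{lemma:opt}) only lower-bounds $\min_{H^\star}$ by $\alpha\, MAX$ and does not control the filtering threshold, whereas taking exactly the top-$h$ items of $OPT^\star$ and averaging inside $H^\star$ delivers the upper bound $\min_{H^\star}\leq\alpha\, MAX$ that is needed to cap $v_k$.
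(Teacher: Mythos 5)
Your argument follows essentially the same route as the paper's: enumerate all subsets of size at most $h=\lceil 1/\alpha\rceil$, single out the iteration that guesses the $h$ most valuable members of an optimal solution, and apply the density-greedy analysis to the filtered remainder, bounding the Shapley value of the first rejected item by $\alpha\,MAX$. Where you differ is in how that last bound is justified. The paper appeals to Lemma \ref{lemma:opt} (``the at most $\lceil 1/\alpha\rceil$ owners with value at least $\alpha MAX$ are already pruned in Line 9''), which does not by itself control the filtering threshold: Line 9 only removes owners whose value exceeds $\min_{i\in\bm{S}'_j}\mathcal{SV}_i^m$, and Lemma \ref{lemma:opt} places no upper bound on that minimum. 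Your averaging argument --- the top $h$ values of the optimal solution sum to at most $MAX$, hence their minimum is at most $MAX/h\leq\alpha\,MAX$ --- supplies exactly the missing upper bound on the threshold, so on this point your write-up is the more rigorous of the two, and you correctly flag it as the delicate step. The rest (feasibility of $L^\star$ under the residual budget, survival of $L^\star$ through the Line 9 filter, the $G+v_k$ domination of the fractional optimum) matches the paper's reasoning.

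The one place your proof does not hold up is the running time. The enumeration has $\Theta(n^{h})$ iterations and each one invokes Algorithm \ref{Alg:greedy}, which costs at least linear (and with sorting, $O(n\log n)$) time per call; a multiplicative per-iteration cost cannot be ``absorbed'' into the iteration count, so the honest total is $O(n^{h+1})$ (or $O(n^{h+1}\log n)$), not $O(n^{h})$. Note that the paper's own proof derives exactly $O(n^{\lceil 1/\alpha\rceil+1})$, which is already in tension with the $O(n^{\lceil 1/\alpha\rceil})$ claimed in the theorem statement; your derivation is correct up to the final step, and you should report the extra factor of $n$ rather than discard it.
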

\begin{proof}
For the time complexity, we have at most $\sum_{i=1}^h {n\choose i}$ subsets $\bm{S}'$ after deleting those subsets if their compensation budget is larger than $\mathcal{MB}^m$. That is, we have at most $n^h$ different subsets $\bm{S}'$. For each subset $\bm{S}'$, the greedy Algorithm \ref{Alg:greedy} only requires linear time to handle the remaining data owners. Therefore, the total time cost for Algorithm \ref{Alg:GuessGreedy} is $O(n^{\lceil \frac{1}{\alpha}\rceil+1})$.

For the worst case approximation bound, we assume subset $\bm{S}'$ in the optimal solution has exact $h$ data owners. We note that subset $\bm{S}'$ in the optimal solution may have $\leq h$ data owners, but it is easy to see that this does not affect the complexity analysis. If the number of data owners in the optimal solution is less than $h$, the optimal solution will be included in $\bm{S}'$. In the following, we discuss the case that the number of data owners in the optimal solution is larger than $h$.

We have $h+k$ data owners $\mathcal{O}_1,...,\mathcal{O}_h,\mathcal{O}_{h+1},...,\mathcal{O}_{h+k-1},\mathcal{O}_{h+k}$ that need to be considered, where $\mathcal{O}_1,...,\mathcal{O}_h$ are the data owners in subset $\bm{S}'$, $\mathcal{O}_{h+i}$ is the $i^{th}$ data owner with the highest $\frac{\mathcal{SV}_i^m}{\bm{bc}_i^m+\bm{ec}_i^m}$ in $\bm{S}''$. $\mathcal{O}_{h+k}$ is the data owner with the highest $\frac{\mathcal{SV}_i^m}{\bm{bc}_i^m+\bm{ec}_i^m}$ rejected by the greedy algorithm of Algorithm \ref{Alg:greedy}. Let $MAX'$ be the optimal value for the data owners in $\bm{S}''$. Therefore, we have $\mathcal{SV}(\bm{S}'')+\mathcal{SV}_{h+k}^m\geq MAX'$.

$\Rightarrow~\mathcal{SV}(\bm{S}'')\geq MAX'-\mathcal{SV}_{h+k}^m$

Based on Lemma \ref{lemma:opt}, there are at most $\lceil \frac{1}{\alpha}\rceil$ data owners having compensation $\bm{bc}_i^m+\bm{ec}_i^m$ such that their corresponding Shapley value $\mathcal{SV}_i^m$ is at least $\alpha MAX$ in any optimal solution, and those $\lceil \frac{1}{\alpha}\rceil$ data owners are already pruned in Line 9. Therefore, we have $\mathcal{SV}_{h+k}^m\leq \alpha MAX$.

$\Rightarrow~\mathcal{SV}(\bm{S}'')\geq MAX'-\alpha MAX$

$\Rightarrow~\mathcal{SV}(\bm{S}')+ \mathcal{SV}(\bm{S}'')\geq MAX'+\mathcal{SV}(\bm{S}')-\alpha MAX$

$\Rightarrow~\mathcal{SV}(\bm{S}')+ \mathcal{SV}(\bm{S}'')\geq MAX-\alpha MAX$

That is, Algorithm \ref{Alg:GuessGreedy} has the worst case bound $(1-\alpha)MAX$.
\end{proof}

\subsubsection{Market Survey to Potential Model Buyers}\label{subsubsec:MStoMB}
In the previous subsection, the broker requires the model budget as a constraint to manufacture the models, which is before the models are available to the model buyers. To acquire the budget variable, a common practice is to perform a market survey to collect purchasing willingness from potential model buyers. That is, the broker presents a series of potential models along with their performance estimation to the potential model buyers, who then provide which model they are willing to purchase and at what price. Based on the survey result, the broker can estimate the budget by solving a revenue maximization problem in the next subsection. The market survey stage is sometimes the earliest stage among the overall market dynamics.

In the following part, we propose a survey approach by overcoming two difficulties. First, the broker encounters different standards for categorizing the tier of each model. During the manufacture, each data owner uses a differential privacy budget to differentiate the model tier, the model buyers, however, are unlikely to care about the restriction in the privacy of the model they purchase. On the contrary, it is the model prediction performance that they pay attention to. Thus, to sell models to the model buyers, the broker needs to transit the $\epsilon^m$-DP based model description to the prediction performance based model description. This raises the second difficulty: the Shapley value based utility measure is not available at the survey stage (the data may not even been collected yet). To overcome both difficulties, we utilize a common estimation of utility for the DP ERM models, which converts the DP parameter to a general excess population loss by assuming all data samples are identically independently distributed. It also reveals the relation between the number of training samples and the utility estimation, which provides a guide to the data collection.

% Even it exists, as mentioned in the previous sections, it is calculated based on the accuracy estimation on the test set, but the accuracy metric can vary from buyer to buyer, so it may not appropriate to set price of the model based on accuracy metric.

For training each model $\mathcal{M}^m$ subject to DP restriction $\epsilon^m$, the broker uses a subset $\bm{S}^m$ out of all data available in the market $\bm{D}$, whose data owners have $ \epsilon \geq \epsilon^m $. Recall that the full dataset $\bm{D}$ is from distribution $\mathcal{D}$, i.e., $\bm{D}\sim \mathcal{D}$. For the model buyers, they care about the performance of the model on their prediction tasks. That is, for $\bm{z}_{predict} = (\bm{x}_{predict},y_{predict})$, where $\bm{z}_{predict}\sim \mathcal{D}$, the broker estimates the value of a particular tier of model by estimating $l(\bm{w}^m,\bm{z}_{predict})$. To formalize it, we utilize the notion of \emph{population loss} as follows,
\begin{definition}(Population Loss)
\begin{equation}
    \mathcal{L}(\bm{w};\mathcal{D}) := \mathbb{E}_{\bm{z}\sim \mathcal{D}}[l(\bm{w},\bm{z})],
\end{equation}
where the expectation is over the distribution of the data.
\end{definition}
Thus, it measures the expected prediction loss of a model $\mathcal{M}^m$ when given the output $\bm{w}^m$. With the population risk, the broker provides the maximum discrepancy between the ideal model with model parameter $\bm{w}^*$ and the DP one for sale $\bm{w}^m_{DP}$. The following \emph{excess population loss} notion formalizes this discrepancy,
\begin{definition} (Excess Population Loss \cite{bassily2019private})\label{def:excessPopuLoss}
\begin{equation}
    \Delta\mathcal{L}(\mathcal{A}^m_{Algorithm \ref{Alg:DPtraining}};\bm{S}^m) := \mathbb{E}[\mathcal{L}(\bm{w}^m_{DP};\mathcal{D}) - \mathcal{L}(\bm{w}^*;\mathcal{D})],
\end{equation}
where $\bm{w}^* = \arg\min \mathcal{L}(\bm{w},\mathcal{D}),\ s.t.\ \bm{w}\in\mathcal{W}$ (model parameter space), $\mathcal{A}^m_{Algorithm\ref{Alg:DPtraining}}$ denotes the DP algorithm in Algorithm \ref{Alg:DPtraining} for training model $\mathcal{M}^m$ with DP restricted dataset $\bm{S}^m$, and the expectation is taken over the randomness of Algorithm \ref{Alg:DPtraining}.
\end{definition}

The specific accuracy measure can vary from application to application and be chosen according to the model buyers. The order of the population loss is more universal and general than a specific choice of accuracy metric. Also, we believe the order makes more sense than a particular number reported on a given testing dataset. Thus, the excess population risk serves as a good estimation of utility at the market survey stage.

The following theorem from \cite{bassily2019private} provides an estimate for the excess population loss for model $\mathcal{M}^m$.

\begin{theorem}\label{the:excessPopuLoss}
   Under certain conditions, the excess population loss for the output $\bm{w}_{DP}^m$ of the objective perturbation based training algorithm $\mathcal{A}^m_{Algorithm \ref{Alg:DPtraining}}$ is
  \begin{equation}
      \Delta\mathcal{L}(\mathcal{A}^m_{Algorithm \ref{Alg:DPtraining}};\epsilon^m,\bm{S}^m) = O(\max\{\frac{1}{\sqrt{|\bm{S}^m|}},\frac{\sqrt{d\log(1/\delta)}}{\epsilon ^m |\bm{S}^m|}\}).
  \end{equation}
\end{theorem}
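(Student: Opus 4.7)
The plan is to invoke the standard decomposition of the excess population loss into a generalization term and an excess empirical risk term, and then bound each piece separately, importing the mechanics of \cite{bassily2019private}. Concretely, for any $\bm{w}$ let $\hat{\mathcal{L}}(\bm{w};\bm{S}^m) := \frac{1}{|\bm{S}^m|}\sum_{\bm{z}\in\bm{S}^m}l(\bm{w};\bm{z})$ denote the empirical loss and write
\begin{align*}
\mathcal{L}(\bm{w}^m_{DP};\mathcal{D}) - \mathcal{L}(\bm{w}^*;\mathcal{D})
&= \bigl[\mathcal{L}(\bm{w}^m_{DP};\mathcal{D}) - \hat{\mathcal{L}}(\bm{w}^m_{DP};\bm{S}^m)\bigr] \\
&\quad + \bigl[\hat{\mathcal{L}}(\bm{w}^m_{DP};\bm{S}^m) - \hat{\mathcal{L}}(\hat{\bm{w}}^*;\bm{S}^m)\bigr] \\
&\quad + \bigl[\hat{\mathcal{L}}(\hat{\bm{w}}^*;\bm{S}^m) - \mathcal{L}(\bm{w}^*;\mathcal{D})\bigr],
\end{align*}
where $\hat{\bm{w}}^*$ is the empirical risk minimizer on $\bm{S}^m$. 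Taking expectations over $\bm{S}^m\sim\mathcal{D}^{|\bm{S}^m|}$ and the randomness of Algorithm \ref{Alg:DPtraining}, the third bracket is non-positive, so only the first two brackets have to be controlled.

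For the generalization (first) bracket, the idea is to use the uniform stability of the objective-perturbation procedure. Because $l$ is $L$-Lipschitz and the regularizer $\lambda\|\bm{w}\|_2^2$ is $\mu$-strongly convex, a standard stability argument (e.g.\ Bousquet-Elisseeff) gives a stability parameter of order $L^2/(\lambda |\bm{S}^m|)$; the additional Gaussian perturbations $\bm{N}_1,\bm{N}_2$ do not degrade stability because they are data-independent conditional on the objective and because the projection onto $\mathcal{W}$ is non-expansive. This yields a generalization term of order $L^2/(\lambda|\bm{S}^m|)$, which I will later balance against the excess empirical risk via the choice of $\lambda$.

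For the empirical (second) bracket, I would reuse the excess-empirical-risk analysis of approximate minima perturbation from \cite{bassily2019private,iyengar2019towards}. The calibration $\sigma_1 = 20L^2\log(1/\delta)/\epsilon^2$ and $\sigma_2 = 40\alpha\log(1/\delta)/(\lambda\epsilon^2)$ guarantees $(\epsilon^m,\delta)$-DP, and a Lipschitz/smoothness computation on $\mathcal{L}_{OP}$ at its $\alpha$-approximate minimizer $\hat{\bm{w}}$, combined with the strong convexity introduced by the regularizer, produces an empirical excess risk of order
\[
\alpha \;+\; \frac{L^2\|\bm{N}_1\|_2^2}{\lambda|\bm{S}^m|^2} \;+\; L\,\mathbb{E}\|\bm{N}_2\|_2 \;=\; O\!\Big(\alpha + \frac{d\,L^2\log(1/\delta)}{\lambda\,\epsilon^2\,|\bm{S}^m|^2}\Big),
\]
after taking expectations of the Gaussian norms in $\mathbb{R}^d$ and tuning $\alpha$ small enough to be dominated by the noise terms.

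Adding the two brackets gives an upper bound of the form $L^2/(\lambda|\bm{S}^m|) + d L^2\log(1/\delta)/(\lambda\epsilon^2|\bm{S}^m|^2)$, and the final step is to optimize over $\lambda>0$. Choosing $\lambda$ proportional to $\max\{1/\sqrt{|\bm{S}^m|},\sqrt{d\log(1/\delta)}/(\epsilon|\bm{S}^m|)\}$ balances the two contributions and yields
\[
\Delta\mathcal{L}(\mathcal{A}^m_{Algorithm\ \ref{Alg:DPtraining}};\epsilon^m,\bm{S}^m)
= O\!\Big(\max\Big\{\tfrac{1}{\sqrt{|\bm{S}^m|}},\tfrac{\sqrt{d\log(1/\delta)}}{\epsilon^m|\bm{S}^m|}\Big\}\Big),
\]
matching the claim. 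The main obstacle, and the reason the result is quoted rather than re-derived in full, is the empirical-risk bound for approximate minima perturbation: carefully relating $\hat{\bm{w}}$ (which is only an $\alpha$-approximate minimizer of the perturbed objective, not the true minimizer) to $\hat{\bm{w}}^*$ under strong convexity, while simultaneously accounting for the two independent Gaussian noise injections and the projection step, is the delicate part; everything else reduces to textbook stability and calibration calculations.
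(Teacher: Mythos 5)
The paper does not actually prove this statement: Theorem~\ref{the:excessPopuLoss} is imported verbatim from \cite{bassily2019private} (the text introducing it reads ``the following theorem from \cite{bassily2019private}\ldots''), so there is no in-paper argument to compare against. Judged on its own terms, your sketch follows the standard architecture behind that result --- split the excess population loss into a generalization gap plus an excess empirical risk, control the first by uniform stability of the strongly convex regularized objective and the second by the approximate-minima-perturbation analysis, then tune $\lambda$ --- and that is indeed the right skeleton.

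There is, however, one concrete logical gap in the final step. The two quantities you propose to ``balance,'' namely $L^2/(\lambda|\bm{S}^m|)$ and $dL^2\log(1/\delta)/(\lambda\epsilon^2|\bm{S}^m|^2)$, are \emph{both} monotonically decreasing in $\lambda$, so ``optimizing over $\lambda$'' as written is vacuous: the sum is minimized by sending $\lambda\to\infty$, which obviously cannot yield the stated rate. What is missing is the regularization bias. Your comparator $\hat{\bm{w}}^*$ is the unregularized empirical minimizer, while $\hat{\bm{w}}$ is an approximate minimizer of the \emph{regularized} perturbed objective; relating the two introduces an additive term of order $\lambda\|\bm{w}^*\|_2^2\leq\lambda D^2$ (with $D$ the diameter of $\mathcal{W}$), which is the increasing-in-$\lambda$ term that makes the trade-off well posed. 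Minimizing $\lambda D^2+L^2/(\lambda|\bm{S}^m|)+dL^2\log(1/\delta)/(\lambda\epsilon^2|\bm{S}^m|^2)$ gives $\lambda\propto(L/D)\max\{1/\sqrt{|\bm{S}^m|},\,\sqrt{d\log(1/\delta)}/(\epsilon^m|\bm{S}^m|)\}$ and recovers the claimed bound. Your choice of $\lambda$ and your final rate are correct, but the justification for that choice requires the $\lambda D^2$ term you omitted. A smaller caveat: the claim that the third bracket is non-positive in expectation holds only because you took $\hat{\bm{w}}^*$ to be the \emph{unregularized} empirical minimizer; that is exactly why the regularization bias then has to surface in the second bracket rather than disappear.
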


From the above theorem, we can see that the model price dominated by the excess population loss is an increasing function of $\epsilon$ because of two reasons: 1) from the model buyer perspective, the larger the $\epsilon$ (i.e., the looser the privacy restriction), the higher the model utility, which will better meet the model buyers' usage. It is consistent with the common belief that a better product will cost more. 2) from the data owner perspective, less owners are willing to allow their data to be used for looser privacy protected models, which leads to less contributors without extra compensation and the broker will have increased manufacturing cost in order to recruit more data owners. It is also consistent with the common belief that the product with higher manufacturing cost should be more expensive.

In addition to the privacy budget, the training set size provides an important role. First, in the previous model based pricing paper \cite{DBLP:conf/sigmod/ChenK019}, the authors claim that the versioning of the various quality of models does not result into the extra cost for the broker, is however not true according to the $|\bm{S}^m|$ in the above theorem. In fact, the manufacturing cost actually increases for higher tier models (i.e., the one with a larger $\epsilon ^m$). That is, less data owners are will to contribute their data for models with larger $\epsilon ^m$ without extra compensation. Thus, the broker has to spend more manufacturing cost on recruiting more data owners for model training (i.e., for extra compensation), otherwise the smaller $\bm{S}^m$ will lead to lower model utility despite the increasing $\epsilon$. Second, the $|\bm{S}^m|$ in the above theorem also provides a good guidance to the broker on the data collection phase, i.e. at least how much data owners the broker has to engage with.

\subsubsection{Pricing Models for Revenue Maximization with Arbitrage-free Guarantee}
Before the market survey, the broker provides the excess population risk estimation for each DP-model to the $K'$ survey participants, who are potential model buyers and are interested in the model performance rather than the privacy risk. Each participant $B_k$ is asked to provide which model they want to purchase (target) $\bm{tm}_k$, and at what price $\bm{v}_k$. To make the arbitrage-free pricing $p(\epsilon^m)$ for model $\mathcal{M}^m$ with differential privacy $\epsilon^m$, the objective function for the broker is
\begin{gather}
    \arg\max_{\langle p(\epsilon ^1),...,p(\epsilon^M)\rangle} \sum_{m=1}^M \sum_{k=1}^{K'} p(\epsilon^m) \cdot \mathbb{I}(\bm{tm}_k==m)\cdot\mathbb{I}(p(\epsilon^m) \leq \bm{v}_k),\\ \label{ObFun:RM}
    s.t.\ p(\epsilon ^m) + p(\epsilon ^{m'}) \geq p(\epsilon ^m + \epsilon ^{m'}),\ \epsilon ^{m},\epsilon ^{m'} \geq 0,\\ \label{ObFun:subadd}
    p(\epsilon ^m) \geq p(\epsilon ^{m'}) \geq 0,\ \epsilon ^{m} \geq \epsilon ^{m'} \geq 0
\end{gather}
We refer this problem as Revenue Maximization ($\mathcal{RM}$) problem and the optimal revenue for $\mathcal{RM}$ as $OPT(\mathcal{RM})$. We use $(m,sp^m[j])$ to denote the $j^{th}$ lowest survey price point in the $m^{th}$ model. For example, in Figure \ref{fig:rmalgorithm}, we have six survey participants shown in black disk $(1,sp^1[1]=1)$, $(1,sp^1[2]=4)$, $(2,sp^2[1]=3)$, $(2,sp^2[2]=7)$, $(3,sp^3[1]=5)$, and $(3,sp^3[2]=8)$. In general, the smaller the $\epsilon$, the stricter the privacy restriction, which results to lower model price. The reason for this price trend will be revealed by the next subsection.

Given a set of survey price points $(m,sp^m[1])$ for $m=1,2,...,M$, does there exist a pricing function $p(\epsilon^m)$ such that 1) is positive, monotone, and subadditive; and 2) ensures $p(\epsilon^m)=sp^m[1]$ for all $m=1,2,...,M$, which is a co-NP hard problem \cite{DBLP:conf/sigmod/ChenK019}. It is easy to see that this co-NP hard problem is a special case of our $\mathcal{RM}$ problem, i.e., there is only one survey price point for each model. Therefore, it is suspected that there is no polynomial-time algorithm for the proposed $\mathcal{RM}$ problem.

In order to overcome the hardness of the original optimization $\mathcal{RM}$ problem, we seek to approximately solve the problem by relaxing the subadditivity constraint. We relax the constraint of $p(\epsilon^m)+p(\epsilon^{m'})\geq p(\epsilon^m+\epsilon^{m'})$ in Equation (\ref{ObFun:subadd}) to
\begin{equation}
    p(\epsilon^m)/\epsilon^m\geq p(\epsilon^{m'})/\epsilon^{m'}
\end{equation}
which still satisfies the requirement of arbitrage-free. We refer this relaxed problem as Relaxed Revenue Maximization ($\mathcal{RRM}$) problem. Generally speaking, we want to make sure that the unit price for large purchases is smaller than or equals to the unit price for small purchases, which is practical in the real marketplaces.

In the following, we show the maximum revenue for $\mathcal{RRM}$, $OPT(\mathcal{RRM})$, has a lower bound with respect to the maximum revenue for $\mathcal{RM}$, $OPT(\mathcal{RM})$.
\begin{theorem}
  The maximum revenue for $\mathcal{RM}$ has the following relationship with the maximum revenue for $\mathcal{RRM}$
  \begin{displaymath}
 OPT(\mathcal{RRM})\geq  OPT(\mathcal{RM})/2
  \end{displaymath}
\end{theorem}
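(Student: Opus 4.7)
The plan is a rounding argument: I take an optimal pricing $p^\ast$ for $\mathcal{RM}$ and round it to a pricing $\tilde{p}$ that sits inside the stricter $\mathcal{RRM}$ feasible region while preserving at least half the revenue model-by-model. As a preliminary observation, the relaxed condition ``$p(\epsilon)/\epsilon$ nonincreasing'' is actually a strengthening of subadditivity: for $0<\epsilon_1\le\epsilon_2$, the inequalities $p(\epsilon_1)\ge (\epsilon_1/(\epsilon_1+\epsilon_2))\,p(\epsilon_1+\epsilon_2)$ and its symmetric partner for $\epsilon_2$ sum to $p(\epsilon_1)+p(\epsilon_2)\ge p(\epsilon_1+\epsilon_2)$. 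So $\mathcal{RRM}$'s feasible set is contained in $\mathcal{RM}$'s, which makes $OPT(\mathcal{RRM})\le OPT(\mathcal{RM})$ trivial and turns the theorem into a bound on the cost of this restriction.

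Sort the models so that $\epsilon^1\le\epsilon^2\le\dots\le\epsilon^M$ and let $u_m:=p^\ast(\epsilon^m)/\epsilon^m$. Define
\[
\tilde{p}(\epsilon^m)\;:=\;\epsilon^m\cdot\min_{1\le m'\le m}u_{m'}.
\]
Then $\tilde{p}(\epsilon^m)/\epsilon^m$ is a running minimum, hence nonincreasing in $m$, so the relaxed arbitrage-free constraint holds by construction. Monotonicity of $\tilde{p}$ itself follows from a short case split: if the running minimum at step $m$ is inherited from an earlier index with common value $c$, then $\tilde{p}(\epsilon^m)=\epsilon^m c\ge \epsilon^{m-1}c=\tilde{p}(\epsilon^{m-1})$ because $\epsilon^m\ge\epsilon^{m-1}$; otherwise the running minimum equals $u_m$, giving $\tilde{p}(\epsilon^m)=p^\ast(\epsilon^m)\ge p^\ast(\epsilon^{m-1})\ge \tilde{p}(\epsilon^{m-1})$ by monotonicity of $p^\ast$. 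Taking $m'=m$ in the minimum also yields $\tilde{p}(\epsilon^m)\le p^\ast(\epsilon^m)$ for free, so $\tilde{p}$ is $\mathcal{RRM}$-feasible and lies pointwise below $p^\ast$.

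The quantitative heart is a lemma: any nonnegative monotone subadditive $p$ satisfies $p(\epsilon_2)/\epsilon_2\le 2\,p(\epsilon_1)/\epsilon_1$ whenever $0<\epsilon_1\le\epsilon_2$. Writing $\epsilon_2=n\epsilon_1+r$ with $n=\lfloor\epsilon_2/\epsilon_1\rfloor\ge 1$ and $0\le r<\epsilon_1$, iterated subadditivity gives $p(\epsilon_2)\le n\,p(\epsilon_1)+p(r)\le (n+1)\,p(\epsilon_1)$ after bounding $p(r)\le p(\epsilon_1)$ by monotonicity, and $(n+1)/\epsilon_2\le (n+1)/(n\epsilon_1)\le 2/\epsilon_1$ closes the estimate. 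Applied to $p^\ast$ with $\epsilon_1=\epsilon^{m'}$ and $\epsilon_2=\epsilon^m$ for every $m'\le m$, the lemma forces $u_{m'}\ge u_m/2$, so $\tilde{p}(\epsilon^m)=\epsilon^m\min_{m'\le m}u_{m'}\ge \epsilon^m\cdot u_m/2=p^\ast(\epsilon^m)/2$.

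Finally, because $\tilde{p}(\epsilon^m)\le p^\ast(\epsilon^m)$, every survey participant who purchases model $m$ under $p^\ast$ still purchases at the cheaper $\tilde{p}(\epsilon^m)$, so the buyer count $|\{k:\bm{tm}_k=m,\ \tilde{p}(\epsilon^m)\le\bm{v}_k\}|$ weakly grows model-by-model; combined with $\tilde{p}(\epsilon^m)\ge p^\ast(\epsilon^m)/2$ this yields $R_m(\tilde{p})\ge R_m(p^\ast)/2$ for each $m$, and summing over $m$ gives $OPT(\mathcal{RRM})\ge \mathrm{Rev}(\tilde{p})\ge OPT(\mathcal{RM})/2$. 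I expect the main obstacle to be the lemma itself: the tempting bound $p(k\epsilon_1)\le k\,p(\epsilon_1)$ only holds for integer $k$, whereas $\epsilon_2/\epsilon_1$ need not be integral, so the fractional remainder has to be absorbed through monotonicity, and it is exactly this bump from $n$ to $n+1$ that produces the factor of two in the theorem.
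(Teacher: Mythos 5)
Your proof is correct and follows essentially the same route as the paper: you build the same candidate $\tilde{p}(\epsilon^m)=\epsilon^m\min_{m'\le m}p^\ast(\epsilon^{m'})/\epsilon^{m'}$, verify feasibility the same way, and obtain the factor $2$ from the same subadditivity-plus-ceiling estimate (your $(n+1)p(\epsilon_1)$ bound is the paper's $\lceil\epsilon^m/\epsilon^{x_{min}}\rceil\,p(\epsilon^{x_{min}})$ in lemma form). The only additions are cosmetic: isolating the unit-price comparison as a standalone lemma and the preliminary remark that the relaxed constraint implies subadditivity.
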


\begin{proof}
Given a feasible solution $p_{\mathcal{RM}}$ of the revenue maximization problem, we construct a solution $p_{\mathcal{RRM}}$ such that for all $m>0$, $p_{\mathcal{RRM}}(\epsilon^ m)=\epsilon^m\times min_{0<x\leq m}\{p_{\mathcal{RM}}(\epsilon^x)/\epsilon^x\}$, where $p_{\mathcal{RRM}}(\epsilon^ m)$ is the price in $\mathcal{RRM}$ for model $\mathcal{M}^m$ with privacy parameter $\epsilon^m$. Let $0<m\leq m'$. We show that $p_{\mathcal{RRM}}$ is a feasible solution of the relaxing maximization problem as follows.

We prove that $p_{\mathcal{RRM}}$ satisfies the monotone property as follows. Let $x'_{min}=\arg min_{0<x\leq m'}\{p_{\mathcal{RM}}(\epsilon^x)/\epsilon^x\}$. We have two cases for $x'_{min}$, $0<x'_{min}\leq m$ and $m<x'_{min}\leq m'$. For the first case $0<x'_{min}\leq m$, we have $min_{0<x\leq m'}\{p_{\mathcal{RM}}(\epsilon^x)/\epsilon^x\}=min_{0<x\leq m}\{p_{\mathcal{RM}}(\epsilon^x)/\epsilon^x\}$ because $x'_{min}$ lies in the range of $(0,m]$. And then we have $p_{\mathcal{RRM}}(\epsilon^{m'})=\epsilon^{m'}\times min_{0<x\leq m'}\{p_{\mathcal{RM}}(\epsilon^x)/\epsilon^x\}\geq \epsilon^m\times min_{0<x\leq m}\{p_{\mathcal{RM}}(\epsilon^x)/\epsilon^x\}=p_{\mathcal{RRM}}(\epsilon^m)$. For the second case $m<x'_{min}\leq m'$, we have $p_{\mathcal{RRM}}(\epsilon^m)=\epsilon^m\times min_{0<x\leq m}\{p_{\mathcal{RM}}(\epsilon^x)/\epsilon^x\}\leq \epsilon^m\times \{p_{\mathcal{RM}}(\epsilon^x)/\epsilon^x\}_{x=m}=\epsilon^m\times \{p_{\mathcal{RM}}(\epsilon^m)/\epsilon^m\} = p_{\mathcal{RM}}(\epsilon^m)<p_{\mathcal{RM}}(\epsilon^{x'_{min}}) = \epsilon^{x'_{min}}\{p_{\mathcal{RM}}(\epsilon^{x'_{min}})/\epsilon^{x'_{min}}\}$. Because $x'_{min}\leq m'$ and  $p_{\mathcal{RM}}(\epsilon^{x'_{min}})/\epsilon^{x'_{min}} =min_{0<x\leq m'}\{p_{\mathcal{RM}}(\epsilon^x)/\epsilon^x\}$, then we have $\epsilon^{x'_{min}} \{p_{\mathcal{RM}}(\epsilon^{x'_{min}})/\epsilon^{x'_{min}}\} \leq \epsilon^{m'}\times min_{0<x\leq m'}\{p_{\mathcal{RM}}(\epsilon^x)/\epsilon^x\}=p_{\mathcal{RRM}}(\epsilon^{m'})$. That is $p_{\mathcal{RRM}}(\epsilon^m)<p_{\mathcal{RRM}}(\epsilon^{m'})$.

We prove that $p_{\mathcal{RRM}}$ satisfies the subadditive property as follows. We have $p_{\mathcal{RRM}}(\epsilon^m)/\epsilon^m=min_{0<x\leq m}\{p_{\mathcal{RM}}(\epsilon^x)/\epsilon^x\}\geq min_{0<x\leq m'}\{p_{\mathcal{RM}}(\epsilon^x)/\epsilon^x\}=p_{\mathcal{RRM}}(\epsilon^{m'})/\epsilon^{m'}$. Therefore, we have $p_{\mathcal{RRM}}(\epsilon^m)/\epsilon^m\geq p_{\mathcal{RRM}}(\epsilon^{m'})/\epsilon^{m'}$ which is the subadditive property constraint.

Let $x_{min}=\arg min_{0<x\leq m}\{p_{\mathcal{RM}}(\epsilon^x)/\epsilon^x\}$, i.e., $x_{min}\leq m$. We show that for every $m>0$, we have $p_{\mathcal{RM}}(\epsilon^m)/2\leq p_{\mathcal{RRM}}(\epsilon^m)$ as follows. We have $p_{\mathcal{RM}}(\epsilon^m)=p_{\mathcal{RM}}(\epsilon^{x_{min}}\frac{\epsilon^m}{\epsilon^{x_{min}}})\leq p_{\mathcal{RM}}(\epsilon^{x_{min}}\lceil \frac{\epsilon^m}{\epsilon^{x_{min}}}\rceil)\leq  \lceil \frac{\epsilon^m}{\epsilon^{x_{min}}}\rceil p_{\mathcal{RM}}(\epsilon^{x_{min}})$ because $p_{\mathcal{RM}}$ satisfies the subadditive property constraint. Therefore, we have $p_{\mathcal{RRM}}(\epsilon^m)=\epsilon^m\{\frac{p_{\mathcal{RM}}(\epsilon^{x_{min}})}{\epsilon^{x_{min}}}\}\geq \frac{\epsilon^m}{\epsilon^{x_{min}}}\{\frac{p_{\mathcal{RM}}(\epsilon^m)}{\lceil \frac{\epsilon^m}{\epsilon^{x_{min}}} \rceil}\}\geq \frac{\epsilon^m}{\epsilon^{x_{min}}}\{\frac{p_{\mathcal{RM}}(\epsilon^m)}{ \frac{\epsilon^m}{\epsilon^{x_{min}}} +1}\}\geq p_{\mathcal{RM}}(\epsilon^m)/2$ because $x_{min}\leq m$. Because $p_{\mathcal{RRM}}(\epsilon^m)=\epsilon^m\times min_{0<x\leq m}\{\frac{p_{\mathcal{RM}}(\epsilon^x)}{\epsilon^x}\}\leq \epsilon^m\times \{\frac{p_{\mathcal{RM}}(\epsilon^x)}{\epsilon^x}\}_{x=m}=p_{\mathcal{RM}}(\epsilon^m)$, we have $p_{\mathcal{RRM}}(\epsilon^m)\leq p_{\mathcal{RM}}(\epsilon^m)$. Therefore, for each $m>0$, we have $\sum_{k=1}^{K'} \cdot \mathbb{I}(\bm{tm}_k==m)\cdot\mathbb{I}(p_{\mathcal{RRM}}(\epsilon^m) \leq \bm{v}_k)\geq \sum_{k=1}^{K'} \cdot \mathbb{I}(\bm{tm}_k==m)\cdot\mathbb{I}(p_{\mathcal{RM}}(\epsilon^m) \leq \bm{v}_k)$. With $p_{\mathcal{RM}}(\epsilon^m)/2\leq p_{\mathcal{RRM}}(\epsilon^m)$, we conclude that $OPT(\mathcal{RM})/2 \leq OPT(\mathcal{RRM})$.
\end{proof}

\partitle{Dynamic Programming Algorithm}
In this part, we show an efficient dynamic programming algorithm to solve the relaxed revenue maximization problem.

\begin{figure}[htb]
 \centering
 \includegraphics[width=0.39\textwidth]{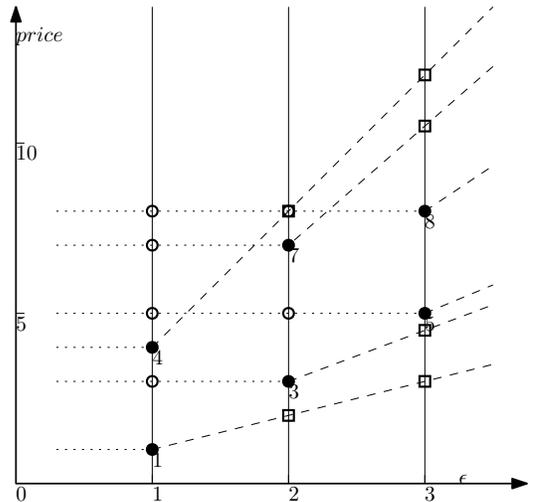}
 \vspace{-1em}
 \caption{Revenue maximization example.}
 \label{fig:rmalgorithm}
\end{figure}

At first glance, for each model, it seems that all possible values in the price range can be an optimal price, which makes the problem arguably intractable to solve. In the following, we show how to construct the complete solution space in the discrete space and prove the complete solution space is sufficient to obtain the maximum revenue.

\emph{Constructing Complete Solution Space.} It is easy to see that those survey price points should be contained in the complete solution space. For each survey price point $(m,sp^m[j])$, it determines unit price $sp^m[j]/\epsilon^m$ and price $sp^m[j]$. The general idea is that if we choose $(m,sp^m[j])$ as the price point in model $\mathcal{M}^m$, it affects the price for models $\mathcal{M}^k,k=1,...,m-1$ due to the monotone constraint and the unit price for models $\mathcal{M}^k,k=m+1,...,M$ due to the subadditive constraint. If we set the optimal price in model $\mathcal{M}^m$ as $sp^m[j]$, the unit price of the following models after model $\mathcal{M}^m$ cannot be larger than $sp^m[j]/\epsilon^m$. Therefore, for each survey price point $(m,sp^m[j])$, we draw one line $l_{(m,sp^m[j])}$ through survey price point $(m,sp^m[j])$ and the original point. For each model $\mathcal{M}^m$, we draw one vertical line $l_{\mathcal{M}^m}$. By intersecting line $l_{(m,sp^m[j])}$ and $l_{\mathcal{M}^m}$, we obtain $M-m$ new price points $(l_{\mathcal{M}^k},l_{(m,sp^m[j])})$ for $k=m+1,...,M$. We note that we do not need to generate the price points for $k=1,...,m-1$ because the unit price of model $\mathcal{M}^m$ can only constrain the unite price of model $\mathcal{M}^k, k=m+1,...,M$.  Furthermore, for each model, its price is also determined by the survey price of its right neighbors. Therefore, we need to add the survey price points of model $\mathcal{M}^m$ to models $\mathcal{M}^{k}, k=1,...,m-1$. The detailed algorithm for constructing the complete solution space is shown in Algorithm \ref{Alg:conSoluSpace}. In Lines 11-15, we use $f(m,p^m[j])$ to distinguish the survey price points from the other points in the complete solution space. For ease of presentation in the following, we name the price point in the complete solution space from Line 1 as $\mathcal{SV}$ (survey) point, the price point from Line 8 as $\mathcal{SC}$ (subadditivity constraint) point, and the price point from Line 10 as $\mathcal{MC}$ (monotonicity constraint) point.

\begin{algorithm}[thb] \caption{Constructing complete solution space for the relaxed revenue maximization problem.}\label{Alg:conSoluSpace}
\SetKwInOut{Input}{input}\SetKwInOut{Output}{output}

\Input{Model with noise parameter $\epsilon^m$ and the $j^{th}$ lowest survey price point for model with noise parameter $\epsilon^m$, denoted as $(m,sp^m[j])$.}
\Output{Complete solution space.}

add all the survey price points $(m,sp^m[j])$ to the complete solution space\;
\For{each survey price point $(m,sp^m[j])$}{
    draw a line $l_{(m,sp^m[j])}$ through this point and the original point\;
}

\For{each model with noise parameter $\epsilon^m$}{
    draw a vertical line $l_{\mathcal{M}^m}$\;
}

\For{each line $l_{\mathcal{M}^m}$}{
    \For{each line $l_{(m,sp^m[j])}$}{
        add point $(l_{\mathcal{M}^k},l_{(m,sp^m[j])})$ by intersecting line $l_{\mathcal{M}^m}$ and line $l_{(m,sp^m[j])}$ to the complete solution space for $k=m+1,...,M$\;}}

\For{each survey price point $(m,sp^m[j])$}{
    add price point $(k,sp^m[j])$ to the complete solution space for $k=1,...,m-1$\;}

\For{each price point $(m,p^m[j])$ in the complete solution space}{
    \If{$(m,p^m[j])$ is a survey price point}{
        $f(m,p^m[j])=1$\;}
    \Else{
        $f(m,p^m[j])=0$\;}}
\end{algorithm}

\begin{example}
We show an running example of Algorithm \ref{Alg:conSoluSpace}. In Figure \ref{fig:rmalgorithm}, we assume $\epsilon^1=1$, $\epsilon^2=2$, and $\epsilon^3=3$. We add the survey price points $(1,sp^1[1]=1)$, $(1,sp^1[2]=4)$, $(2,sp^2[1]=3)$, $(2,sp^2[2]=7)$, $(3,sp^3[1]=5)$, and $(3,sp^3[2]=8)$ to the complete solution space in Line 1. In Line 2, for the survey price point $(1,1)$, we draw a line $l_{(1,1)}$ through this point and the original point in Line 3. In Line 4, for model $\mathcal{M}^1$ with noise parameter $\epsilon^1=1$, we draw a vertical line $l_{\mathcal{M}^1}$. In Lines 6-8, for $l_{1,1}$ and $l_{\mathcal{M}^2}$, we add intersection $(l_{\mathcal{M}^2},l_{1,sp^1[1]})=(2,2)$ to the complete solution space. In total, we have six such new price points shown in box. In Line 9, for survey price point $(3,sp^3[1])=(3,5)$, we add price points $(2,5)$ and $(1,5)$ to the complete solution space in Line 10. Similarly, we also have six such new price points shown in circle. Therefore, for the complete solution space, we have six price points for models $\mathcal{M}^1$ and $\mathcal{M}^3$. We have five price points for model $\mathcal{M}^2$ because the intersection point of $l_{\mathcal{M}^2},l_{1,sp^1[2]}=(2,8)$ is same to the $\mathcal{MC}$ point $(k,sp^3[2])=(2,8)$ for $k=2$. In Lines 12-15, we have $f(2,3)=1$ and $f(2,5)=0$.
\end{example}

\begin{theorem}
  The complete solution space constructed by Algorithm \ref{Alg:conSoluSpace} is sufficient for finding the optimal solution of the relaxed revenue maximization problem.
\end{theorem}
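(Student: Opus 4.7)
The plan is to argue that for any feasible price vector $p$ of RRM there is a feasible vector $\tilde{p}$ whose every coordinate lies in the complete solution space of Algorithm \ref{Alg:conSoluSpace} and whose revenue is at least that of $p$; applied to a maximizer, this yields a maximizer inside the space.

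First, I would fix all coordinates other than $p(\epsilon^m)$ and examine the revenue as a function of $p(\epsilon^m)$ alone. Letting $N_m(p) = |\{k : \bm{tm}_k = m,\ \bm{v}_k \geq p\}|$, the model-$m$ contribution $p\cdot N_m(p)$ is piecewise linear and strictly increasing on each half-open interval $[sp^m[j], sp^m[j+1])$, because $N_m$ is a non-increasing step function whose jumps occur exactly at the survey prices. Hence at an optimal $p^*$, $p^*(\epsilon^m)$ cannot sit strictly interior to such an interval while being strictly interior to the RRM feasibility region, since an infinitesimal increase would remain feasible and strictly improve revenue. Therefore $p^*(\epsilon^m)$ must either equal a survey price $sp^m[j]$ of model $m$ itself (an $\mathcal{SV}$ point added in Line 1), or saturate an active upper-bound constraint coming from another model.

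Second, I would classify the active upper bounds. The only upper bounds on $p(\epsilon^m)$ in RRM are monotonicity $p(\epsilon^m)\le p(\epsilon^{m'})$ for some $m' > m$, and the relaxed unit-price constraint $p(\epsilon^m) \le p(\epsilon^{m'})\cdot\epsilon^m/\epsilon^{m'}$ for some $m' < m$. If the binding neighbour is itself at a survey price, $p^*(\epsilon^{m'}) = sp^{m'}[j]$, then monotonicity pins $p^*(\epsilon^m) = sp^{m'}[j]$, which is precisely the $\mathcal{MC}$ point produced in Lines 9--10, and subadditivity pins $p^*(\epsilon^m) = sp^{m'}[j]\cdot\epsilon^m/\epsilon^{m'}$, which is precisely the intersection of the vertical line $l_{\mathcal{M}^m}$ with the ray $l_{(m', sp^{m'}[j])}$ through the origin, i.e., the $\mathcal{SC}$ point produced in Lines 6--8.

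Finally, I would close the argument by induction on the number of coordinates not yet in the complete solution space, to handle the case where the pinning coordinate $p^*(\epsilon^{m'})$ is itself not a survey price. For any such pair $(m, m')$ with a tight upper bound, simultaneously slide $p^*(\epsilon^m)$ and $p^*(\epsilon^{m'})$ along the tight common constraint (which by construction preserves both feasibility and tightness) until a fresh survey price is crossed or a new upper bound becomes active; by the first paragraph this weakly increases total revenue, and each step strictly reduces the number of non-survey-price coordinates or activates an additional constraint, so the process terminates with every coordinate at an $\mathcal{SV}$, $\mathcal{MC}$, or $\mathcal{SC}$ point. The main obstacle will be verifying that these sliding moves never decrease total revenue and that the chain of tight constraints cannot cycle; acyclicity should follow from the natural $m' > m$ versus $m' < m$ split between monotonicity and subadditivity. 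A minor technical subtlety is the half-open nature of $[sp^m[j], sp^m[j+1])$, which is most cleanly handled by restricting the whole argument to the discrete candidate set enumerated by Algorithm \ref{Alg:conSoluSpace} rather than to open sets of reals.
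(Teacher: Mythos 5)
Your first two steps are sound, and they actually supply something the paper's own proof leaves implicit: a perturbation argument showing that at an optimum each $p^*(\epsilon^m)$ is either a survey price of model $\mathcal{M}^m$ or saturates an upper bound coming from monotonicity (against some $m'>m$) or from the relaxed unit-price constraint (against some $m'<m$), and that a length-one pinning chain ending at a survey price lands exactly on an $\mathcal{MC}$ or $\mathcal{SC}$ point of Algorithm \ref{Alg:conSoluSpace}. The gap is in your final step. When the pinning coordinate $p^*(\epsilon^{m'})$ is itself not a survey price, your sliding process terminates with $p^*(\epsilon^m)$ at the image of a survey price under a \emph{composition} of two or more constraint maps: for instance, the monotonicity image at model $m<m'$ of the subadditivity image at model $m'$ of $sp^{m''}[j]$, i.e.\ the price $sp^{m''}[j]\,\epsilon^{m'}/\epsilon^{m''}$ sitting at model $\mathcal{M}^m$. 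That price is generically neither of the form $sp^k[j]$ for $k>m$ (an $\mathcal{MC}$ point) nor of the form $sp^k[j]\,\epsilon^m/\epsilon^k$ for $k<m$ (an $\mathcal{SC}$ point), so your assertion that ``the process terminates with every coordinate at an $\mathcal{SV}$, $\mathcal{MC}$, or $\mathcal{SC}$ point'' is precisely the claim that still needs proof, not a consequence of termination.

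The obstacles you do flag (revenue monotonicity of the slide, acyclicity of the tight-constraint graph) are real but secondary; the crucial missing lemma is the one to which the paper's proof is entirely devoted, namely that the point-generation process is \emph{non-recursive}. Concretely, one must check three things: a subadditivity image of a subadditivity image lies on the same ray through the origin as the originating survey point and hence coincides with a direct $\mathcal{SC}$ point; a monotonicity image of a monotonicity image carries the same price and hence is already a direct $\mathcal{MC}$ point; and every \emph{mixed} composition produces a price vector that violates either monotonicity or the relaxed unit-price constraint against the originating survey-price coordinate (e.g.\ in the example above, feasibility against $p(\epsilon^{m''})=sp^{m''}[j]$ forces $\epsilon^{m'}\leq\epsilon^{m''}$ or $\epsilon^{m'}\leq\epsilon^{m}$, both contradictions), so such a tight-constraint pattern can never occur in a feasible solution. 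Only with this case analysis does your induction close; without it, each sliding step could in principle mint a fresh candidate price outside the constructed space, and the complete solution space would not be known to be complete.
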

\begin{proof}
As we discussed in constructing complete solution space, for each survey price point $(m,sp^m[j])$, it can affect the model revenue of model $\mathcal{M}^m$, the unit price for models $\mathcal{M}^k,k=m+1,...,M$ and the price for models $\mathcal{M}^k,k=1,...,m-1$. We prove that the $\mathcal{SC}$ and $\mathcal{MC}$ points are non-recursive, i.e., we do not need to generate new price points in the complete solution space based on the generated $\mathcal{SC}$ and $\mathcal{MC}$ points.

Given a survey price point $(m,sp^m[j])$ in model $\mathcal{M}^m$, it determines a $\mathcal{SC}$ point in model $\mathcal{M}^{m'}$, where $m'>m$, i.e., $(m',sp^m[j]/\epsilon^m\times \epsilon^{m'})$. We do not need to generate $\mathcal{SC}$ points based on $(m',sp^m[j]/\epsilon^m\times \epsilon^{m'})$ because it has the same $\mathcal{SC}$ points for models $\mathcal{M}^k,k=m'+1,...,M$ with $(m,sp^m[j])$. We may use $(m',sp^m[j]/\epsilon^m\times \epsilon^{m'})$ to generate a $\mathcal{MC}$ point $(k,sp^m[j]/\epsilon^m\times \epsilon^{m'})$. If $k>m$, the new point $(k,sp^m[j]/\epsilon^m\times \epsilon^{m'})$ is not necessary because $sp^m[j]/\epsilon^m\times \epsilon^{m'}/\epsilon^k>sp^m[j]/\epsilon^m$ which violates the subadditive constraint. If $k<m$, the new point $(k,sp^m[j]/\epsilon^m\times \epsilon^{m'})$ is also not necessary because $sp^m[j]/\epsilon^m\times \epsilon^{m'}>sp^m[j]$ which violates the subadditive constraint.

Given a survey price point $(m,sp^m[j])$ in model $\mathcal{M}^m$, it determines a $\mathcal{MC}$ point in model $\mathcal{M}^{m'}$, where $m'<m$, i.e., $(m',sp^m[j])$. We do not need to generate $\mathcal{MC}$ points based on $(m',sp^m[j])$ because those $\mathcal{MC}$ points are already determined by $(m,sp^m[j])$. It is also not necessary to generate $\mathcal{SC}$ points based on $(m',sp^m[j])$ because if $(m',sp^m[j])$ and $(m,sp^m[j])$ are chosen as the optimal prices, all the optimal prices for models $\mathcal{M}^k,k=m',...m,...,M$ are determined, i.e., $p(\epsilon^{m'})=...=p(\epsilon^{m})=...p(\epsilon^{M})$.
\end{proof}

\emph{A recursive solution.}
We define the revenue of an optimal solution recursively in terms of the optimal solutions to subproblems. We pick as our subproblems the problems of determining the maximum revenue $OPT(m,j)$, where $OPT(m,j)$ denotes the maximum revenue for considering the first $m$ models and taking the $j^{th}$ lowest price point in the complete solution space of model $\mathcal{M}^m$. For the full problem, the maximum revenue would be $\max\{OPT(M,j)\}$ for all the price points $(M,p^M[j])$ in the complete solution space in model $\mathcal{M}^M$. For the price points in the complete solution space of model $\mathcal{M}^1$, we can directly compute $OPT(1,j)$ for all the price points because there is no initial constraint. For the price points in the complete solution space of other models, we need to consider both the monotone constraint and the subadditive constraint. We have a recursive equation as follows.
\begin{equation}\label{equ:recursive}
OPT(m,j)=\max\{OPT(m-1,j')\}+MR(m,j)
\end{equation}
where $p^{m-1}[j']\leq p^m[j]\&\&p^{m-1}[j']/\epsilon^{m-1}\geq p^m[j]/\epsilon^m$ and $MR(m,j)$ denotes the revenue from model $\mathcal{M}^m$ if we price model $\mathcal{M}^m$ for $p^m[j]$.

\emph{Computing the maximum revenue.}
Now, we could easily write a recursive algorithm in Algorithm \ref{Alg:DpRRM} based on recurrence (\ref{equ:recursive}), where $|p^m|$ is the number of the price points in model $\mathcal{M}^m$.

\begin{algorithm}[thb] \caption{Dynamic programming algorithm for finding an optimal solution of the relaxed revenue maximization problem.}\label{Alg:DpRRM}
\SetKwInOut{Input}{input}\SetKwInOut{Output}{output}

\Input{Model with noise parameter $\epsilon^m$ and its corresponding price points in the complete solution space.}
\Output{$OPT(\mathcal{RRM})$.}

\For{each model $\mathcal{M}^m$}{
sort the price points in the complete solution space in decreasing order\;
use $(m,p^m[j])$ to denote the $j^{th}$ lowest price point\;
    $MR(m,|p^m|)=p^m[|p^m|]f(m,p^m[|p^m|])$\;
    \For{$j=|p^m|-1$ to $1$}{
        $MR(m,j)=p^m[j]\sum_{k=j}^{|p^m|}f(m,p^m[k])$\;}}

\For{$j=1$ to $|p^1|$}{
    $OPT(1,j)=MR(1,p^1[j])$\;}

\For{each model $\mathcal{M}^m, m=2,...,M$}{
    \For{each price point $(m,p^m[j])$}{
       $OPT(m,j)=\max\{OPT(m-1,j')\}+MR(m,j)$, where $p^{m-1}[j']\leq p^m[j]\&\&p^{m-1}[j']/\epsilon^{m-1}\geq p^m[j]/\epsilon^m$\;
       $p(L.OPT(m,j))=p^{m-1}[j']$ that satisfies $OPT(m,j)$ in Line 11\;}}

$OPT(RRM)=\max\{OPT(M,j)\}$, where $j=1$ to $|p^m|$.
\end{algorithm}

\begin{theorem}
  Algorithm \ref{Alg:DpRRM} can be finished in $O(N^2M^2)$ time.
\end{theorem}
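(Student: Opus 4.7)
The plan is to decompose Algorithm \ref{Alg:DpRRM} into its preprocessing and dynamic-programming phases and bound the running time of each, establishing that the DP phase dominates and yields the claimed $O(N^2 M^2)$ bound (writing $N$ for the number of surveyed buyers $K'$).

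First I would bound the size of the complete solution space produced by Algorithm \ref{Alg:conSoluSpace}. The space is seeded with the $N$ survey price points, and for each such point $(m, sp^m[j])$ the algorithm introduces at most $M - m$ intersection ($\mathcal{SC}$) points with the vertical lines $l_{\mathcal{M}^{m+1}}, \ldots, l_{\mathcal{M}^M}$ together with $m - 1$ copy ($\mathcal{MC}$) points in the lower-indexed columns. Hence each survey point contributes at most $M - 1$ auxiliary points, and summing gives $\sum_{m=1}^{M} |p^m| = O(NM)$, with the per-column bound $|p^m| = O(NM)$ holding in the worst case.

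Second I would account for the preprocessing. For each model $\mathcal{M}^m$, sorting the column's price points costs $O(|p^m| \log |p^m|)$, while the backward recurrence for $MR(m, \cdot)$ can be implemented in $O(|p^m|)$ by maintaining a running suffix-sum of $f(m, p^m[k])$; the base case $OPT(1, j) = MR(1, p^1[j])$ is likewise linear in $|p^1|$. Aggregated across the $M$ columns this phase costs $O(NM \log(NM))$, a lower-order term.

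Third I would bound the DP loop. For each of the $O(NM)$ entries $(m, j)$, computing $OPT(m, j)$ via Line 11 requires identifying the maximum $OPT(m-1, j')$ over price points $(m-1, j')$ satisfying both the monotonicity constraint $p^{m-1}[j'] \leq p^m[j]$ and the relaxed subadditivity constraint $p^{m-1}[j']/\epsilon^{m-1} \geq p^m[j]/\epsilon^m$. Each constraint check is $O(1)$, and a linear scan over the $O(NM)$ candidates of the preceding column costs $O(NM)$ per entry, so filling the table takes $O(NM) \cdot O(NM) = O(N^2 M^2)$, which dominates the preprocessing and matches the claim.

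The main obstacle will be the careful counting $\sum_{m=1}^M |p^m| = O(NM)$: the accounting must avoid double-counting when an $\mathcal{SC}$ point coincides with an $\mathcal{MC}$ point (e.g., the collision at $(2, 8)$ in the worked example) and must rely on the preceding theorem that $\mathcal{SC}$ and $\mathcal{MC}$ points are non-recursive, so that only the original $N$ survey points seed new lines $l_{(m, sp^m[j])}$. Once this bound is in hand, the remaining analysis is a direct tally of the loop costs.
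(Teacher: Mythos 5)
Your proposal is correct and follows essentially the same argument as the paper: the $N$ survey points generate $O(NM)$ points in the complete solution space, and each table entry is updated by an $O(NM)$ scan, giving $O(N^2M^2)$; you simply spell out the preprocessing and counting steps that the paper's two-line proof leaves implicit. (As a minor aside, each column actually holds only $O(N)$ points since every survey point contributes at most one point per column, so the scan per entry is $O(N)$ and the true bound is tighter, but this only strengthens the claimed $O(N^2M^2)$.)
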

\begin{proof}
For the $O(N)$ survey price points, we general $O(NM)$ price points in the complete solution space. For each price point in the complete solution space, we need $O(NM)$ time to update $OPT(m,j)$. Therefore, Algorithm \ref{Alg:DpRRM} requires $O(N^2M^2)$ time.
\end{proof}

\begin{example}
In model $\mathcal{M}^1$ of Figure \ref{fig:rmalgorithm}, we have $OPT(1,j)$ for $j=1,2,...,6$ shown in Table \ref{tab:solExam}. For computing $OPT(2,2)$, there is only one price point $(1,3)$ satisfying both the monotone constraint and the subadditive constraint within model $\mathcal{M}^1$. Therefore, we have $OPT(2,2)=OPT(1,2)+MR(2,2)=3+6=9$. Similarly, we can fill the entire table shown in Table \ref{tab:solExam}.
\end{example}

\emph{Constructing an optimal solution.}
Although Algorithm \ref{Alg:DpRRM} determines the maximum revenue of $\mathcal{RRM}$, it does not directly show the optimal price for each model $p(\epsilon^m)$. However, for each price point $(m,p^m[j])$ in the complete solution space, we record the price point $p(L.OPT(m,j))$ in model $\mathcal{M}^{m-1}$ which has the maximum revenue in those price points that satisfy both the monotone constraint and the subadditive constraint with $(m,p^m[j])$ in Line 12 of Algorithm \ref{Alg:DpRRM}. Therefore, we can recursively backtrack the optimal price point in model $\mathcal{M}^{m-1}$ from the optimal price point in model $\mathcal{M}^m$. We need $O(nM)$ time to find the maximum value in $OPT(M,j)$ and $O(M)$ time to backtrack. Therefore, we can construct an optimal solution in $O(NM)$ time. We note that such a solution may be one of several solutions that can achieve the optimal value.

We show an running example in Table \ref{tab:solExam}. We first obtain $OPT(3,3)$ which has the maximum value among $OPT(3,j)$ for $j=1,2,...,6$. Therefore, we set $p(\epsilon^3)=5$. We backtrack to $OPT(2,2)$ in model $\mathcal{M}^2$ and set $p(\epsilon^2)=3$. And then we backtrack to $OPT(1,2)$ in model $\mathcal{M}^1$ and set $p(\epsilon^1)=3$. Finally, an optimal pricing setting is $\langle p(\epsilon^1),p(\epsilon^2),p(\epsilon^3)\rangle=\langle 3,3,5\rangle$. We note that in our running example, the pricing setting $\langle 4,5,5\rangle$ also has the maximum revenue $19$.

\begin{table}[htb]\centering
\caption{Example for constructing an optimal solution.}\label{tab:solExam}
 \vspace{-1em}
{%
\begin{tabular}{|c|c|c|c|c|c|c|}
\hline
\diagbox{Model}{\# of points} & 1 & 2 & 3 & 4 & 5 & 6\\
\hline
$\mathcal{M}^1$ & 2  & 3  & 4  & 0    & 0    & 0\\
\hline
$\mathcal{M}^2$ & 6  & 9  & 9  & 11   & 4    & Null\\
\hline
$\mathcal{M}^3$ & 15 & 18 & 19 & 19   & 11   & 4\\
\hline
\end{tabular}}
\end{table}%

\begin{assumption}
The broker is honest but curious.
\end{assumption}

\begin{remark}
In practice, the broker may be semi-honest or even malicious. For these cases, we can take advantage of the local differential privacy \cite{DBLP:conf/focs/DuchiJW13} in which the data owners and the model buyers can add DP noise by themselves before sending their data to the broker. Also, encryption-based techniques can be further incorporated into the market design.
\end{remark}

\subsection{Complete Differentially Private Data Marketplace Dynamics}\label{sub:DPcomplete}
We summarize the differentially private data marketplace with model-based pricing dynamics from the broker's perspective, which is an end-to-end data marketplace with practical considerations and consists of computationally efficient component algorithms. The detailed algorithm is shown in Algorithm \ref{Alg:complete.broker}, which integrates all the proposed algorithms in the previous sections. We assume that the broker can set appropriate parameters $M$, $\epsilon^m$, and $\mathcal{MB}^m$ based on her market experiences, which is reasonable. For example, Microsoft can easily determine the different features assigned to Windows 10 Home version and Windows 10 Pro version. Finally, although instantiated with the DP market, we stress that Algorithm \ref{Alg:complete.broker} can be applied to the general setting by switching the DP parameter to the risk factor.

% Finally, Algorithm \ref{Alg:complete.broker} can run iteratively for several repetitions. Recall that the broker uses Theorem \ref{the:excessPopuLoss} to provide an initial guess for the model utility, which results in inaccurate purchasing willingness during the survey. Thus, after one cycle of Algorithm \ref{Alg:complete.broker}, the broker can provide the potential model buyers with more accurate model utility, which will help the broker to obtain more accurate survey result. Based on this improved survey, the broker can rerun Algorithm \ref{Alg:complete.broker} for another cycle to obtain the new model pricing, manufacturing budget, trained models for sale and compensation allocation, all of which will be more close to the true situation of the model market. Recall that the components of Algorithm \ref{Alg:complete.broker} are efficient enough to support such iterative update.

\begin{algorithm}[thb] \caption{The complete broker functioning in DP-\emph{Dealer} Pipeline. }\label{Alg:complete.broker}
\SetKwInOut{Input}{input}\SetKwInOut{Output}{output}

%\Input{$\bm{Z}_{train}=(\bm{X}_{train},\bm{y}_{train})$ and $\bm{Z}_{test}=(\bm{X}_{test},\bm{y}_{test})$.}
%\Output{Shapley value $\mathcal{SV}_i$ for each data $\bm{z}_i$ in $\bm{Z}_{train}$.}

collect data and usage restriction among $n$ data owners: collect dataset $\bm{D}=\{\bm{z}_i\}$ along with DP restriction parameters $\epsilon_i$ and extra compensation function $\bm{ec}_i$ for $i=1,2,...,n$\;

decide a set of $M$ models to train with model privacy parameter $\epsilon^{m}$ and manufacturing budget $\mathcal{MB}^m$ for $m=1,2,...,M$\;

\%\% Model training and releasing\;

\For{m=1 to M}{

data valuation: call Algorithm \ref{Alg:MCShapley} to compute approximate Shapley value $\mathcal{SV}_i^m$ for $i=1,2,...,n$\;

base compensation: compute $\bm{bc}_i^m=\frac{\mathcal{SV}_i^m}{\sum_{i=1}^{n}\mathcal{SV}_i^m}\mathcal{MB}^m$\;

data selection: call Algorithm \ref{Alg:DynamicProg}, \ref{Alg:greedy}, or \ref{Alg:GuessGreedy} to select training subset $\bm{S}^m$ with manufacturing budget $\mathcal{MB}^m$ to maximize $\mathcal{SV}(\bm{S}^m)$\;

model training: train the model with subset $\bm{S}^m$ by Algorithm \ref{Alg:DPtraining}\;

model releasing: release model $\mathcal{M}^m$, its pricing $p(\epsilon^m)$ and estimated excess population loss $\Delta\mathcal{L}(\mathcal{A}^m_{Algorithm\ref{Alg:DPtraining}};\epsilon^m,\bm{S}^m)$\;

}

perform market survey among $K'$ sampling model buyers (survey participants): collect market survey results of model demand $\bm{tm}_k$ and valuation $v_k$ for $k=1,2,...,K'$\;

\%\% Model pricing\;

model pricing: call Algorithms \ref{Alg:conSoluSpace} and \ref{Alg:DpRRM} to compute the optimal price $p(\epsilon^m)$ of model $\mathcal{M}^m$ for $m=1,2,...,M$\;

\%\% Compensation allocation\;
\For{m=1 to M}{
compute $\bm{bc}_i$ and $\bm{ec}_i$ of data owner $\mathcal{O}^i$ for $i\in\bm{S}^m$ by proportionally dividing $\frac{p(\epsilon^m)}{\sum_{m=1}^M p(\epsilon^m)}OPT(RRM)$ and allocate the corresponding compensation to $\mathcal{O}_i$\;
}

\end{algorithm}

%---------------------------------------------------------------------------------------------------------------------------------------------------------------------------------

\section{Experiments}\label{sec:Experiments}
In this section, we present experimental studies validating: 1)our proposed mechanisms for compensation allocation are efficient and effective; 2) our proposed mechanisms for pricing models can generate more revenue for the data owners and the broker; 3) our exquisitely designed dynamic programming algorithms for pricing models significantly outperform the baseline algorithms.

\subsection{Experiment Setup}
We ran experiments on a machine with an Intel Core i7-8700K and two NVIDIA GeForce GTX 1080 Ti running Ubuntu with 64GB memory. We employed SVM classifier as our model and used both synthetic datasets and a real Breast Cancer dataset \cite{Dua:2019} in our experiments. We implemented the following algorithms in Matlab 2018a.

\begin{itemize}
\item \textbf{Greedy}: The greedy algorithm for compensation allocation in Algorithm \ref{Alg:greedy}.
\item \textbf{PPDP}: The pseudo-polynomial dynamic programming algorithm for compensation allocation in Algorithm \ref{Alg:DynamicProg}.
\item \textbf{GuessGreedy}: The guess and greedy algorithm for compensation allocation in Algorithm \ref{Alg:GuessGreedy}.
\item \textbf{Dealer}: The optimal prices computed by the dynamic programming algorithm in Algorithm \ref{Alg:DpRRM} with survey price space.
\item \textbf{Dealer+}: The optimal prices computed by the dynamic programming algorithm in Algorithm \ref{Alg:DpRRM} with complete solution space.
\item \textbf{Linear}: We take the lowest survey price from model $\mathcal{M}^1$ and the highest survey price from model $\mathcal{M}^M$ and use linear interpolation for the remaining models $\mathcal{M}^2,...,\mathcal{M}^{M-1}$ based on the two end-prices.
\item \textbf{Low}: We set the lowest price from all survey prices to all models.
\item \textbf{Median}: We set the median price from all survey prices to all models.
\item \textbf{High}: We set the highest price from all survey prices to all models.
\end{itemize}

\subsection{Compensation Allocation}
Figures \ref{fig:allocation}(a)(b) shows the compensation allocation time cost and accuracy of Greedy, PPDP, and GuessGreedy of a various number of data owners, respectively. Because PPDP is significantly affected by the budget and cannot work for a very large budget, we set all budget to $10000$ in our experiments for fair comparison. Figure \ref{fig:allocation}(a) shows the time cost for a various number of data owners. Greedy significantly outperforms both PPDP and GuessGreedy due to its simplicity. GuessGreedy costs the highest time cost because we need to enumerate $n\choose v$ subsets, where $n$ is the total number of data owners and $v$ is the size of the sampled subsets during enumeration. In our experiments, the time cost for GuessGreedy is prohibitively high even we set $v=2$. We skip some results of Greedy and PPDP in the figures due to their prohibitively high time cost. Figure \ref{fig:allocation}(b) shows the accuracy of various algorithms on five models. We employ three different algorithms Greedy, PPDP, and GuessGreedy to choose three subsets for each manufacturing budget $\mathcal{MB}^m=1.2\sum_{i=1}^n \mathcal{SV}_i, 1.4\sum_{i=1}^n \mathcal{SV}_i, 1.6\sum_{i=1}^n \mathcal{SV}_i, 1.8\sum_{i=1}^n \mathcal{SV}_i$, and 2.0$\sum_{i=1}^n \mathcal{SV}_i$, respectively. We also use ALL as a baseline, which includes all the patients. We add differential privacy with parameters $\epsilon=0.01, 0.1, 1, 5$, and $10$ in the training processing (Algorithm \ref{Alg:DPtraining}) on the four datasets, respectively. We can see that although the number of patients in subsets selected by Greedy, PPDP, and GuessGreedy are less than the number of patients in ALL, the accuracy is higher for larger $\epsilon$, which verifies the effectiveness of Shapely value. For example, for $\epsilon=10$, there are only 337 patients in the subset selected by Greedy, but the accuracy on that subset is higher than the entire dataset ALL. For smaller $\epsilon$, the accuracy on ALL is higher. The reason is that for smaller $\epsilon$, with the less budget, we obtain a smaller subset. For example, for $\epsilon=0.01$, there are only 259 patients in the subset selected by Greedy. Comparing different algorithms, the accuracy of the subset selected by PPDP is only a little higher than Greedy. Therefore, we can employ Greedy for most cases.

\begin{figure}[!htb]
\centering
\subfigure[time cost.]{
\begin{minipage}[b]{0.22\textwidth}
\includegraphics[width=1.1\textwidth]{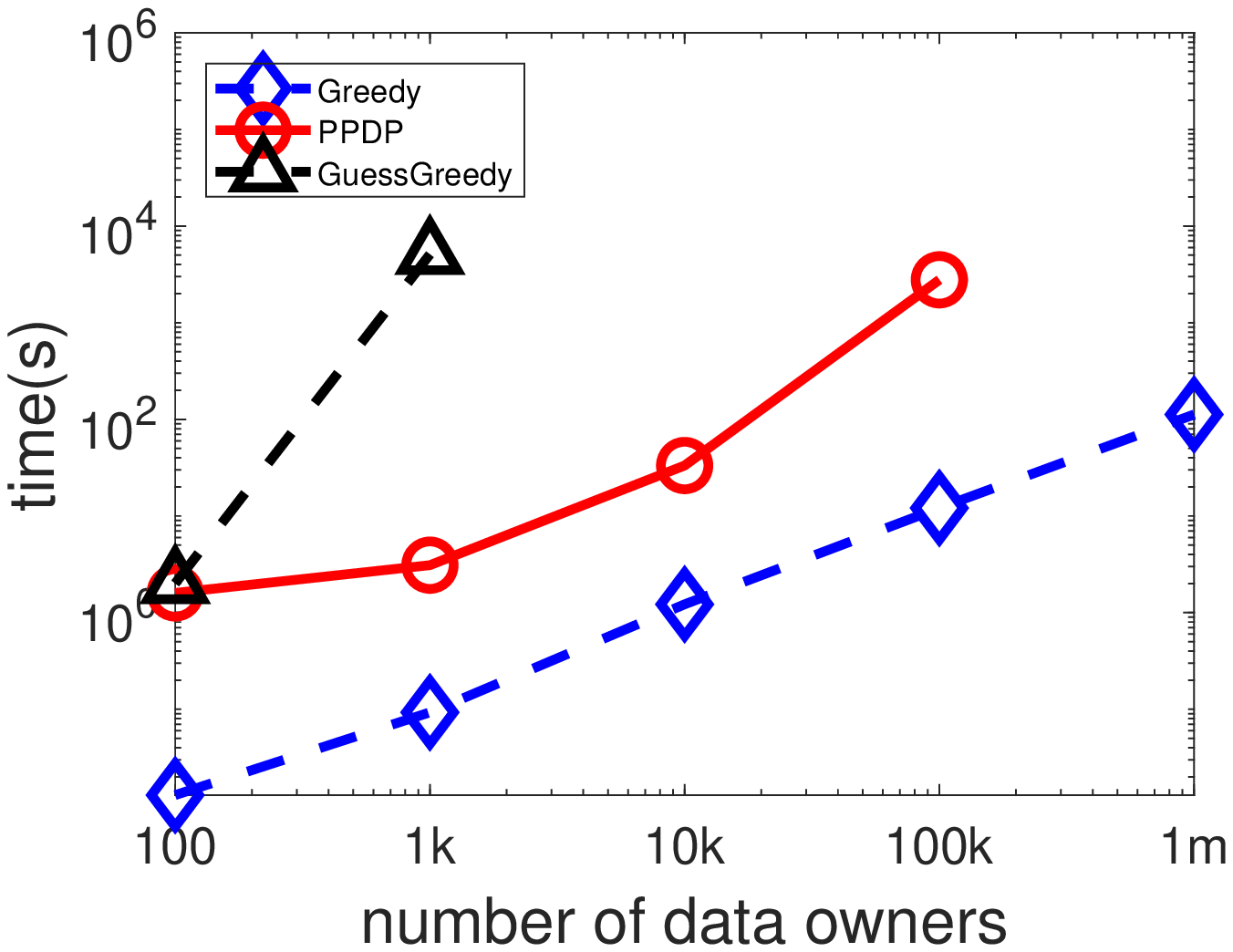}
\end{minipage}
}
\subfigure[accuracy.]{
\begin{minipage}[b]{0.22\textwidth}
\includegraphics[width=1.1\textwidth]{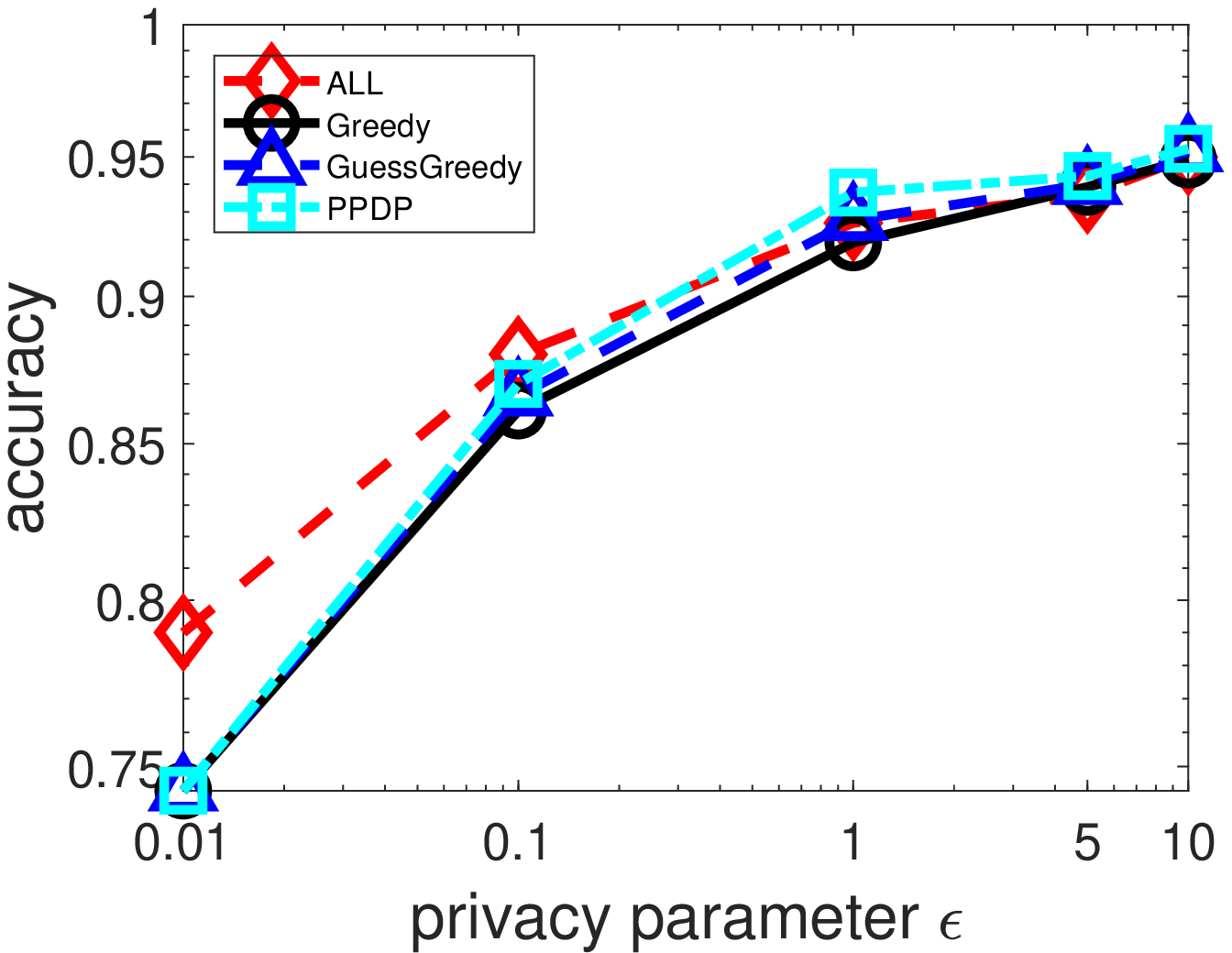}
\end{minipage}
}
 \vspace{-1em}\caption{Compensation allocation.} \label{fig:allocation}
\end{figure}

\subsection{Revenue Maximization of Pricing Models}
We experimentally study the revenue gain of our proposed algorithms on different distributed datasets. We generate two datasets with $100$ survey price points (i.e., collecting from $100$ potential model buyers). The number of survey price points on each model follows independent random distribution and Gaussian (mean=5, standard deviation=3) random distribution, respectively. For the first model of both datasets, we generate those survey price points following independent distribution with range $[1000,5000]$. The remaining nine models follow a $100$ increase on both the lower and upper bound of the range.

Figures \ref{fig:inde}(a)(b)(c)(d) show the dataset, price, affordability ratio (fraction of the model buyers that can afford to buy a model), and revenue on an independent random distributed dataset, respectively. Figure \ref{fig:inde}(b) shows that Dealer+, Dealer, and Linear have a similar price setting distribution. All models in Dealer have different prices. For the price setting distribution of Dealer+, the first model and the second model have the same price, the same to the fifth model and the sixth model, the eighth model and the ninth model, which maximizes the revenue comparing to Dealer and verifies the effectiveness of our complete solution space construction. Figure \ref{fig:inde}(c) shows that Dealer+ has the highest affordability ratio except for Low. For the most critical metric revenue, Dealer+ outperforms the other algorithms at least $10\%$, which verifies the gain of our complete solution space construction.

In practical applications, it is more likely that the survey price point datasets follow Gaussian distribution rather than independent random distribution. Figures \ref{fig:gaussian}(a)(b)(c)(d) show that all algorithms have similar performances on Gaussian distributed dataset as on independent distributed dataset.

\begin{figure*}[t]
\centering
\subfigure[Data distribution]{
\begin{minipage}[b]{0.22\textwidth}
\includegraphics[width=1.15\textwidth]{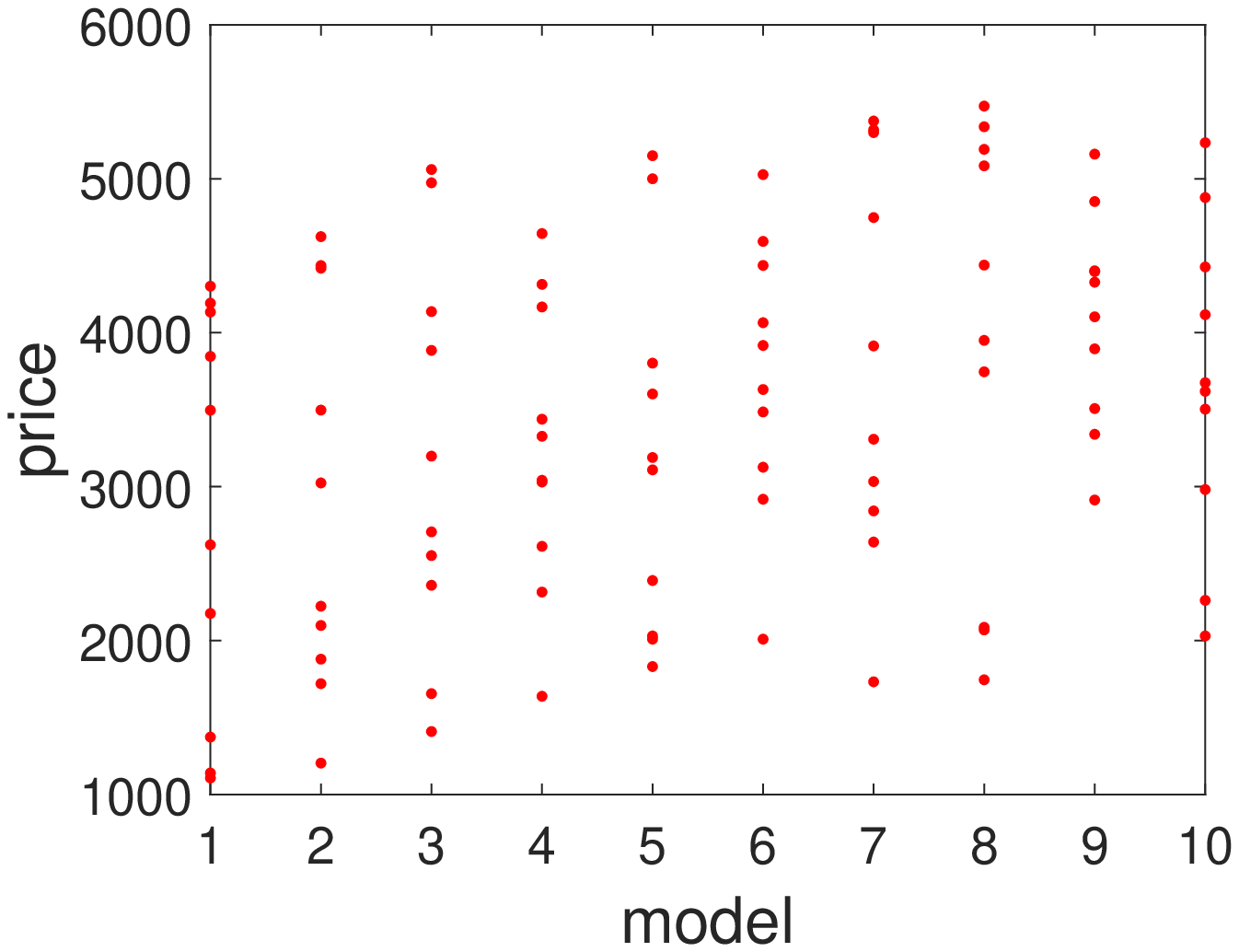}
\end{minipage}
}
\subfigure[Price]{
\begin{minipage}[b]{0.22\textwidth}
\includegraphics[width=1.15\textwidth]{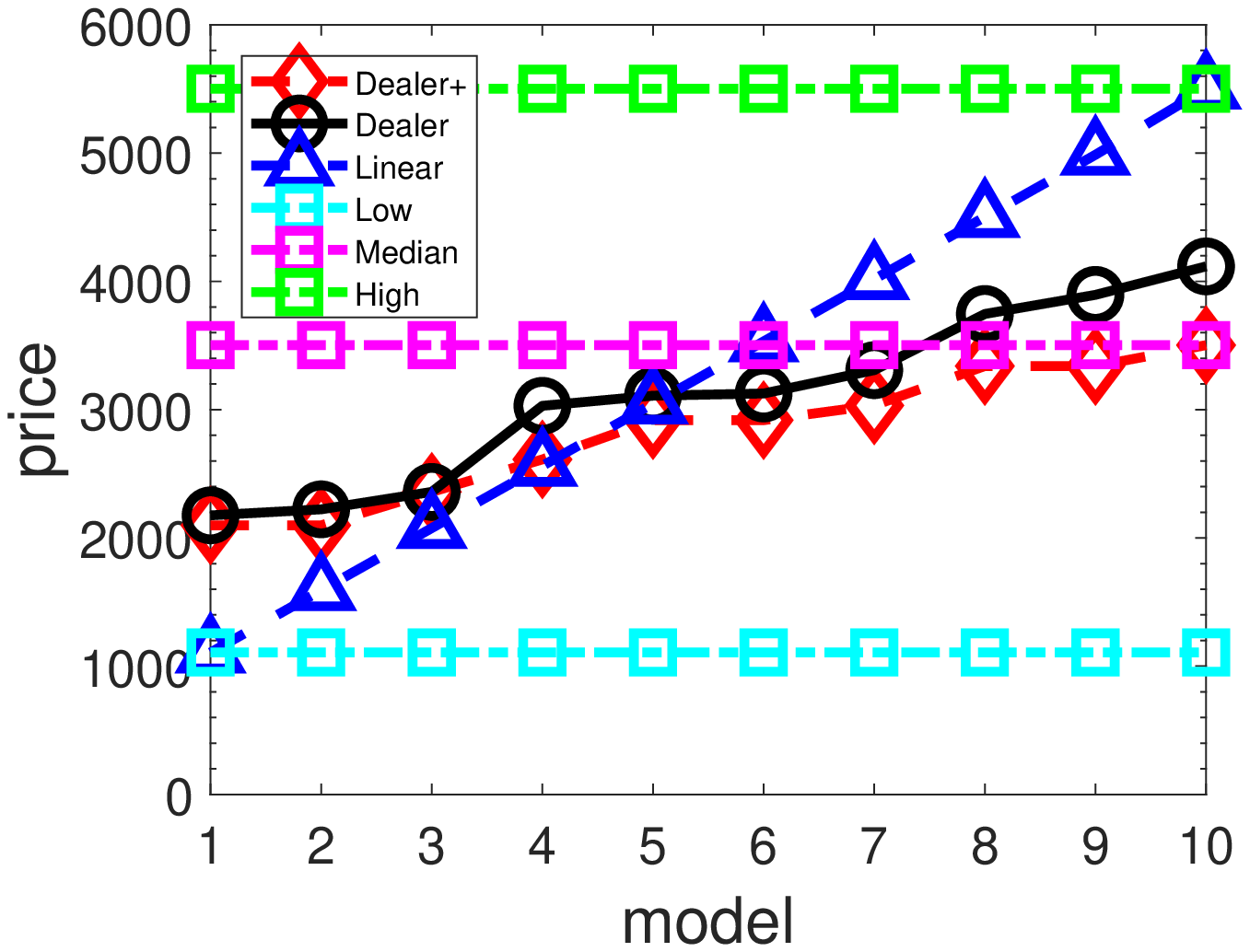}
\end{minipage}
}
\subfigure[Ratio]{
\begin{minipage}[b]{0.22\textwidth}
\includegraphics[width=1.15\textwidth]{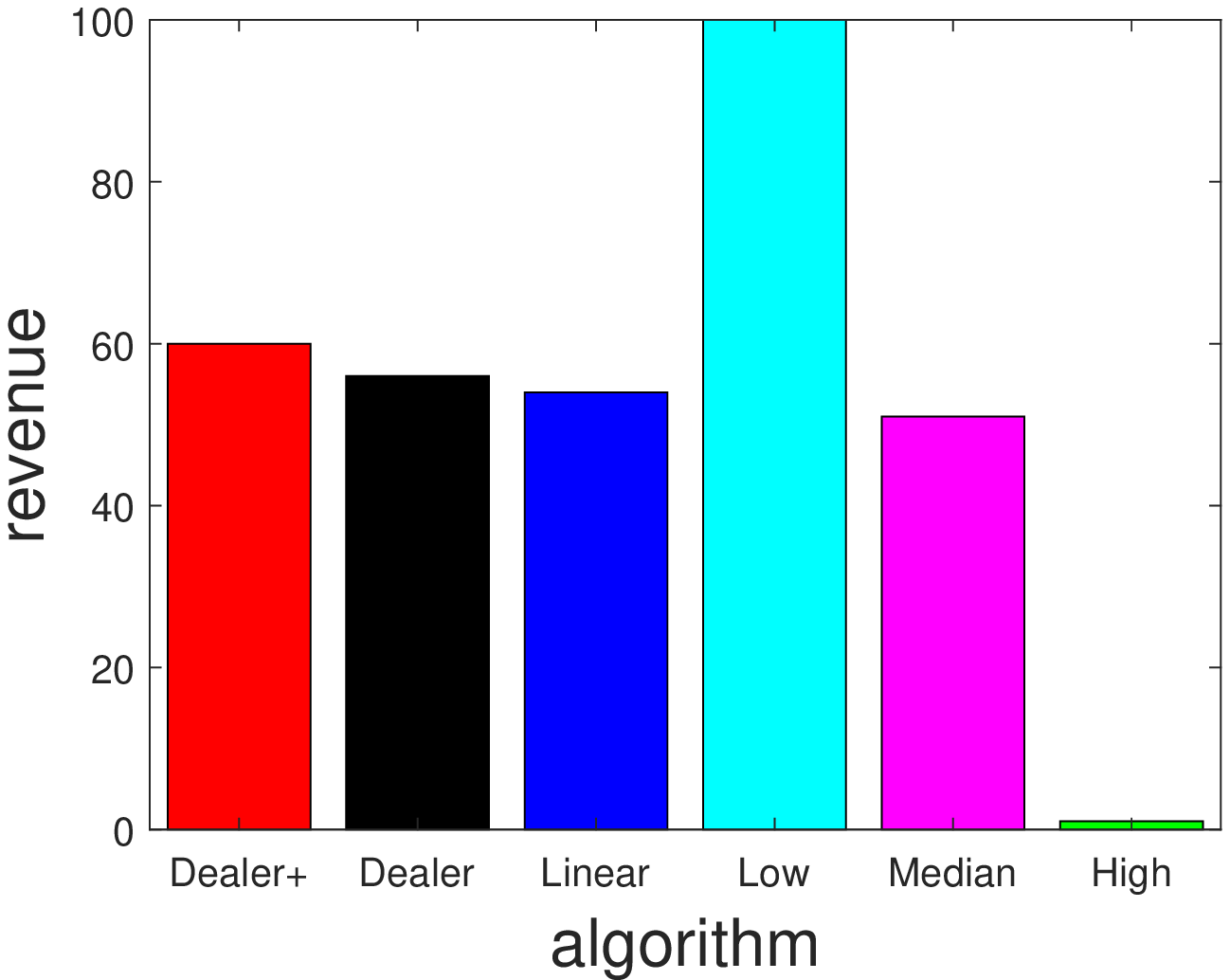}
\end{minipage}
}
\subfigure[Revenue]{
\begin{minipage}[b]{0.22\textwidth}
\includegraphics[width=1.15\textwidth]{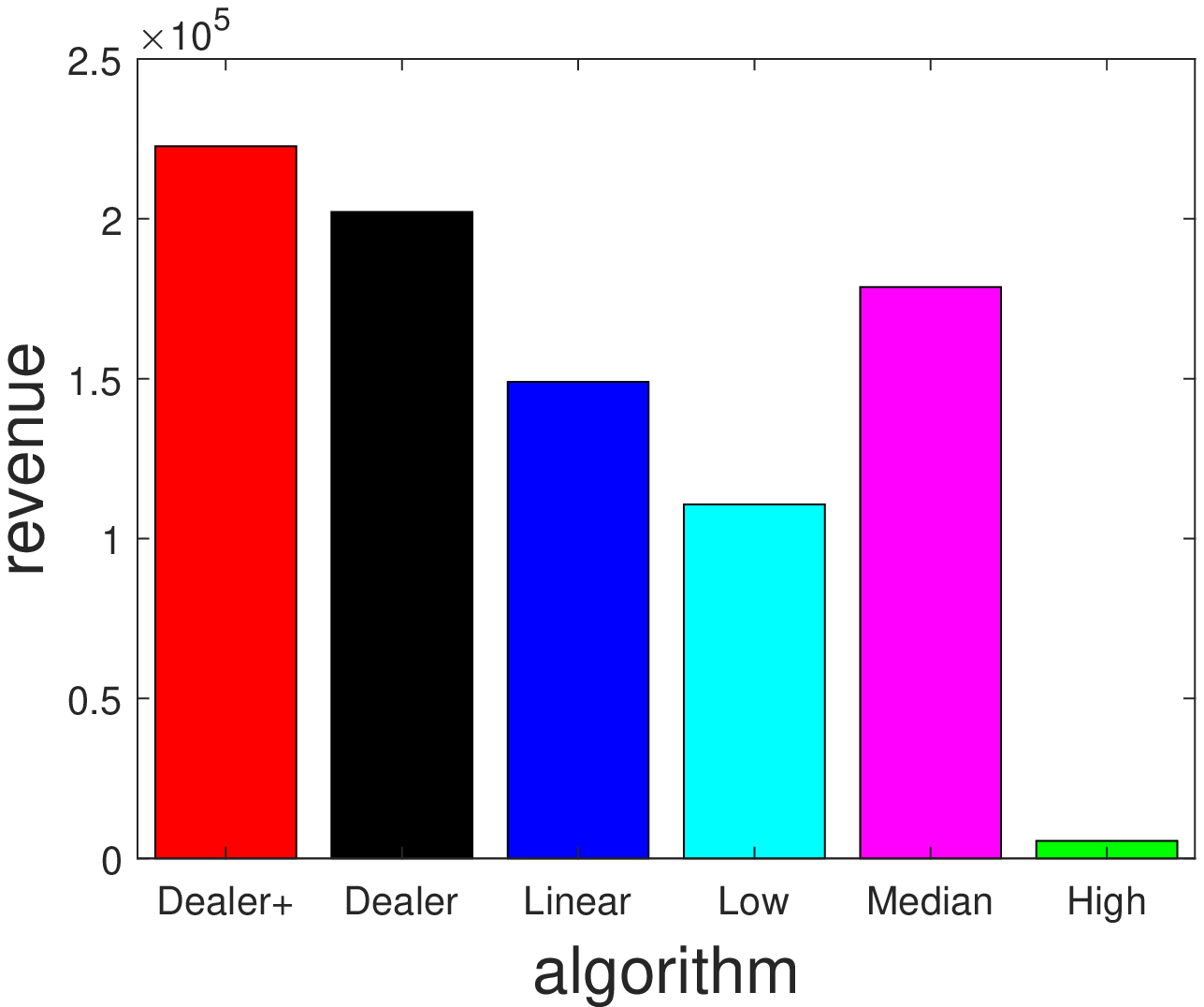}
\end{minipage}
}
 \vspace{-1em}\caption{Independent distribution.
 } \label{fig:inde}
\end{figure*}

\begin{figure*}[t]
\centering
\subfigure[Data distribution]{
\begin{minipage}[b]{0.22\textwidth}
\includegraphics[width=1.15\textwidth]{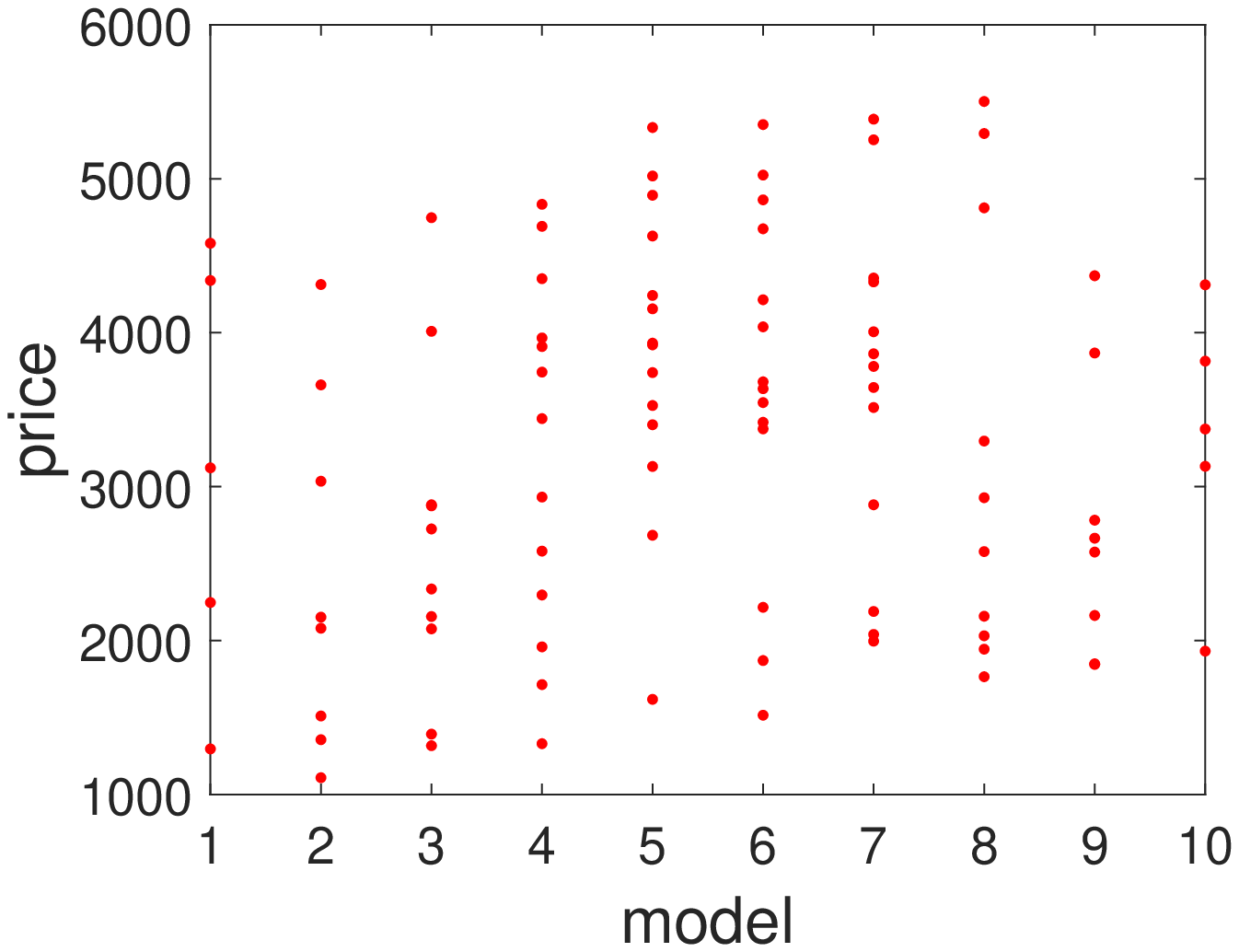}
\end{minipage}
}
\subfigure[Price]{
\begin{minipage}[b]{0.22\textwidth}
\includegraphics[width=1.15\textwidth]{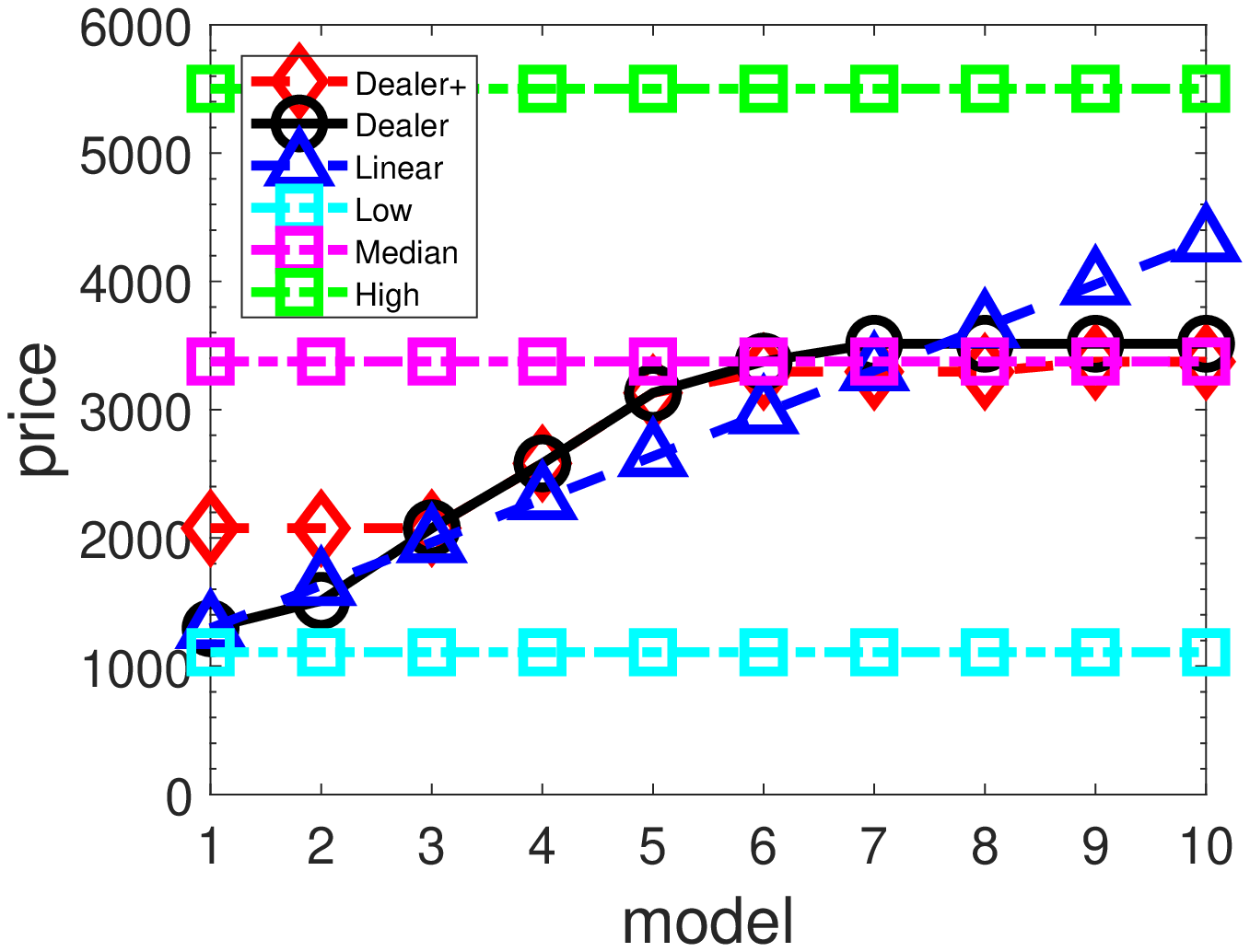}
\end{minipage}
}
\subfigure[Ratio]{
\begin{minipage}[b]{0.22\textwidth}
\includegraphics[width=1.15\textwidth]{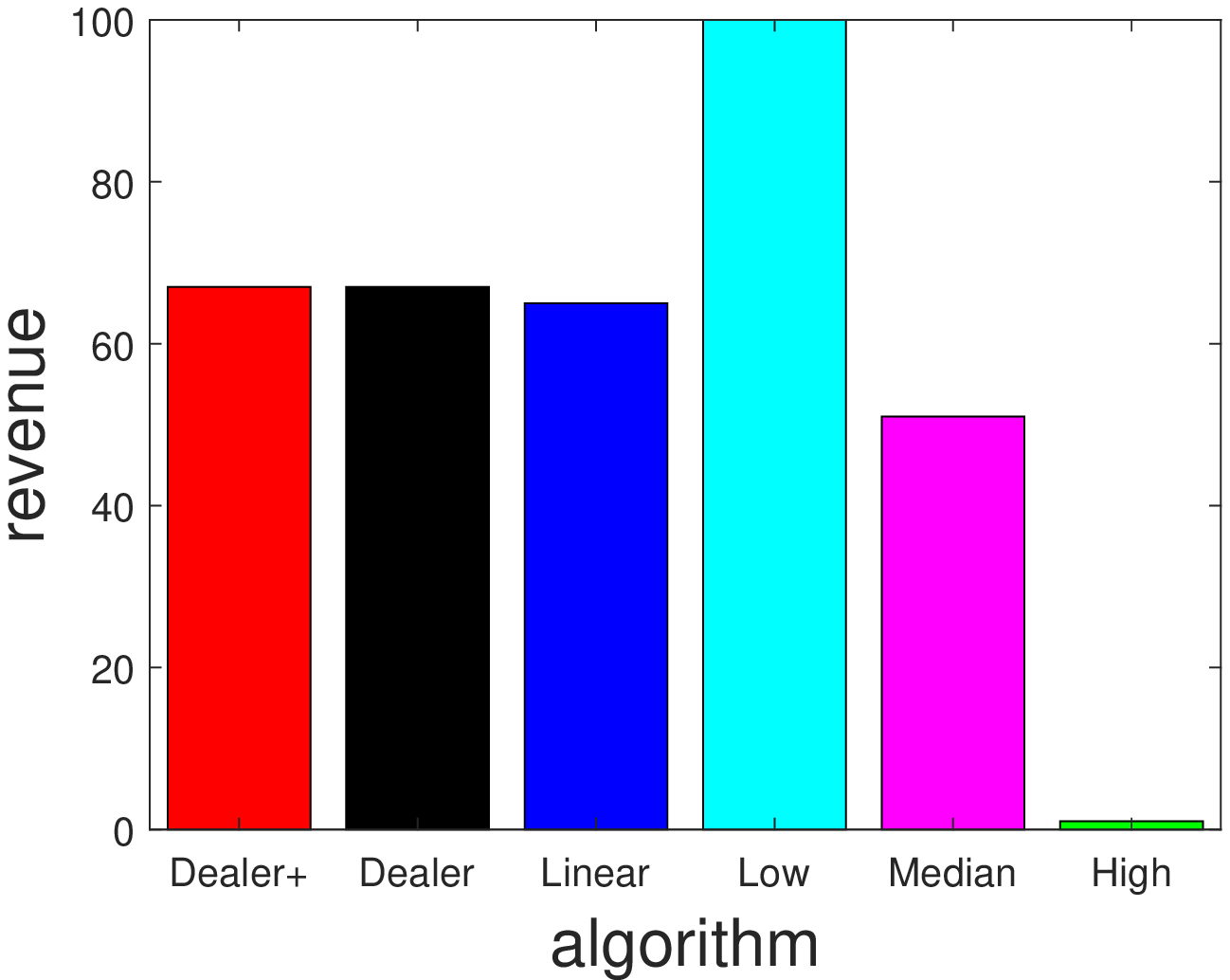}
\end{minipage}
}
\subfigure[Revenue]{
\begin{minipage}[b]{0.22\textwidth}
\includegraphics[width=1.15\textwidth]{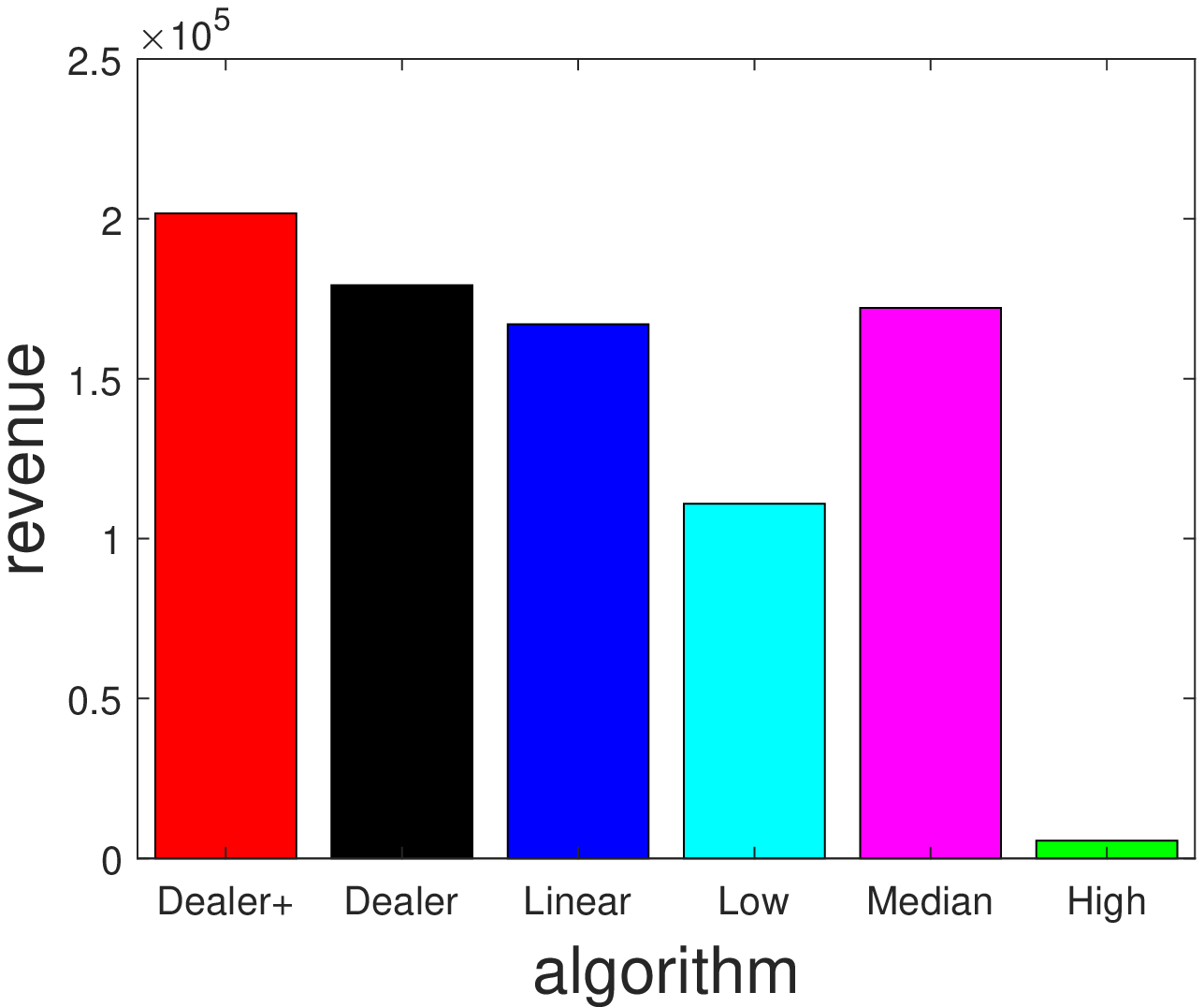}
\end{minipage}
}
 \vspace{-1em}\caption{Gaussian distribution.
 } \label{fig:gaussian}
\end{figure*}

\subsection{Time Cost of Pricing Models}
We experimentally study the efficiency of our proposed algorithms for pricing models. Because Linear, Low, Median, and High algorithms only need to scan through the survey price points once, the time cost is low. For the ease of presentation, we omit the experimental results for those four algorithms. Instead, we compare our proposed Dealer and Dealer+ with the classic exhaustion based approach. We first apply exhaustion-based approach to our complete solution space named Base. However, the time cost of most of the experiments is prohibitively high. Therefore, we apply exhaustion-based approach to the survey price space named BaseAppr.

\begin{figure}[htb]
 \centering
 \includegraphics[width=0.5\textwidth]{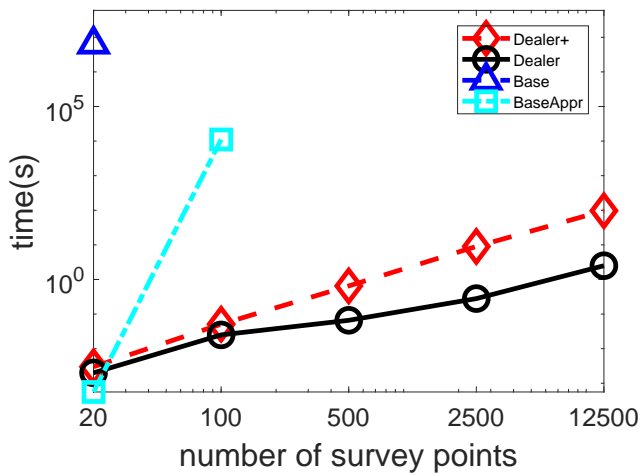}
 \vspace{-1em}
 \caption{Time cost.}
 \label{fig:timecost}
\end{figure}

Figure \ref{fig:timecost} shows the time cost of Dealer, Dealer+, Base, and BaseAppr on a various number of survey price points. Both Dealer+ and Dealer linearly increase with the increase of the number of survey price points, which verifies the efficiency of our proposed dynamic programming algorithm. In practical applications, $12500$ survey points are enough for most of the surveys, the optimal Dealer+ only requires dozens of seconds on a PC. If the time cost is very sensitive, we can employ Dealer which searches from the original survey price space with a slight tradeoff in optimal revenue. The time cost of both Base and BaseAppr is prohibitively high due to the high volume price combinations for different models.

%---------------------------------------------------------------------------------------------------------------------------------------

\section{Conclusion and Future Work}\label{sec:Conclusion}
In this paper, we proposed the first end-to-end data marketplace with model-based pricing framework towards answering the question: how can the broker assign value to the data owners based on their contribution to the models to incentivize more data contributions, and determine optimal prices for a series of models for various model buyers to maximize the revenue with arbitrage-free guarantee. For the former, we introduced a Shapley value-based mechanism to quantify each data owner's value towards all the models trained out of the contributed data and the data owners have the abilities to control their data usage. For the latter, we designed a pricing mechanism based on the models' privacy parameter to maximize the revenue. We proposed Gen-\emph{Dealer} to model the end-to-end data marketplace with model-based pricing and illustrate a concrete realization of differentially private data marketplace with model-based pricing DP-\emph{Dealer} which provably satisfies the desired formal properties. Extensive experiments verified that DP-\emph{Dealer} is efficient.

There are several exciting directions for future work. First, multiple brokers can co-exist in practical applications, which form a competitive relationship to maximize the revenue for themselves. Second, multiple risk factors forming a risk vector can be considered, which enables the market to take care of different types of demands of the data owners. Third, personalized model manufacturing can be considered, which tailors the model training to each model buyer to best suit their budget and model usage scenario.

%---------------------------------------------------------------------------------------------------------------------------------------

%\section{Acknowledgement}
%This research is supported by the National Science Foundation under grant CNS-1117763.
%-----------------------------------------------------------------------------------------------------------------------------------------
% \newpage
\bibliographystyle{abbrv}%\scriptsize
\bibliography{CDP}

\begin{thebibliography}{10}

\bibitem{dawex}
Dawex, https://www.dawex.com/en/.

\bibitem{googlequery}
Google bigquery, https://cloud.google.com/bigquery/.

\bibitem{Twitter}
https://support.gnip.com/apis/.

\bibitem{Bloomberg}
https://www.bloomberg.com/professional/product/market-data/.

\bibitem{IOTA}
Iota, https://data.iota.org/.

\bibitem{agarwal2019marketplace}
A.~Agarwal, M.~Dahleh, and T.~Sarkar.
\newblock A marketplace for data: An algorithmic solution.
\newblock In {\em Proceedings of the 2019 ACM Conference on Economics and
  Computation}, pages 701--726. ACM, 2019.

\bibitem{DBLP:journals/teco/AlaeiMS14}
S.~Alaei, A.~Malekian, and A.~Srinivasan.
\newblock On random sampling auctions for digital goods.
\newblock {\em {ACM} Trans. Economics and Comput.}, 2(3):11:1--11:19, 2014.

\bibitem{DBLP:conf/icml/AnconaOG19}
M.~Ancona, C.~{\"{O}}ztireli, and M.~H. Gross.
\newblock Explaining deep neural networks with a polynomial time algorithm for
  shapley value approximation.
\newblock In {\em Proceedings of the 36th International Conference on Machine
  Learning, {ICML} 2019, 9-15 June 2019, Long Beach, California, {USA}}, pages
  272--281, 2019.

\bibitem{bassily2019private}
R.~Bassily, V.~Feldman, K.~Talwar, and A.~G. Thakurta.
\newblock Private stochastic convex optimization with optimal rates.
\newblock In {\em Advances in Neural Information Processing Systems}, pages
  11279--11288, 2019.

\bibitem{DBLP:journals/cor/CastroGT09}
J.~Castro, D.~G{\'{o}}mez, and J.~Tejada.
\newblock Polynomial calculation of the shapley value based on sampling.
\newblock {\em Computers {\&} {OR}}, 36(5):1726--1730, 2009.

\bibitem{DBLP:journals/pr/CawleyT03LOO}
G.~C. Cawley and N.~L.~C. Talbot.
\newblock Efficient leave-one-out cross-validation of kernel fisher
  discriminant classifiers.
\newblock {\em Pattern Recognition}, 36(11):2585--2592, 2003.

\bibitem{DBLP:journals/pvldb/ChawlaDKT19}
S.~Chawla, S.~Deep, P.~Koutris, and Y.~Teng.
\newblock Revenue maximization for query pricing.
\newblock {\em {PVLDB}}, 13(1):1--14, 2019.

\bibitem{DBLP:conf/sigmod/ChenK019}
L.~Chen, P.~Koutris, and A.~Kumar.
\newblock Towards model-based pricing for machine learning in a data
  marketplace.
\newblock In {\em Proceedings of the 2019 International Conference on
  Management of Data, {SIGMOD} Conference 2019, Amsterdam, The Netherlands,
  June 30 - July 5, 2019.}, pages 1535--1552, 2019.

\bibitem{Dua:2019}
D.~Dua and C.~Graff.
\newblock {UCI} machine learning repository, 2017.

\bibitem{DBLP:conf/focs/DuchiJW13}
J.~C. Duchi, M.~I. Jordan, and M.~J. Wainwright.
\newblock Local privacy and statistical minimax rates.
\newblock In {\em 54th Annual {IEEE} Symposium on Foundations of Computer
  Science, {FOCS} 2013, 26-29 October, 2013, Berkeley, CA, {USA}}, pages
  429--438, 2013.

\bibitem{dwork2006calibrating}
C.~Dwork, F.~McSherry, K.~Nissim, and A.~Smith.
\newblock Calibrating noise to sensitivity in private data analysis.
\newblock In {\em Theory of cryptography conference}, pages 265--284. Springer,
  2006.

\bibitem{dwork2014algorithmic}
C.~Dwork, A.~Roth, et~al.
\newblock The algorithmic foundations of differential privacy.
\newblock {\em Foundations and Trends{\textregistered} in Theoretical Computer
  Science}, 9(3--4):211--407, 2014.

\bibitem{DBLP:journals/ai/FatimaWJ08}
S.~S. Fatima, M.~J. Wooldridge, and N.~R. Jennings.
\newblock A linear approximation method for the shapley value.
\newblock {\em Artif. Intell.}, 172(14):1673--1699, 2008.

\bibitem{DBLP:conf/icml/GhorbaniZ19}
A.~Ghorbani and J.~Y. Zou.
\newblock Data shapley: Equitable valuation of data for machine learning.
\newblock In {\em Proceedings of the 36th International Conference on Machine
  Learning, {ICML} 2019, 9-15 June 2019, Long Beach, California, {USA}}, pages
  2242--2251, 2019.

\bibitem{DBLP:conf/sigecom/GhoshR11}
A.~Ghosh and A.~Roth.
\newblock Selling privacy at auction.
\newblock In {\em Proceedings 12th {ACM} Conference on Electronic Commerce
  (EC-2011), San Jose, CA, USA, June 5-9, 2011}, pages 199--208, 2011.

\bibitem{DBLP:conf/soda/GuruswamiHKKKM05}
V.~Guruswami, J.~D. Hartline, A.~R. Karlin, D.~Kempe, C.~Kenyon, and
  F.~McSherry.
\newblock On profit-maximizing envy-free pricing.
\newblock In {\em Proceedings of the Sixteenth Annual {ACM-SIAM} Symposium on
  Discrete Algorithms, {SODA} 2005, Vancouver, British Columbia, Canada,
  January 23-25, 2005}, pages 1164--1173, 2005.

\bibitem{iyengar2019towards}
R.~Iyengar, J.~P. Near, D.~Song, O.~Thakkar, A.~Thakurta, and L.~Wang.
\newblock Towards practical differentially private convex optimization.
\newblock In {\em IEEE S and P}.

\bibitem{DBLP:conf/uss/Jayaraman019}
B.~Jayaraman and D.~Evans.
\newblock Evaluating differentially private machine learning in practice.
\newblock In {\em 28th {USENIX} Security Symposium, {USENIX} Security 2019,
  Santa Clara, CA, USA, August 14-16, 2019}, pages 1895--1912, 2019.

\bibitem{jia2019efficient}
R.~Jia, D.~Dao, B.~Wang, F.~A. Hubis, N.~M. Gurel, B.~Li, C.~Zhang, C.~Spanos,
  and D.~Song.
\newblock Efficient task-specific data valuation for nearest neighbor
  algorithms.
\newblock {\em Proceedings of the VLDB Endowment}, 12(11):1610--1623, 2019.

\bibitem{koutris2012query}
P.~Koutris, P.~Upadhyaya, M.~Balazinska, B.~Howe, and D.~Suciu.
\newblock Query-based data pricing.
\newblock In {\em Proceedings of the 31st ACM SIGMOD-SIGACT-SIGAI symposium on
  Principles of Database Systems}, pages 167--178. ACM, 2012.

\bibitem{koutris2013toward}
P.~Koutris, P.~Upadhyaya, M.~Balazinska, B.~Howe, and D.~Suciu.
\newblock Toward practical query pricing with querymarket.
\newblock In {\em proceedings of the 2013 ACM SIGMOD international conference
  on management of data}, pages 613--624. ACM, 2013.

\bibitem{DBLP:journals/jacm/KoutrisUBHS15}
P.~Koutris, P.~Upadhyaya, M.~Balazinska, B.~Howe, and D.~Suciu.
\newblock Query-based data pricing.
\newblock {\em J. {ACM}}, 62(5):43:1--43:44, 2015.

\bibitem{DBLP:conf/icdt/LiLMS13}
C.~Li, D.~Y. Li, G.~Miklau, and D.~Suciu.
\newblock A theory of pricing private data.
\newblock In {\em Joint 2013 {EDBT/ICDT} Conferences, {ICDT} '13 Proceedings,
  Genoa, Italy, March 18-22, 2013}, pages 33--44, 2013.

\bibitem{DBLP:journals/tods/LiLMS14}
C.~Li, D.~Y. Li, G.~Miklau, and D.~Suciu.
\newblock A theory of pricing private data.
\newblock {\em {ACM} Trans. Database Syst.}, 39(4):34:1--34:28, 2014.

\bibitem{DBLP:journals/cacm/LiLMS17}
C.~Li, D.~Y. Li, G.~Miklau, and D.~Suciu.
\newblock A theory of pricing private data.
\newblock {\em Commun. {ACM}}, 60(12):79--86, 2017.

\bibitem{DBLP:journals/pvldb/LinK14}
B.~Lin and D.~Kifer.
\newblock On arbitrage-free pricing for general data queries.
\newblock {\em {PVLDB}}, 7(9):757--768, 2014.

\bibitem{shapiro1998versioning}
C.~Shapiro and H.~Varian.
\newblock Versioning: The smart way to sell information.
\newblock {\em Harvard Business Review}, 76(6):107--115, 1998.

\bibitem{shapley1953value}
L.~S. Shapley.
\newblock A value for n-person games.
\newblock {\em Contributions to the Theory of Games}, 2(28):307--317, 1953.

\end{thebibliography}

\end{document}